\definecolor{cblue}{rgb}{0.16, 0.32, 0.75}
\definecolor{cred}{rgb}{0.7, 0.11, 0.11}
\definecolor{orcidlogocol}{HTML}{A6CE39}
\tikzset{
	orcidlogo/.pic={
		\fill[orcidlogocol] svg{M256,128c0,70.7-57.3,128-128,128C57.3,256,0,198.7,0,128C0,57.3,57.3,0,128,0C198.7,0,256,57.3,256,128z};
		\fill[white] svg{M86.3,186.2H70.9V79.1h15.4v48.4V186.2z}
		svg{M108.9,79.1h41.6c39.6,0,57,28.3,57,53.6c0,27.5-21.5,53.6-56.8,53.6h-41.8V79.1z M124.3,172.4h24.5c34.9,0,42.9-26.5,42.9-39.7c0-21.5-13.7-39.7-43.7-39.7h-23.7V172.4z}
		svg{M88.7,56.8c0,5.5-4.5,10.1-10.1,10.1c-5.6,0-10.1-4.6-10.1-10.1c0-5.6,4.5-10.1,10.1-10.1C84.2,46.7,88.7,51.3,88.7,56.8z};
	}
}
\newcommand{\orcidlink}[1]{\href{https://orcid.org/#1}{\mbox{\scalerel*{
				\begin{tikzpicture}[yscale=-1,transform shape]
					\pic{orcidlogo};
				\end{tikzpicture}
			}{X}}}}
\DeclareRobustCommand\sampleline[1]{%
  \tikz\draw[#1] (0,0) (0,\the\dimexpr\fontdimen22\textfont2\relax)
  -- (2em,\the\dimexpr\fontdimen22\textfont2\relax);%
}
\definecolor{plotblue}{RGB}{0, 114, 178}
\definecolor{plotorange}{RGB}{255, 127, 14}
\definecolor{plotgreen}{RGB}{44, 160, 44}
\newtheorem{theorem}{Theorem}[section]
\newtheorem{corollary}[theorem]{Corollary}
\newtheorem{lemma}[theorem]{Lemma}
\newtheorem{proposition}[theorem]{Proposition}
\theoremstyle{definition}
\newtheorem{assumption}[theorem]{Assumption}
\theoremstyle{remark}
\newtheorem{remark}[theorem]{Remark}
\newcommand{\rme}{\mathrm{e}}
\newcommand{\rmi}{\mathrm{i}}
\newcommand{\rmd}{\mathrm{d}}
\newcommand{\rms}{\mathrm{S}}
\newcommand{\rmb}{\mathrm{B}}
\newcommand{\ad}{\mathbf{ad}}
\newcommand{\Ad}{\mathbf{Ad}}
\newcommand{\LL}{\mathcal{L}}
\newcommand{\HH}{\mathcal{H}}
\DeclareMathOperator{\Dom}{Dom}
\newcommand{\BB}{\mathcal{B}}
\renewcommand{\i}{\mathrm{i}}
\newcommand{\tr}{\operatorname{tr}}
\newcommand{\e}{\mathrm{e}}
\renewcommand{\a}[1]{a\!\left(#1\right)}
\newcommand{\adag}[1]{a^\dag\!\left(#1\right)}
\DeclareRobustCommand{\loongrightarrow}{%
\DOTSB\relbar\joinrel\relbar\joinrel\rightarrow
}
\pgfplotsset{compat=1.15}
\begin{document}

\begin{center}{\Large \textbf{
Efficiency of Dynamical Decoupling for (Almost) Any Spin--Boson Model
}}\end{center}

\begin{center}
\orcidlink{0000-0002-4152-9854} A.\ Hahn\textsuperscript{1*},
\orcidlink{0000-0003-4063-1264} D.\ Burgarth\textsuperscript{1,2},
\orcidlink{0000-0002-0792-8122} D.\ Lonigro\textsuperscript{2}
\end{center}

\begin{center}
{\bf 1} School of Mathematical and Physical Sciences, Macquarie University, 2109 NSW, Australia
\\
{\bf 2} Department Physik, Friedrich-Alexander-Universit\"at Erlangen-N\"urnberg, Staudtstra\ss e 7, 91058 Erlangen, Germany
\\
* alexander.hahn@hdr.mq.edu.au
\end{center}

\begin{center}
\today
\end{center}
	
	\section*{Abstract}
{\bf
        Dynamical decoupling is a technique aimed at suppressing the interaction between a quantum system and its environment by applying frequent unitary operations on the system alone. In the present paper, we analytically study the dynamical decoupling of a two-level system coupled with a structured bosonic environment initially prepared in a thermal state. We find sufficient conditions under which dynamical decoupling works for such systems, and---most importantly---we find bounds for the convergence speed of the procedure.   
        Our analysis is based on a new Trotter theorem for multiple Hamiltonians and involves a rigorous treatment of the evolution of mixed quantum states via unbounded Hamiltonians. A comparison with numerical experiments shows that our bounds reproduce the correct scaling in various relevant system parameters. Furthermore, our analytical treatment allows for quantifying the decoupling efficiency for boson baths with infinitely many modes, in which case a numerical treatment is unavailable.
}

\vspace{10pt}
\noindent\rule{\textwidth}{1pt}
\tableofcontents\thispagestyle{fancy}
\noindent\rule{\textwidth}{1pt}
\vspace{10pt}

\addtocontents{toc}{\vspace{-1.2em}}\section{Introduction}

Noise and decoherence are the main challenges in the current development of quantum technology~\cite{Preskill2018}. Such phenomena are intrinsic to any quantum system, as they arise from the coupling of the system with a surrounding environment (the lab/bath). One of the most commonly used techniques to deal with noise and decoherence in practice is dynamical decoupling~\cite{Viola1998,Ban1998,Viola1999,Viola1999a,Viola2002,Viola2003,Khodjasteh2005}, an open-loop control strategy that is operated on the hardware level.
More precisely, dynamical decoupling consists in averaging out the coupling between the system and the environment through strong and fast rotations on the system alone.
This approach can significantly suppress errors in quantum computing~\cite{Uhrig2007,West2010,Yang2010,Souza2011} and quantum sensing~\cite{Lang2015,Ishikawa2018}, which is crucial for achieving quantum utility.

The two main advantages of dynamical decoupling can be summarized as follows: (i) it suppresses errors before they even occur, and can thus be combined with quantum error correction and mitigation~\cite{PazSilva2013}; (ii) it works for \emph{all} finite-dimensional quantum systems and, under rather mild technical assumptions, even for infinite-dimensional ones~\cite{Arenz2018}\@.
The only practical requirement is to pulse faster than the system--bath interaction timescale.
In fact, the repetition rate of the pulses determines the efficiency of dynamical decoupling. Consequently, it is crucial to understand how fast the driving has to be in order to achieve the desired error suppression rate---that is, to find \textit{quantitative} bounds for it.
Usually, this is done perturbatively in the language of filter functions and with the help of numerical simulations, see e.g.\ Refs.~\cite{GheorghiuSvirschevski2002,Shiokawa2004,Biercuk2011,Farfurnik2015,PazSilva2016,Ezzell2023}\@.
However, analytical results are only available for simple toy models~\cite{Viola1998} or under the strong assumption of finite-energy baths~\cite{Uhrig2010, Xia2011}, which leaves out a wide range of actual physical systems.

Besides, an analytical treatment offers crucial advantages in the context of dynamical decoupling.
On the one hand, analytical efficiency estimates establish a performance \emph{guarantee} which is independent of the validity of perturbative approximations or numerical instabilities; on the other hand, analytical bounds give insights into the scaling of the decoupling fidelity with respect to various system parameters.
This is particularly useful if one wants to determine bottlenecks in an experimental setup: for instance, one could wonder whether a faster decoupling or a lower bath temperature may have a stronger effect in reducing errors.

More fundamentally, an analytical treatment of the decoupling error even becomes a necessity in certain cases. As is well-known in the literature, there are several physical error models for which dynamical decoupling does \emph{not} work, see in particular Refs.~\cite{Gough2017,Burgarth2021} and also e.g.\ Refs.~\cite{Arenz2018,lonigro2022quantum,lonigro2022regression}\@. The reason for the invalidity of dynamical decoupling is exactly the breakdown of the perturbation theory, which is at the ground of the filter function approach to dynamical decoupling. The latter has been developed in Ref.~\cite{Klauder1962} in 1962 and popularized for experimental applications in Ref.~\cite{Cywinski2008}\@. However, as done for instance in Ref.~\cite{Arenz2018}, one could equivalently take a Trotterization perspective to dynamical decoupling in which certain assumptions on the input state are made. For this case, Ref.~\cite{Burgarth2023} recently showed that these assumptions can be significantly relaxed to the price of a slower error scaling (also see Ref.~\cite{becker2024convergence}). This indicates that dynamical decoupling still works for some of these models despite the breakdown of the perturbative filter function approach. Thus, an analytical treatment of the decoupling error via Trotterization could allow for a much wider applicability to physically relevant error models than filter functions.

In this paper, we establish a general framework to analytically study the efficiency of dynamical decoupling for a finite-dimensional quantum system (hereafter, for the ease of exposition, a qubit) coupled to the quintessential example of an infinite-dimensional environment: a bosonic bath. Namely, we shall consider models of field--matter interaction described by operators in the form
\begin{equation}\label{eq:spin--boson0}
    H=H_{\rm S}+\sum_{k}\omega_ka_k^\dag a_k+\sum_{k}\left(f_k^*B^\dag a_k+f_kBa^\dag_k\right),
\end{equation}
with $H_{\rm S}$ being the Hamiltonian of the qubit alone, $(\omega_k)_k$ being the energy modes of the bath, $(f_k)_k$ being coupling constants, $a_k$, $a_k^\dagger$ being the bosonic annihilation and creation operators of mode $k$, respectively, and $B$ being an operator on the system. Finite or countably infinite modes are allowed, and an initial state in the form $\rho_{\rm S}\otimes\rho_{\rm B}$, with $\rho_{\rm B}$ being a thermal (Gibbs) state at any temperature, shall be assumed. We remark that the Hamiltonian in Eq.~\eqref{eq:spin--boson0} can also describe systems of qu$d$its as long as either only a single level couples to the bath or all levels couple to the bath simultaneously via a single coupling operator $B$. An example of the first situation would be a Dicke model, while the latter could be certain Tavis--Cummings models. A generalization to arbitrary qu$d$it and multi-atom models is immediate by replacing $B$ with operators that individually couple each system level to the bath. While this would make the notation and calculations more cumbersome, it does not provide new physical insights. Therefore, we will stick to the most important and paradigmatic case of a qubit. 

Operators in the form of Eq.~\eqref{eq:spin--boson0} belong to the class of generalized spin--boson (GSB) models, whose mathematical properties have attracted interest in recent times~\cite{arai1997existence,arai2000essential,arai2000ground,falconi2015self,takaesu2010generalized,teranishi2015self,lonigro2022generalized,lonigro2023self,lill2023self}. At the physical level, this is a standard class of error models that naturally implements thermal noise, and includes as particular examples some of the most common toy models of quantum optics like the Jaynes--Cummings and Rabi models. See, e.g., Ref.~\cite[Chapter~3]{Weiss2012} or Ref.~\cite{Leggett1987}\@. Such models find applications in a wide range of topics, ranging from quantum optics, quantum information and simulation, solid state and chemical physics. We refer to Ref.~\cite{larson2021jaynes} for an extensive review on the subject.

Our results can be summarized as follows. For all models in this class---modulo some mild technical assumptions that are needed in the case of infinitely many boson modes---dynamical decoupling works. Furthermore, the decoupling error can be quantitatively bounded in a way that \textit{entirely} depends on the derivatives of the grand canonical partition function $Z(\beta,\mu)$ of the boson bath, and the operators $H_{\rm S}$ and $B$. All the specific parameters of the boson field only enter our bound via the value of $Z(\beta,\mu)$. This is the content of Theorem~\ref{thm:spin--boson}, with a refined bound being presented in the appendix (see Theorem~\ref{thm:main_result_bound}). In this sense, this paper offers a general recipe to bound the error on dynamical decoupling---once the grand canonical partition function is given, the bound is obtained for free. Additionally, our bounds reveal how the decoupling error can be controlled by tuning the system and experimental parameters. 

Furthermore, as we will show, our results crucially rely on a new Trotter theorem for multiple Hamiltonians (Theorem~\ref{thm:Trotter}) which generalizes a result first obtained in Ref.~\cite{Burgarth2023}. This result is of interest by itself since dynamical decoupling is usually achieved by Trotterizing between multiple Hamiltonians. Our method is generic and applicable to a large variety of typical error models.

\subsection{Novelty and significance of this work}

This paper addresses two major limitations of the existing frameworks for studying dynamical decoupling. First: the Trotterization approach~\cite{Arenz2018, Burgarth2021} warrants analytical convergence for unbounded baths, such as bosonic environments, and can therefore be regarded as the most rigorous mathematical framework for dynamical decoupling. However, this method only establishes the \emph{existence} of convergence and does not provide insights into the \emph{rate} of error suppression. In general, the convergence speed of Trotterization can be arbitrarily slow for unbounded Hamiltonians~\cite{Burgarth2023, becker2024convergence}, making it impractical for applications. To achieve meaningful results, it is crucial to identify computable conditions under which the optimal error suppression scaling of $1/N$ (with $N$ being the number of decoupling cycles) can be guaranteed.

Second: the filter function approach~\cite{GheorghiuSvirschevski2002, Shiokawa2004, Biercuk2011, Farfurnik2015, Ezzell2023, Klauder1962, Cywinski2008} is a widely used tool to assess the practical performance of dynamical decoupling. It provides a heuristic way to numerically estimate the decoupling fidelity under realistic experimental conditions. However, this approach is inherently perturbative and lacks rigorous convergence guarantees. This raises the risk of overestimating the effectiveness of dynamical decoupling, particularly in regimes where the underlying perturbation theory breaks down. Additionally, while the filter function method is a valuable heuristic method, it neither yields explicit analytical error bounds nor offers a clear connection to the scaling of the decoupling fidelity. On the other hand, filter functions are a useful tool to compare different or design optimal decoupling strategies in the perturbative regime~\cite{Biercuk2011}\@.

This work is a step toward bridging the gap between these two approaches. By developing a framework that combines the analytical rigour of the Trotterization approach with the computational simplicity of the filter function formalism, we achieve the best of both worlds. Specifically, we provide explicit, easy-to-compute error bounds for dynamical decoupling in the Trotterization picture. These bounds offer not only rigorous convergence guarantees but also practical insights into the decoupling fidelity. The resulting framework unifies key aspects of the Trotterization and filter function approaches, bringing the theoretical and experimental perspectives closer together. Furthermore, they might help to identify the perturbative regime in which the filter function approach is favourable.

Beyond its immediate practical implications for dynamical decoupling, our results also represent a conceptual step towards understanding the interplay between analytical and numerical methods in quantum control. By introducing error bounds that depend on the grand canonical partition function and system-specific parameters, we demonstrate how to integrate thermodynamic considerations into the mathematical analysis of quantum error suppression. This provides a systematic way to evaluate the efficiency of dynamical decoupling across a broad class of physically relevant models.

\subsection{Structure of the paper}

The paper is structured as follows. In Section~\ref{sec:motivating_example}, as an introductory example, we discuss dynamical decoupling for a particular instance of the models considered in the paper: a spin coupled with a monochromatic boson field through a purely longitudinal interaction (pure dephasing). This model is exactly solvable~\cite{Viola1998}, thus giving us the opportunity to introduce our results while keeping to a minimum the mathematical difficulties encountered in the general case. We then proceed in Section~\ref{sec:spin-boson} by defining the wider class of models that will be considered in the remainder of the paper. In Section~\ref{sec:math} we discuss the technical machinery required for our main result: after introducing the description of the evolution of mixed quantum states for systems with unbounded energy, we provide a new error bound for the Trotter product formula in the presence of multiple Hamiltonians, both for pure states (Theorem~\ref{thm:Trotter}) and for mixed states (Corollary~\ref{cor:Trotter_Liouville}). In Section~\ref{sec:dd_spin-boson}, we provide the main result of the paper (Theorem~\ref{thm:spin--boson}), on the efficiency of dynamical decoupling for all models in the class considered in the paper. We then discuss some other examples in Section~\ref{sec:examples}, and gather some final considerations and outlooks in Section~\ref{sec:end}. All calculations involved in the proofs of our bounds are contained in the appendix.

\addtocontents{toc}{\vspace{-.6em}}\section{A motivating example}
\label{sec:motivating_example}

In this section, we shall take a look at a simple motivating example for dynamical decoupling---namely, a spin coupled to a single-mode bosonic bath which induces pure dephasing.
A multi-mode generalization of this model has been studied in detail in Ref.~\cite{Viola1998} and historically led to the advent of filter functions for dynamical decoupling~\cite{Cywinski2008,PazSilva2016,Biercuk2011}\@.
Since this model is exactly solvable~\cite{Viola1998}, here we will not be concerned about the mathematical subtleties that arise from its unbounded nature.
Instead, the model will serve as a motivation and illustrative example.
We will study more general models later after introducing the mathematical preliminaries necessary for a rigorous treatment.

Consider the Hamiltonian ($\hbar=1$ and implicit tensor products)
\begin{align}
	H&= H_\rms + H_\rmb + H_{\rms\rmb}\nonumber\\
	&=\frac{\omega_\rms}{2} \sigma_z + \omega_\rmb a^\dagger a + f\sigma_z(a + a^\dagger),\label{eq:Hamiltonian_dephasing}
\end{align}
where $H_\rms$ is the free system Hamiltonian, $H_\rmb$ is the free Hamiltonian of a monochromatic boson field, and $H_{\rms\rmb}$ is the interaction Hamiltonian.
Furthermore, $\sigma_z$ is the third Pauli matrix and $a,a^\dagger$ are the bosonic annihilation and creation operators, respectively.
The system resonance frequency is $\omega_\rms\in\mathbb{R}$ and the bath resonance frequency is described by $\omega_\rmb\in\mathbb{R}$.
The coupling strength between the system and the bath is given by $f\in\mathbb{R}$.

The goal of dynamical decoupling is to effectively remove the interaction Hamiltonian $H_{\rms\rmb}$, which causes the system to dephase, by acting on the system alone. 
To this end, we can perform a Carr--Purcell dynamical decoupling sequence~\cite{Carr1954}\@, frequently interspersing the dynamics under $H$ by instantaneous Pauli $\sigma_x$ rotations on the system.
To provide a mathematical description of this situation, we need to move to the density operator picture (Liouville space) and assume that the bath is initially in a thermal (Gibbs) state,
\begin{equation}
	\rho_\rmb = \frac{\rme^{-\beta\omega_\rmb a^\dagger a}}{Z(\beta)},
\end{equation}
where $Z(\beta)=\tr(\rme^{-\beta\omega_\rmb a^\dagger a})$ is the grand canonical partition function, and for simplicity we fix the chemical potential $\mu$ to zero.
This choice of initial state is physically motivated by the fact that, in the general scenario, one does not have any information about the bath, and the Gibbs state maximizes the entropy.
The global initial state of system and bath will be a product state of the form $\rho=\rho_\rms\otimes\rho_\rmb$, where $\rho_\rms$ is an arbitrary system input state.

In this situation, the evolution of $\rho$ induced by the total Hamiltonian $H$ is described by a unitary evolution group defined by $\Ad_{\rme^{-\rmi tH}}\rho\equiv \rme^{-\rmi tH}\rho\rme^{+\rmi tH}$. Leaving a more precise mathematical treatment to Section~\ref{sec:Liouville}, it is known~\cite{gyamfi2020fundamentals,lonigro2024liouville} that this group is generated by the Liouville operator (or Liouvillian) corresponding to $H$, which we denote by $\ad_H\equiv[H,\cdot]$~\footnote{Precisely, as briefly discussed later in Section~\ref{sec:Liouville} (also see Ref.~\cite{lonigro2024liouville} and references therein), $\ad_H$ is the unique self-adjoint extension of $[H,\cdot]$ on the space of Hilbert--Schmidt operators.}. That is,
\begin{equation}
    \Ad_{\rme^{-\rmi tH}}=\rme^{-\rmi t\,\ad_H},
\end{equation}
and we will freely switch between these two notations from now on.

In the Liouville space, a Pauli $\sigma_x$ rotation on the system is then performed by applying the map $\mathbf{X}\equiv (\sigma_x\otimes I_\rmb)\cdot (\sigma_x\otimes I_\rmb)$, where $I_\rmb$ is the identity on the bath (which we will omit in the following discussion).
In this notation, the Carr--Purcell dynamical decoupling is described by the following evolution:
\begin{equation}
	\Ad_{U_N(t)}\rho =\left(\mathbf{X}\,\Ad_{\rme^{-\rmi \frac{t}{2N} H}}\,\mathbf{X}\,\Ad_{\rme^{-\rmi \frac{t}{2N} H}}\right)^N\rho.
\end{equation}
By a direct calculation, it can be shown that $\Ad_{U_N(t)}\rho$ is equivalent to
\begin{equation}
	\Ad_{U_N(t)}\rho=\left(\rme^{-\rmi \frac{t}{2N} \ad_{\sigma_xH\sigma_x}}\rme^{-\rmi \frac{t}{2N} \ad_H}\right)^N\rho,\label{eq:dephasing_Trotter}
\end{equation}
also see Eq.~\eqref{eq:ad_vHv}\@.
Eq.~\eqref{eq:dephasing_Trotter} is nothing but a Trotter product formula in the Liouville space; therefore, in the limit of large $N$, the evolution $\Ad_{U_N(t)}\rho$ can effectively be described by
\begin{equation}
	\Ad_{U_N(t)}\rho \overset{N\rightarrow\infty}{\loongrightarrow} \Ad_{T(t)}\rho,\label{eq:Trotter_dephasing}
\end{equation}
where $\Ad_{T(t)}\rho = \Ad_{\rme^{-\rmi t (\sigma_xH\sigma_x+H)/2}}\rho$.
Since $\sigma_x\sigma_z\sigma_x=-\sigma_z$, we have $(\sigma_xH\sigma_x+H)/2 = \omega_\rmb a^\dagger a$ and thus
\begin{equation}
	\Ad_{T(t)}\rho = \rho_\rms \otimes \rme^{-\rmi t \omega_\rmb a^\dagger a}\rho_\rmb \rme^{+\rmi t \omega_\rmb a^\dagger a}.
\end{equation}
Hence, in the limit $N\to\infty$, the evolution is indeed decoupled: the initial system state $\rho_\rms$ is retained, and only the bath evolves. 
In practice, already for sufficiently large $N$, all interaction terms in the Hamiltonian do not affect the evolution, whence the spin and the field are effectively decoupled.

For finite $N$, of course, the decoupling is not exact.
Since $H_\rmb$ and $H_{\rms\rmb}$ do not commute, there is a non-zero Trotter error, which determines the decoupling fidelity. Determining such an error---and, in particular, determining how it scales with the number $N$ of decoupling steps---is, for all practical applications, of primary importance.

To this purpose, we first notice that the targeted decoupled evolution group $\Ad_{T(t)}$ is generated by the operator $\mathbf{I}_\rms \otimes \ad_{\omega_\rmb a^\dagger a}$, with $\mathbf{I}_\rms$ being the identity map on the system, i.e.\ $\mathbf{I}_\rms\rho_\rms=\rho_\rms$.
Thus, for any system input state $\rho_\rms$, we have
\begin{align}
	\mathbf{I}_\rms \otimes \ad_{\omega_\rmb a^\dagger a}(\rho_\rms\otimes\rho_\rmb)&=\rho_\rms \otimes \omega_\rmb [a^\dagger a,\rho_\rmb]\nonumber\\
	&=0
\end{align}
since the Gibbs state $\rho_\rmb$ commutes with $a^\dagger a$.
Therefore, $\rho=\rho_\rms\otimes\rho_\rmb$ is an eigenstate of $\mathbf{I}_\rms \otimes \ad_{\omega_\rmb a^\dagger a}$ with eigenvalue zero and in fact, $\Ad_{T(t)}\rho=\rho$.

For pure states, the Trotter error for eigenstates of the target Hamiltonian has been studied in Ref.~\cite{Burgarth2023}\@:
For two Hamiltonians $H_1,H_2$ and a state $\ket{\varphi}$ with $(H_1+H_2)\ket{\varphi}=0$, we have
\begin{align}\label{eq:dephasing_dd_error}
	\left\Vert \left(\rme^{-\rmi\frac{t}{N}H_2}\rme^{-\rmi\frac{t}{N}H_1}\right)^N\ket{\varphi}-\ket{\varphi}\right\Vert\leq \frac{t^2}{2N}\big(\Vert H_1^2\ket{\varphi}\Vert +\Vert H_2^2\ket{\varphi}\Vert\big),
\end{align}
where $\Vert\ket{\varphi}\Vert=\sqrt{\braket{\varphi,\varphi}}$ is the standard Euclidean norm of vectors. Here, we are actually dealing with mixed states. However, a crucial point of our analysis, explained in detail in Ref.~\cite{lonigro2024liouville}, is the following: the estimate~\eqref{eq:dephasing_dd_error} for the Trotter product formula on pure states, along with many similar estimates, can be immediately generalized to mixed states as long as one replaces the Hamiltonian $H$ with the corresponding Liouvillian $\ad_H$, the evolution group $\e^{-\i tH}$ with the corresponding group $\Ad_{\rme^{-\rmi t H}}=\e^{-\i t\,\ad_H}$, and the Euclidean norm  with the Hilbert--Schmidt norm $\Vert \rho\Vert_\mathrm{HS}=\sqrt{\tr(\rho^\dagger \rho)}$. As such, we have
\begin{align}
	\left\Vert \left(\rme^{-\rmi\frac{t}{N}\ad_{H_2}}\rme^{-\rmi\frac{t}{N}\ad_{H_1}}\right)^N\rho-\rho\right\Vert_\mathrm{HS}\leq \frac{t^2}{2N}\big(\Vert \ad_{H_1}^2\rho\Vert_\mathrm{HS} +\Vert \ad_{H_2}^2\rho\Vert_\mathrm{HS}\big),\label{eq:Trotter_bound_2_terms_ad}
\end{align}
where $\ad_{H_1+H_2}\rho=0$.
We can use Eq.~\eqref{eq:Trotter_bound_2_terms_ad} to bound the efficiency of Trotterization in Eq.~\eqref{eq:Trotter_dephasing}, where we Trotterize between $\ad_{H_1}=\frac{1}{2}\ad_H$ and $\ad_{H_2}=\frac{1}{2}\ad_{\sigma_xH\sigma_x}$.

To make our analysis independent of the chosen system input state, we fix $\rho_\rms=\ket{+}\bra{+}$ with $\ket{+}=\frac{1}{\sqrt{2}}(\ket{0}+\ket{1})$.
Since this state is the one with the largest decoupling error~\cite{Viola1998}, an upper bound for this state will serve as an upper bound for any state.
To simplify our calculation, we first notice that any unitary $U$ gives $\Vert \ad_{UHU^\dagger}^2\rho\Vert_\mathrm{HS}=\Vert \ad_{H}^2(U^\dagger\rho U)\Vert_\mathrm{HS}$.
In our case, $U=\sigma_x\otimes I$ but we also have $\sigma_x\rho_\rms \sigma_x=\rho_\rms$.
Therefore, the decoupling error can be bounded by
\begin{equation}
	\Vert\Ad_{U_N(t)}\rho - \Ad_{T(t)}\rho\Vert_\mathrm{HS} \leq \frac{t^2}{4N} \Vert \ad_H^2\rho\Vert_\mathrm{HS},
\end{equation}
which can be computed explicitly, see Appendix~\ref{subsec:Appendix_examples}\@.
The following bound is obtained:
\begin{align}
	\Vert\Ad_{U_N(t)}\rho - \Ad_{T(t)}\rho\Vert_\mathrm{HS}<\frac{t^2}{8N} \max \left(4,|\omega_\rms| ^2\right)\kappa(\beta,\omega_\rmb,f),\label{eq:DD_error_dephasing}
\end{align}
where
\begin{align}
    \kappa(\beta,\omega_\rmb,f)={}&2\left(\rme^{\beta
    \omega_\rmb }-1\right)^{-1/2} \left(\rme^{\beta  \omega_\rmb }+1\right)^{-3/2}\Big[4 \vert f\vert^2\rme^{4
   \beta  \omega_\rmb } \big( 29 \vert f\vert^2+\omega_\rmb ^2+1\big)\nonumber\\
    &\quad-2 \rme^{2 \beta  \omega_\rmb } \left(2 \vert f\vert^2 \left(\omega_\rmb ^2+1\right)+1\right)+\rme^{4
   \beta  \omega_\rmb }+1\Big]^{1/2}\label{eq:kappa}
\end{align}
In particular, this implies $\Vert\Ad_{U_N(t)}\rho - \Ad_{T(t)}\rho\Vert_\mathrm{HS}=\mathcal{O}(1/N)$ as expected from Trotterization.
Furthermore, our bound reveals the dependency of the decoupling efficiency on the inverse temperature $\beta$.
In fact, a tighter bound than Eq.~\eqref{eq:DD_error_dephasing} can be obtained, see Eq.~\eqref{eq:tight_bound_dephasing} in the appendix.

We compare these findings with a numerical simulation in Fig.~\ref{fig:Dephasing_error}\@.
Fig.~\ref{fig:Dephasing_error}(a) shows the decoupling error as a function of the number of decoupling pulses $N$, while
Fig.~\ref{fig:Dephasing_error}(b) shows the decoupling error as a function of the inverse temperature $\beta$.
In both cases, our bound captures the true behavior of the error. In the zero-temperature limit ($\beta\rightarrow\infty$), our bound reduces to
\begin{align}
	\Vert\Ad_{U_N(t)}\rho - \Ad_{T(t)}\rho\Vert_\mathrm{HS}\leq \frac{t^2}{8N}\Big(4 \big(5+4 \sqrt{2}\big)\vert f\vert^2+2 \big(2+\sqrt{2}\big) \vert f\vert \omega_\rmb+8
   \sqrt{2} \vert f\vert \omega_\rms+\sqrt{2} \omega_\rms^2\Big).
\end{align}

\begin{figure}
	\centering
	\begin{tabular}{rr}
		\multicolumn{1}{l}{\footnotesize(a)}
		&
		\multicolumn{1}{l}{\footnotesize(b)}
		\\
		\begin{tikzpicture}[mark size={0.5mm}, scale=1]
			\pgfplotsset{%
				width=.49\textwidth,
				height=7.6cm,
			}
			\begin{axis}[
				ylabel near ticks,
				xlabel={Number of decoupling steps $N$},
				ylabel={Decoupling error},
				x post scale=0.9,
				y post scale=0.8,
				xmode=log,
				ymode=log,
				legend pos=north east,
				legend cell align={left},
				label style={font=\footnotesize},
				tick label style={font=\footnotesize},
				legend style={font=\footnotesize},
				xmin=1,
				xmax=100,
				]
				\addplot[color=plotorange, only marks] table[x=N, y=error, col sep=comma]{Dephasing_Bound_N.csv};
				\addlegendentry{Our bound};
				\addplot[color=plotblue, only marks] table[x=N, y=error, col sep=comma]{Dephasing_Error_N.csv};
				\addlegendentry{Numerics};
			\end{axis}
		\end{tikzpicture}
		&
		\pgfplotsset{
			every non boxed x axis/.style={} 
		}
			\begin{tikzpicture}[mark size={0.5mm}, scale=1]
				\pgfplotsset{%
					width=.49\textwidth,
					height=8cm,
				}
				\begin{groupplot}[
					group style={
						group name=my fancy plots,
						group size=1 by 2,
						xticklabels at=edge bottom,
						vertical sep=0pt,
						ylabels at=edge left,
					},
					width=0.49\textwidth,
					xmin=2e-15, xmax=5e-1,
					]
					\nextgroupplot[
					ylabel near ticks,
					ymin=4e-2,ymax=1.25e-1,
					ytick={0.06,0.08,0.1,0.12},
					axis x line=top, 
					axis y discontinuity=parallel,
					height=4.5cm,
					label style={font=\footnotesize},
					tick label style={font=\footnotesize},
                    scaled y ticks=false,
                    yticklabel=\pgfkeys{/pgf/number format/.cd,fixed,precision=2,zerofill}\pgfmathprintnumber{\tick},
					legend style={font=\footnotesize},
					legend style={at={(0.97,0.8)}},
					legend cell align={left},
					ylabel={Decoupling error},
					y label style={at={(-0.125,0.5)},anchor=south east},
					]
					\addplot[color=plotorange, ultra thick] table[x=beta, y=error, col sep=comma]{Dephasing_Bound_beta.csv};
					\addlegendentry{Our bound};
					\addplot[color=plotblue, only marks] table[x=beta, y=error, col sep=comma]{Dephasing_Error_beta.csv};
					\addlegendentry{Numerics};
					
					\nextgroupplot[ymin=5e-3,ymax=7e-3,
					ytick={0.0055,0.006,0.0065},
					xtick={0.00003,0.1,0.2,0.3,0.4,0.5},
					axis x line=bottom,
					height=3.5cm,
					label style={font=\footnotesize},
					tick label style={font=\footnotesize},
					legend style={font=\footnotesize},
					xlabel={Inverse temperature $\beta$},
					every y tick scale label/.append style={yshift=-1.0em},
					every y tick scale label/.append style={xshift=0.2em},
					]
					\addplot[color=plotblue, only marks] table[x=beta, y=error, col sep=comma]{Dephasing_Error_beta.csv};
				\end{groupplot}
			\end{tikzpicture}
	\end{tabular}
	\caption{\label{fig:Dephasing_error}Carr-Purcell dynamical decoupling error for a qubit coupled to monochromatic boson field via a dephasing interaction. The Hamiltonian is given in Eq.~\eqref{eq:Hamiltonian_dephasing}, and the dynamical decoupling is performed by repetitive $\pi$--rotations (Pauli $\sigma_x$ pulses). The initial state is $\ket{+}\bra{+}\otimes\rho_\rmb$, where $\rho_\rmb$ is the Gibbs state of the bosonic bath with inverse temperature $\beta$. We fix the total evolution time to $t=1$ and choose $\omega_\rms=1$, $\omega_\rmb=10$ and $f=0.1$. The orange curve shows our analytical bound~\eqref{eq:tight_bound_dephasing} and the blue curve shows a numerical simulation, where we truncated the bosonic field in Fock space at dimension $d=10$.
	(a) Error as a function of the number $N$ of decoupling cycles for fixed $\beta=1$. We see that the decoupling error decays as $1/N$. Our bound captures the asymptotic behaviour of the decoupling error.
	(b) Error as a function of the inverse temperature $\beta$ for fixed $N=10$. We see that the decoupling error first decays with increasing $\beta$ and then increases as $\mathcal{O}(\sqrt{\beta})$. Finally, the error saturates at a constant due to finite $N$. The error decay for small $\beta$ is lightly visible in our bound (which scales as $\mathcal{O}(1/\sqrt{\beta})$ in this regime), and it correctly captures the $\mathcal{O}(\sqrt{\beta})$ scaling and eventual saturation for larger $\beta$.}
\end{figure}
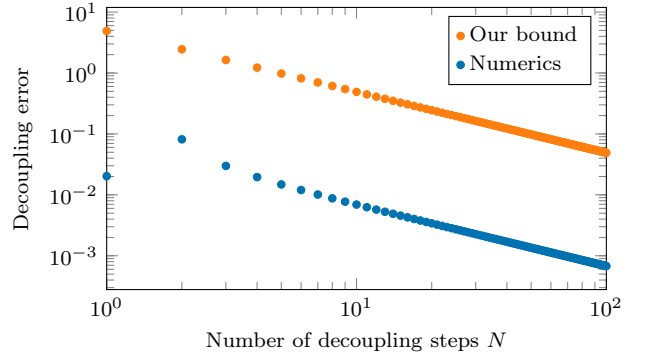

Of course, the model~\eqref{eq:Hamiltonian_dephasing} is rather simplistic, and the treatment presented above relies on the fact that such a model is exactly solvable and can be decoupled with only two distinct decoupling operations---neither being true in the general case. Luckily, we will be able to overcome these mathematical difficulties and extend these results to general spin--boson models. This is precisely the content of the next sections.

\addtocontents{toc}{\vspace{-.6em}}\section{General spin--boson models}\label{sec:spin-boson}

In this section, we introduce the general class of models we consider in this paper and present the technical assumptions the models have to fulfil.
Let us consider a quantum mechanical boson field with at most countably infinitely many energy modes, described in the momentum representation. A single excitation of the field is thus described by a single-particle Hilbert space equal to either $\mathbb{C}^d$, if there are $d$ energy modes, or $\ell^2$, the space of complex sequences $(\psi_k)_{k\in\mathbb{N}}$ such that $\sum_{k\in\mathbb{N}}|\psi_k|^2<\infty$, in the case of infinitely many modes. In order to use a uniform notation, in both cases we shall denote said Hilbert space by $\mathfrak{h}$, and the corresponding momentum set shall be denoted by $K\subseteq\mathbb{N}$.
The boson field is then described by the corresponding Bose--Fock space over $\mathfrak{h}$~\cite{reed1981functional,Basti2023}, 
\begin{equation}
    \mathcal{F}=\bigoplus_{n\in\mathbb{N}}S_n\mathfrak{h}^{\otimes n},
\end{equation}
where $S_n$ is the symmetrization operator. Physically, each element of $\mathcal{F}$ can be thought of as the superposition of completely symmetric states each with a different number $n$ of particles ranging from $0$ to $\infty$. We remark that even for a finite number of boson modes (i.e.\ $\mathfrak{h}=\mathbb{C}^d$), this is always an infinite-dimensional space.

On this space, the free energy of the boson field is described, as usual, by the operator
\begin{equation}
    H_{\rm B}=\sum_{k\in K}\omega_ka_k^\dag a_k,\label{eq:free_boson_field_Hamiltonian}
\end{equation}
with $a_k,a_k^\dag$ being the bosonic annihilation and creation operators, satisfying the usual bosonic commutation relations. $H_{\rm B}$ is a self-adjoint operator on $\mathcal{F}$. It is then known (see Appendix~\ref{appendix:Fock}) that $H_{\rm B}$ has a pure point spectrum composed by the set of all real numbers in the form $\sum_kn_k\omega_k$, where $(n_k)_{k\in K}$ is a sequence of integer numbers with finitely many of them being nonzero; correspondingly, $\mathcal{F}$ admits a complete orthonormal set of eigenvectors of $H_{\rm B}$, each of them being indexed by such sequences---the Fock states:
\begin{equation}
    \ket{\bm{n}}=\ket{n_{k_1},n_{k_2},\ldots,n_{k_N}},\qquad k_1,\dots,k_N\in K.\label{eq:Fock_state}
\end{equation}
More information can be found in Appendix~\ref{appendix:Fock}\@.
A Fock state $\ket{\bm{n}}$ as in Eq.~\eqref{eq:Fock_state} corresponds to a configuration in which the wavenumbers $k_1,\dots,k_N$ have occupancy numbers $n_{k_1},\ldots,n_{k_N}$, and all other wavenumbers have zero occupancy number. We will use this basis to calculate all Hilbert--Schmidt norms involved in this paper.

Throughout the paper, we will need the following two requirements:
\begin{assumption}\label{assumption:omega}
The modes $(\omega_k)_{k\in K}$ satisfy the following properties:
    \begin{enumerate}[(i)]
    \item they are bounded from below:
    \begin{equation}
        m\equiv\inf_{k\in K}\omega_k>0;
    \end{equation}
    \item for all $\beta>0$, the following estimate holds:
    \begin{equation}\label{eq:trace}
        \sum_{k\in K}\e^{-\beta\omega_k}<\infty;
    \end{equation}
\end{enumerate}
\end{assumption}
Assumption~\ref{assumption:omega}(i) is needed to avoid domain issues for the boson field Hamiltonian~\eqref{eq:free_boson_field_Hamiltonian}\@.
Assumption~\ref{assumption:omega}(ii) (which, roughly speaking, entails that the modes $\omega_k$, when infinitely many, grow ``sufficiently fast'' as $k\to\infty$) is needed to ensure the condition
\begin{equation}
    Z(\beta):=\tr\e^{-\beta H_B}<\infty
\end{equation}
for all $\beta>0$, as recalled in Appendix~\ref{appendix:Gibbs}. Consequently, the Gibbs state
\begin{equation}\label{eq:Gibbs_state}
    \rho_{\rm B}=\frac{\e^{-\beta H_{\rm B}}}{\tr\e^{-\beta H_{\rm B}}}
\end{equation}
exists. As is known, $\rho_\rmb$ represents a thermal state with temperature $T$ obtained by $\beta=1/(k_{\rm B}T)$, with $k_{\rm B}$ being the Boltzmann constant. Of course, both Assumptions~\ref{assumption:omega}(i)--(ii) are trivial in the case of finitely many modes; for countably many modes, the estimate~\eqref{eq:trace} essentially requires that the modes $\omega_k$ diverge to infinity more quickly than logarithmically in $k$.

We remark that the Gibbs state is often presented as
\begin{equation}
    \rho_{\rm B}=\frac{\mathrm{e}^{-\beta (H_\mathrm{B}-\mu\mathcal{N})}}{\tr\mathrm{e}^{-\beta (H_\mathrm{B}-\mu\mathcal{N})}},
\end{equation}
where $\mathcal{N}=\sum_{k\in K}a_k^\dagger a_k$ is the number operator and $\mu\in\mathbb{R}$ is the chemical potential~\cite{Basti2023}. In this case, it suffices to require $m>-\infty$ and $\mu<m$. However, as already done in the example considered in Section~\ref{sec:motivating_example}, there is no loss of generality in setting $\mu=0$ by shifting the zero-point energy of the free boson field, thus yielding the requirement $m>0$ that is given in Assumption~\ref{assumption:omega}(i). This will make our equations simpler; the general case can always be recovered via a shift of the modes $\omega_k$. This is because we are assuming discrete modes $\omega_k$, so that only finitely many of them can be negative. For instance, this shift has to be done for Ohmic spectra.

We shall consider the interaction between this field and a two-level system. The corresponding Hilbert space is thus $\HH=\mathbb{C}^2\otimes\mathcal{F}$, and the Hamiltonian shall be assumed to have the following expression (tensor product understood):
\begin{equation}\label{eq:spin--boson}
    H=H_{\rm S}+\sum_{k\in K}\omega_ka_k^\dag a_k+\sum_{k\in K}\left(f_k^*B^\dag a_k+f_kBa^\dag_k\right),
\end{equation}
with $H_{\rm S}=H_{\rm S}^\dag$ and $B$ being arbitrary $2\times 2$ matrices, and $f=(f_k)_{k\in K}$ weighting the coupling between the two-level system and the $k$-th mode of the field. In fact, our results only rely on the form of the Hamiltonian in Eq.~\eqref{eq:spin--boson}\@. Our bounds can be applied as long as the system Hilbert space is finite-dimensional and the Hamiltonian is of the form of Eq.~\eqref{eq:spin--boson}\@. For instance, that would be the case e.g.\ in Dicke models if only one level of a qu$d$it is coupled to the bath, or if one considers a Tavis--Cummings-type model in which all atoms interact with the bosonic bath via a single coupling operator $B$. Notice that a generalization to arbitrary finite-dimensional systems is immediate by replacing the operator $B$ with higher-dimensional operators $B_{k,j}$, where each bath mode $k$ can couple differently to the system energy level $j$. However, since this will blow up the notation while not giving any new physical insights, we stick to the simplified qubit model in Eq.~\eqref{eq:spin--boson}\@. For the couplings $(f_k)_{k\in K}$ we assume the following property:
\begin{assumption}\label{assumption:couplings}
    The couplings $f=(f_k)_{k\in K}$ satisfy
    \begin{equation}
        \sum_{k\in K} \vert f_k\vert^2 <\infty.
    \end{equation}
\end{assumption}
Again, this assumption is only nontrivial for countably many modes.

Under Assumption~\ref{assumption:couplings}, it can be shown \footnote{
Assumption~\ref{assumption:couplings} may be, in fact, relaxed to accommodate a larger class of coupling functions, as shown in Refs.~\cite{lonigro2022generalized,lonigro2023self,lill2023self}; in such cases, however, the domain of $H$ acquires a nontrivial dependence on the coupling.
} that the spin--boson Hamiltonian~\eqref{eq:spin--boson} is a well-defined, self-adjoint operator on $\HH$ with domain $\Dom H=\mathbb{C}^2\otimes\Dom H_{\rm B}$, belonging to the class of (generalized)
spin--boson models introduced at the beginning of the paper~\cite{arai1997existence,arai2000essential,arai2000ground,falconi2015self,takaesu2010generalized,teranishi2015self}. Some particular examples include:
\begin{itemize}
    \item the dephasing model considered in Section~\ref{sec:motivating_example}, when $d=1$ (single mode), $H_\rms=\frac{\omega_\rms}{2}\sigma_z$ and
    \begin{equation}
        B=\sigma_z=\begin{pmatrix}
            1&0\\0&-1
        \end{pmatrix};
    \end{equation}
    \item the Jaynes--Cummings model, when $d$ and $H_\rms$ are again as above, and
    \begin{equation}
        B=\sigma_-=\begin{pmatrix}
            0&0\\1&0
        \end{pmatrix};
    \end{equation}
    \item the quantum Rabi model, when $d$ and $H_\rms$ are again as above, and
    \begin{equation}
        B=\sigma_x=\begin{pmatrix}
            0&1\\1&0
        \end{pmatrix},
    \end{equation}
\end{itemize}
    as well as the generalizations of all these models to boson fields with at most countably many modes.

With dynamical decoupling, we aim to decouple the qubit from the degrees of freedom of the boson bath.
The discussion in Section~\ref{sec:motivating_example} shows that this is described by the Trotterization of rotated versions of the Hamiltonian~\eqref{eq:spin--boson}\@.
In the next section, we will explain how Trotterization works for more than two Hamiltonians, give bounds on the Trotter error, and translate them to the density operator picture.

\addtocontents{toc}{\vspace{-.6em}}\section{The Trotter error for mixed quantum states}\label{sec:math}

In this section, we provide a general Trotter error bound that can be applied to compute the efficiency of dynamical decoupling.
Since the description of dynamical decoupling for thermal---thus mixed---states is to be formulated in the density operator picture, like already shown in the example in Section~\ref{sec:motivating_example}, we will need some preliminaries on mixed quantum states first.

\subsection{Quantum mechanics in the Liouville space---from pure to mixed quantum states}
\label{sec:Liouville}

For the purposes of this paper---more generally, whenever taking into account models of field--matter interaction---we will need to deal with unbounded Hamiltonians acting on infinite-dimensional spaces. We shall thus start by recalling some basic related notions. This will help us to understand how to translate results from pure to mixed quantum states.

Let $\HH$ be an infinite-dimensional Hilbert space. As already seen in the previous section, linear operators $H$ on $\HH$ are then generally defined on a subspace of $\HH$, which is referred to as the \textit{domain} of $H$ and will be hereafter denoted by $\Dom H$. The operators for which $\sup_{\Vert\ket{\psi}\Vert=1}\Vert H\ket{\psi}\Vert<\infty$ are called \emph{bounded} and their domain is the whole Hilbert space. All other operators are called \emph{unbounded}. In the following, we will denote the space of bounded linear operators on $\HH$ by $\BB(\HH)$. In particular, the Hamiltonian of a quantum system is a (generally unbounded) self-adjoint operator, i.e. satisfying $H=H^\dag$, which uniquely specifies the time-evolution of the corresponding \emph{pure} quantum system via the Schr\"odinger equation ($\hbar=1$),
\begin{equation}\label{eq:schrodinger}
    \i\frac{\mathrm{d}}{\mathrm{d}t}\ket{\Psi(t)}=H\ket{\Psi(t)},\qquad \ket{\Psi(t_0)}=\ket{\Psi_0}\in\Dom H.
\end{equation}
That is, there exists a (strongly continuous) unitary propagator $U(t)$, $t\in\mathbb{R}$, such that the function $\ket{\Psi(t)}=U(t)\ket{\Psi_0}$ is the unique solution of the problem~\eqref{eq:schrodinger}. Furthermore, $U(t)=\exp(-\i tH)$, where the exponentiation is to be understood in the spectral sense.

In this paper we are instead interested in the dynamics of (possibly) \textit{mixed} states. A mixed state of a quantum system is represented by an operator $\rho\in\mathcal{B}(\HH)$ satisfying
\begin{equation}
    \rho=\rho^\dagger,\quad\rho\geq0,\quad\tr\rho=1.
\end{equation}
In the Schr\"odinger picture, the evolution of such states is known to be given by $t\mapsto U(t)\rho U(t)^\dagger$. It is thus natural to look for the mixed-state counterpart of the Schr\"odinger equation. For bounded Hamiltonians, this would be the quantum Liouville equation:
\begin{align}\label{eq:liouville}
    \i\frac{\mathrm{d}}{\mathrm{d}t}\rho(t)&=[H,\rho(t)];
\end{align}
we refer to Ref.~\cite{gyamfi2020fundamentals} for an extensive discussion. However, since we are dealing with unbounded Hamiltonians here, Eq.~\eqref{eq:liouville} is to be taken with additional care. 
For a rigorous mathematical discussion of the quantum Liouville equation in this case, see Ref.~\cite{lonigro2024liouville}\@ and references therein.
We informally summarize Ref.~\cite{lonigro2024liouville} as follows:\medskip

	(i) The space of operators
	\begin{equation}
		\LL(\HH):=\big\{S\in\BB(\HH): \Vert S\Vert_\mathrm{HS}<\infty\big\},
	\end{equation}
 where $\|\cdot\|_{\rm HS}$ is the Hilbert--Schmidt norm,
  \begin{equation}
\|S\|^2_{\rm HS}:=\sum_{n\in\mathbb{N}}\|S\ket{e_n}\|^2,
  \end{equation}
	is a Hilbert space with respect to the Hilbert--Schmidt scalar product
    $\braket{S,T}_{\rm HS}:=\tr(S^\dag T)$. Above, $(\ket{e_n})_{n\in\mathbb{N}}\subset\HH$ is any complete orthonormal basis of $\HH$. All density operators are in $\LL(\HH)$;
\medskip

	(ii) The superoperator (i.e.\ operator acting on $\LL(\HH)$) $\Ad_{U(t)}:\mathcal{L}(\HH)\rightarrow\mathcal{L}(\HH)$ defined by
	\begin{equation}
    \Ad_{U(t)}\rho:=U(t)\rho U(t)^\dagger
\end{equation}
for all $t\in\mathbb{R}$, is a (strongly continuous) unitary propagator (with respect to the Hilbert--Schmidt inner product);\medskip

(iii) For suitable density operators, i.e.\ $\rho\in\Dom\ad_H$ with
\begin{align}\label{eq:domain_liouville}
	\Dom\ad_H=\big\{S\in\LL(\HH):S\Dom H\subset \Dom H,\sum_{n\in\mathbb{N}}\Vert [H,S]\ket{e_n}\Vert^2 < \infty \big\},
\end{align}
 $\Ad_{U(t)}\rho$ is the unique solution to the quantum Liouville equation~\eqref{eq:liouville} with initial condition $\rho(0)=\rho$, and is generated by the unique self-adjoint extension of the superoperator $[H,\cdot]$ to $\Dom\ad_H$, which we call the \emph{Liouvillian} and denote by $\ad_H$~\footnote{We remark that the operator $\ad_H$ is denoted by $\mathbf{H}$ in Ref.~\cite{lonigro2024liouville}.}.\medskip

Combining these three statements allows us to translate results from pure quantum states to mixed quantum states, according to the following simple substitution rules:
\begin{enumerate}
\item the norm of state-vectors $\|\cdot\|$ is replaced by the Hilbert--Schmidt norm of density operators $\|\cdot\|_{\rm HS}$;
\item the unitary evolution group $U(t)$ on $\mathcal{H}$ is replaced by the unitary evolution group $\Ad_{U(t)}$ on the Liouville space $\mathcal{L}(\mathcal{H})$;
\item the Hamiltonian operator $H$ generating $U(t)$ via $U(t)=\e^{-\i tH}$, with domain $\Dom H$, is replaced by the Liouvillian superoperator $\ad_H$ generating $\Ad_{U(t)}$ via $\Ad_{U(t)}=\e^{-\i t\ad_H}$, with domain $\Dom\ad_H$ as per Eq.~\eqref{eq:domain_liouville}~\footnote{We remark that for the spin--boson models defined in Section~\ref{sec:spin-boson}, the Gibbs state $\rho_\rmb$ is in the core of $\ad_H$, where it acts as the commutator, see Ref.\ \cite[Section~4]{lonigro2024liouville}. Since we only consider the Gibbs state as the bath input state, we can safely write $\ad_H\rho_\rms\otimes\rho_\rmb=[H,\rho_\rms\otimes\rho_\rmb]$ as has been done in Section~\ref{sec:spin-boson}.}.
\end{enumerate}
Similar rules are explicitly proven in Ref.~\cite[Section~3.3]{lonigro2024liouville} for the square of the Liouvillian.

In the next subsection, we will see a direct application of these rules in the context of the Trotter product formula.
As we have seen in Section~\ref{sec:motivating_example}, the process of dynamical decoupling can be understood in terms of Trotterization. In fact, through the lens of Trotterization, dynamical decoupling has some very favourable properties. Here, the only infinite-dimensional degrees of freedom come from the bath, while the system is usually assumed to be finite-dimensional so that Trotter always converges with a low error on the system degrees of freedom~\cite{Childs2021}\@. The bath is assumed to be initially in a Gibbs state, which acts as a regularizer of the bath infinities: under suitable conditions, it ensures the finiteness of high moments of the Liouvillian. This is remarkable because the Liouvillian might be a doubly unbounded operator even if its Hamiltonian is only semi-unbounded.

\subsection{The Trotter product formula for multiple Hamiltonians}

In their different incarnations, Trotter product formulas allow us to express the evolution generated by the sum of two or more operators as the limit of the iterated Trotter evolution $(\e^{-\i tH_1/N}\e^{-\i tH_2/N})^N$ as $N\to\infty$.
Let us recall the following result about the existence of this limit~\cite{Arenz2018}\@:
\begin{theorem}\label{thm:Trotter_convergence}
    Let $H_j$ ($j=0,\dots,L-1$) be self-adjoint operators on a Hilbert space $\HH$, with domains $\Dom H_j$, and assume that their sum $\sum_{j=0}^{L-1} H_j$ is essentially self-adjoint on the domain $\bigcap_j\Dom H_j$. Then the Trotter product formula converges in the strong sense, that is: for all $\ket{\Psi}\in\HH$~\footnote{Here we are making a slight abuse of notation to avoid cumbersome equations: the right-hand side of Eq.~\eqref{eq:Trotter_converges} should have the topological closure $\overline{\Sigma_j H_j}$ at the exponent.},
    \begin{equation}
        \biggl(\prod_{j=0}^{L-1}\rme^{-\rmi\frac{t}{N}H_j}\biggr)^N\ket{\Psi}\overset{N\rightarrow\infty}{\loongrightarrow}\rme^{-\rmi t \sum_{j=0}^{L-1} H_j}\ket{\Psi}.\label{eq:Trotter_converges}
    \end{equation}
    Furthermore, the limit above is uniform in $t$ on compact time intervals.
    \begin{proof}
        This was first proven by Kato for two operators~\cite{Kato1978} and later in Ref.~\cite[Theorem~3.1]{Arenz2018} for multiple operators (as stated here).
    \end{proof}
\end{theorem}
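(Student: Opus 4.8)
The plan is to reduce the statement to Chernoff's product formula, which is the natural tool for upgrading an identity about infinitesimal generators into a product-formula limit. I first define, for $s\ge 0$, the family of unitaries $F(s):=\prod_{j=0}^{L-1}\e^{-\rmi sH_j}$, so that $F(0)=I$, each $F(s)$ is a contraction (indeed unitary), and $s\mapsto F(s)$ is strongly continuous as a finite product of strongly continuous unitary groups. The iterated Trotter expression appearing in the statement is then exactly $F(t/N)^N$, so the theorem becomes a statement about Chernoff-convergence of this family.

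The first substantive step is to compute the strong derivative of $F$ at $s=0$ on the common domain $D:=\bigcap_j\Dom H_j$. Writing $Q_k(s):=\prod_{j=0}^{k-1}\e^{-\rmi sH_j}$ (with $Q_0(s)=I$ and $Q_L(s)=F(s)$) and telescoping the product, one gets for $\ket{\psi}\in D$
\begin{equation}
\frac{F(s)\ket{\psi}-\ket{\psi}}{s}=\sum_{k=0}^{L-1}Q_k(s)\,\frac{\bigl(\e^{-\rmi sH_k}-I\bigr)\ket{\psi}}{s}.
\end{equation}
Since $\ket{\psi}\in\Dom H_k$, each difference quotient converges to $-\rmi H_k\ket{\psi}$ by Stone's theorem, while $Q_k(s)\to I$ strongly; because the $Q_k(s)$ are unitary, hence uniformly bounded, each summand converges to $-\rmi H_k\ket{\psi}$, and summing gives $F'(0)\ket{\psi}=-\rmi\sum_jH_j\ket{\psi}=-\rmi C\ket{\psi}$, where $C:=\sum_jH_j$ with domain $D$.

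The hypothesis that $C$ is essentially self-adjoint on $D$ is precisely what promotes this infinitesimal identity to the full limit. Indeed $\overline{C}$ is then self-adjoint, so $-\rmi\overline{C}$ generates the strongly continuous unitary group $\e^{-\rmi t\overline{C}}$, and $D$ is by construction a core for $\overline{C}$. Chernoff's theorem applies verbatim to the contraction family $F$ and the generator $-\rmi\overline{C}$ (whose restriction to the core $D$ equals $F'(0)$), yielding $F(t/N)^N\ket{\Psi}\to\e^{-\rmi t\overline{C}}\ket{\Psi}$ strongly for every $\ket{\Psi}\in\HH$, uniformly for $t$ in compact subsets of $[0,\infty)$. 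Negative times are covered by rerunning the argument with $\tilde F(s):=\prod_j\e^{+\rmi sH_j}$, whose generator is $+\rmi\overline{C}$; this gives the same limit for $t\le 0$, and together the two cases yield convergence for all $t\in\mathbb{R}$, uniformly on compact intervals, which is the claimed uniformity.

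The step I expect to be the genuine obstacle is not the telescoping calculation but the correct invocation of Chernoff's theorem in the unbounded setting: one must verify contractivity of $F(s)$, strong differentiability at $0$ on a core, and—crucially—that the \emph{closure} of the infinitesimal generator is an honest $C_0$-semigroup generator. It is exactly the essential self-adjointness assumption, rather than mere self-adjointness of the individual $H_j$, that secures this last point. Without it, $\overline{C}$ could fail to be self-adjoint, the limit group $\e^{-\rmi t\overline{C}}$ would not even be defined, and the conclusion would break down; this is why that assumption is the load-bearing hypothesis and cannot be removed.
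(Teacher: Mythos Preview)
Your proof is correct. The paper itself does not give an argument---it simply cites Ref.~\cite[Theorem~3.1]{Arenz2018}---so there is nothing to compare at the level of the paper's own text. Your route via Chernoff's product formula is precisely the standard one (and is, in fact, the method used in the cited reference): define the one-step contraction $F(s)=\prod_j\e^{-\rmi sH_j}$, compute its strong derivative at $s=0$ on the common domain $D$ by the telescoping identity, observe that essential self-adjointness of $\sum_jH_j$ on $D$ makes $D$ a core for the closure and ensures that $-\rmi\overline{C}$ generates a unitary (hence contraction) $C_0$-group, and then invoke Chernoff to get strong convergence of $F(t/N)^N$ uniformly on compacts. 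The treatment of negative times by passing to $\tilde F(s)=\prod_j\e^{+\rmi sH_j}$ is also fine, since $F(-s)=\tilde F(s)$ and the generator flips sign.

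Two minor remarks, neither of which affects correctness. First, strong continuity of $s\mapsto F(s)$ is not actually needed for Chernoff's theorem; only $F(0)=I$, contractivity, and the derivative condition on a core are required, so you can drop that sentence. Second, your telescoping $F(s)-I=\sum_kQ_k(s)(\e^{-\rmi sH_k}-I)$ implicitly fixes a left-to-right product convention $Q_{k+1}(s)=Q_k(s)\e^{-\rmi sH_k}$; the paper's later Theorem~\ref{thm:Trotter} uses the opposite ordering, but since the derivative at $0$ is $-\rmi\sum_jH_j$ regardless of order, this does not matter for the limit statement.
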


For practical purposes, knowing that the limit above holds is not enough---one needs to evaluate the rate of convergence of the aforementioned limit in the number $N$ of Trotter steps. The bounds commonly used in the literature~\cite{Childs2021} usually depend on the operator norm of the commutator between $H_1$ and $H_2$, thus clearly not being adaptable to unbounded Hamiltonians~\cite{Ichinose2003} and, in any case, possibly overestimating the error, as extensively discussed in Ref.~\cite{Burgarth2023} (also see Refs.~\cite{becker2024convergence,Moebus2024}). In the latter paper, a \textit{state-dependent} bound for the Trotter product formula was found, cf.~\cite[Theorem~1]{Burgarth2023}\@.

The following result consists of a generalization of said result to the case of more than two Hamiltonians. This is crucial for dynamical decoupling, where the number of Hamiltonians in the Trotter product coincides with the size $L$ of the decoupling group: in fact, already for a single qubit---save from particular examples as the one considered in Section~\ref{sec:motivating_example}---we generally have $L=4$, and this number increases exponentially with the number of qubits.

\begin{theorem}\label{thm:Trotter}
	Let $H_j$ ($j=0,\dots,L-1$) be self-adjoint operators on a Hilbert space $\HH$, with domains $\Dom H_j$, and assume that their sum $\sum_{j=0}^{L-1} H_j$, with domain $\bigcap_j\Dom H_j$, admits an eigenvalue $h$ with corresponding eigenstate $\ket{\varphi}$. Also assume $\ket{\varphi}\in\bigcap_{j=0}^{L-1} \Dom H_j^2\cap \Dom H_j\sum_{i=0}^{j-1} H_i$. Then the state-dependent Trotter error
	\begin{equation}
		\xi_N(t;\ket{\varphi})=\left\Vert\biggl(\prod_{j=0}^{L-1}\rme^{-\rmi\frac{t}{N}H_j}\biggr)^N\ket{\varphi}-\rme^{-\rmi th}\ket{\varphi}\right\Vert
	\end{equation}
	is bounded by
	\begin{align}
		\xi_N(t;\ket{\varphi})\leq \frac{t^{2}}{N}\sum_{k=0}^{L-1}\bigg(\frac{1}{2}\left\Vert H_{k}(g_k)^{2}\ket{\varphi}\right\Vert +\left\Vert H_{k}(g_k)\sum_{i=0}^{k-1}H_{i}(g_i)\ket{\varphi}\right\Vert \bigg).
	\end{align}
	Here, $H_k(g_k)=H_k-hg_k$ with $g_k\in\mathbb{R}$, such that $\sum_{k=0}^{L-1}g_{k}=1$.
	\begin{proof}
	The proof follows the same steps as the proof of Ref.~\cite[Theorem~1]{Burgarth2023}\@.
	Let us first consider the case $h=0$, so that the target evolution $\rme^{-\rmi th}\ket{\varphi}$ becomes the identity on $\ket{\varphi}$.
	We notice that the Trotter unitary for $L$ Hamiltonians,
\begin{equation}
U_N(Lt)=\left(\mathrm{e}^{-\mathrm{i}\frac{t}{N}H_{L-1}}\mathrm{e}^{-\mathrm{i}\frac{t}{N}H_{L-2}}\dots\mathrm{e}^{-\mathrm{i}\frac{t}{N}H_{0}}\right)^{N},
\end{equation}
is generated by a piecewise constant, time-dependent Hamiltonian
\begin{equation}
\tilde{H}(s)=\begin{cases}
H_{0}, & s\in\left[0,\frac{t}{N}\right),\\
H_{1}, & s\in\left[\frac{t}{N},\frac{2t}{N}\right),\\
\vdots & \vdots\\
H_{L-1,} & s\in\left[\frac{(L-1)t}{N},\frac{Lt}{N}\right),
\end{cases}
\end{equation}
which can be extended periodically, $\tilde{H}\left(s+\frac{Lt}{N}\right)=\tilde{H}(s)$.
Thus, $\tilde{H}(s)$ is a family of self-adjoint and locally integrable operators.
For this reason, Ref.~\cite[Lemma~1]{Burgarth2022} applies and we can write
\begin{align}
	[U_N(s) - I]\ket{\varphi} = -\rmi S(s)\ket{\varphi}-\int_0^s \rmd u\, U_N(s)U_N^\dagger(u)\tilde{H}(u)S(u)\ket{\varphi},\label{eq:one_bound}
\end{align}
where $S(s)$, the integral action, at the time step $j$ reads
\begin{align}
S(s)\ket{\varphi} & =\int_{0}^{s}\mathrm{d}u\,\tilde{H}(u)\ket{\varphi}\nonumber\\
 & =\left(s-\frac{jt}{N}\right)H_{j\text{ mod }L}\ket{\varphi}+\frac{t}{N}\sum_{i=0}^{j-1}H_{i\text{ mod }L}\ket{\varphi}.
\end{align}
Also see Ref.~\cite[Lemma~12]{Burgarth2023}\@.
In explicit terms, $S(s)$ is written as
\begin{equation}
S(s)\ket{\varphi}=\begin{cases}
sH_{0}\ket{\varphi}, & s\in\left[0,\frac{t}{N}\right),\\
\left[\left(s-\frac{t}{N}\right)H_{1}+\frac{t}{N}H_{0}\right]\ket{\varphi}, & s\in\left[\frac{t}{N},\frac{2t}{N}\right),\\
\left[\left(s-\frac{2t}{N}\right)H_{2}+\frac{t}{N}\left(H_{0}+H_{1}\right)\right]\ket{\varphi}, & s\in\left[\frac{2t}{N},\frac{3t}{N}\right)\\
\vdots & \vdots\\
\left[\left(s-\frac{(L-1)t}{N}\right)H_{L-1}+\frac{t}{N}\sum\limits_{i=1}^{L-2}H_{i}\right]\ket{\varphi}, & s\in\left[\frac{(L-1)t}{N},\frac{Lt}{N}\right),
\end{cases}
\end{equation}
which is again extended periodically. Therefore, at the boundary of
each Trotter cycle ($j=ML$ for $M=1,\dots,N$),
\begin{equation}
S\left(\frac{MLt}{N}\right)\ket{\varphi}=\frac{Mt}{N}h\ket{\varphi}=0.\label{eq:action_biundary}
\end{equation}
By evaluating Eq.~\eqref{eq:one_bound} for a single Trotter cycle, i.e.\ $U_{N=1}$ at $s=Lt/N$ and inserting Eq.~\eqref{eq:action_biundary},
\begin{equation}
	\big\Vert[U_1(Lt/N) - I]\ket{\varphi}\big\Vert \leq \int_0^{Lt/N} \rmd u\, \Vert\tilde{H}(u)S(u)\ket{\varphi}\Vert,\label{eq:single_Trotter_cycle}
\end{equation}
where we also used the triangle inequality to move the norm inside the integral, and the fact that the norm is unitarily invariant.
By a standard telescoping sum,
\begin{equation}
	\big[U_1(Lt/N)\big]^N - I=\sum_{k=0}^{N-1} \big[U_1(Lt/N)\big]^k\big[U_1(Lt/N)-I\big]
\end{equation}
we obtain
\begin{equation}
	\xi_N(t;\ket{\varphi})\leq N \big\Vert[U_1(Lt/N) - I]\ket{\varphi}\big\Vert.
\end{equation}
By inserting there Eq.~\eqref{eq:single_Trotter_cycle} and the explicit form of the integral action $S(s)$, we get
\begin{align}
\xi_{N}(t;\ket{\varphi}) & \leq N\sum_{k=0}^{L-1}\int_{k\frac{t}{N}}^{(k+1)\frac{t}{N}}\mathrm{d}s\,\left\Vert H(s)S(s)\ket{\varphi}\right\Vert \nonumber\\
 & =N\sum_{k=0}^{L-1}\int_{k\frac{t}{N}}^{(k+1)\frac{t}{N}}\mathrm{d}s\,\bigg\Vert \bigg[\left(s-\frac{kt}{N}\right)H_{k}^{2}+\frac{t}{N}H_{k}\sum_{i=0}^{k-1}H_{i}\bigg]\ket{\varphi}\bigg\Vert \nonumber\\
 & \leq N\sum_{k=0}^{L-1}\left(\int_{k\frac{t}{N}}^{(k+1)\frac{t}{N}}\left\Vert \left(s-\frac{kt}{N}\right)H_{k}^{2}\ket{\varphi}\right\Vert \,\mathrm{d}s+\int_{k\frac{t}{N}}^{(k+1)\frac{t}{N}}\left\Vert \frac{t}{N}H_{k}\sum_{i=0}^{k-1}H_{i}\ket{\varphi}\right\Vert \,\mathrm{d}s\right)\nonumber\\
 & =\frac{t^{2}}{N}\sum_{k=0}^{L-1}\left(\frac{1}{2}\left\Vert H_{k}^{2}\ket{\varphi}\right\Vert +\left\Vert H_{k}\sum_{i=0}^{k-1}H_{i}\ket{\varphi}\right\Vert \right).
\end{align}
This concludes the proof in the case $h=0$. The case of a general eigenvalue $h$ follows by simply performing the replacement $H_{k}\rightarrow H_{k}(g_{k})\equiv H_{k}-hg_{k}$,
where $\sum_{k=0}^{L-1}g_{k}=1$.
	\end{proof}
\end{theorem}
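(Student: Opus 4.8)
The plan is to reduce the statement for a general eigenvalue $h$ to the special case $h=0$ by a bookkeeping trick, and then to exploit the fact that the Trotter product is the propagator of a time-periodic, piecewise-constant Hamiltonian whose accumulated action vanishes on $\ket{\varphi}$ after each full cycle. First I would dispose of the eigenvalue $h$ as follows. Writing $H_k(g_k)=H_k-hg_k$ with $\sum_{k=0}^{L-1}g_k=1$, each factor splits as $\rme^{-\rmi\frac{t}{N}H_k}=\rme^{-\rmi\frac{t}{N}hg_k}\rme^{-\rmi\frac{t}{N}H_k(g_k)}$, the first factor being a scalar phase that commutes with everything. Hence one pass through all $L$ factors contributes a global phase $\rme^{-\rmi\frac{t}{N}h\sum_k g_k}=\rme^{-\rmi\frac{t}{N}h}$, and after $N$ cycles this accumulates to exactly the target phase $\rme^{-\rmi th}$. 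Since $\sum_k H_k(g_k)\ket{\varphi}=(\sum_k H_k-h)\ket{\varphi}=0$, factoring out the phase gives $\xi_N(t;\ket{\varphi})=\norm{(\tilde U_N-I)\ket{\varphi}}$, where $\tilde U_N$ is the Trotter product of the shifted operators $H_k(g_k)$; it therefore suffices to prove the bound in the case where the target reduces to the identity, i.e.\ $h=0$.

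Second, for $h=0$ I would regard $U_N(Lt)=(\prod_{j}\rme^{-\rmi\frac{t}{N}H_j})^N$ as generated by the piecewise-constant, $\frac{Lt}{N}$-periodic self-adjoint family $\tilde H(s)$ equal to $H_j$ on $[j\frac{t}{N},(j+1)\frac{t}{N})$. The hypothesis $\ket{\varphi}\in\bigcap_k\Dom H_k^2$ guarantees that both $\tilde H(s)\ket{\varphi}$ and $\tilde H(s)S(s)\ket{\varphi}$ are well-defined, where $S(s)\ket{\varphi}=\int_0^s\tilde H(u)\ket{\varphi}\,\rmd u$. I would then invoke the Duhamel-type integral identity for locally integrable self-adjoint families of Ref.~\cite{Burgarth2022} to write $[U_N(s)-I]\ket{\varphi}$ as $-\rmi S(s)\ket{\varphi}$ plus an integral remainder that is quadratic in $\tilde H$. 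The decisive point is that at every cycle boundary $s=M\frac{Lt}{N}$ one has $S(s)\ket{\varphi}=\frac{Mt}{N}(\sum_k H_k)\ket{\varphi}=0$, so the linear term drops out and only the second-order remainder survives. Evaluating the identity over one cycle, using unitarity to bound the integrand, and combining with the telescoping identity $U_1^N-I=\sum_{k=0}^{N-1}U_1^k(U_1-I)$, I would obtain $\xi_N(t;\ket{\varphi})\le N\,\norm{[U_1(Lt/N)-I]\ket{\varphi}}\le N\int_0^{Lt/N}\norm{\tilde H(u)S(u)\ket{\varphi}}\,\rmd u$.

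Finally I would evaluate the integral explicitly on each subinterval. On $[k\frac{t}{N},(k+1)\frac{t}{N})$ one has $\tilde H(s)=H_k$ and $S(s)\ket{\varphi}=(s-k\frac{t}{N})H_k\ket{\varphi}+\frac{t}{N}\sum_{i=0}^{k-1}H_i\ket{\varphi}$, so the integrand becomes $(s-k\frac{t}{N})H_k^2\ket{\varphi}+\frac{t}{N}H_k\sum_{i<k}H_i\ket{\varphi}$; integrating the first term over a window of length $\frac{t}{N}$ produces the factor $\frac12(\frac{t}{N})^2$ and the second the factor $(\frac{t}{N})^2$. Summing over $k$, multiplying by $N$, and undoing the shift $H_k\to H_k(g_k)$ reproduces the claimed bound. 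I expect the main obstacle to lie not in these elementary integrations but in the rigorous justification of the Duhamel identity for the \emph{unbounded}, time-dependent family $\tilde H(s)$ and in verifying that $\ket{\varphi}$ stays in the relevant domains throughout the evolution---precisely where the assumption $\ket{\varphi}\in\bigcap_k\Dom H_k^2$ and the locally integrable self-adjoint structure of $\tilde H(s)$ are needed; the only genuine novelty over the two-operator case is then the combinatorial bookkeeping of the partial sums $\sum_{i<k}H_i$ that appear in $S(s)$.
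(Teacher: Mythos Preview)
Your proposal is correct and follows essentially the same route as the paper: view the Trotter product as the propagator of a piecewise-constant periodic Hamiltonian, apply the Duhamel-type identity from Ref.~\cite{Burgarth2022}, use that $S(s)\ket{\varphi}$ vanishes at cycle boundaries, telescope over the $N$ cycles, and integrate explicitly on each subinterval. The only cosmetic difference is ordering---you reduce general $h$ to $h=0$ at the outset by explicitly factoring the scalar phases, whereas the paper treats $h=0$ first and then remarks that the general case follows by the substitution $H_k\to H_k(g_k)$.
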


While apparently only valid for pure quantum states, Theorem~\ref{thm:Trotter} and its proof only employ concepts that refer to the underlying Hilbert space structure (inner product, norm, unitarity, unitary norm equivalence, self-adjointness). Therefore, by directly applying the three substitution rules listed at the end of Section~\ref{sec:Liouville} (more details can be found in Ref.~\cite{lonigro2024liouville}), Theorem~\ref{thm:Trotter} can be directly extended to the Liouville space. We state explicitly this fact in the case $h=0$, which will suffice for our purposes:
\begin{corollary}\label{cor:Trotter_Liouville}
    Let $\ad_{H_j}$ ($j=0,\dots,L-1$) be self-adjoint operators on the Hilbert space $\LL(\HH)$ with domains $\Dom \ad_{H_j}$.
Furthermore, let the sum $\sum_{j=0}^{L-1} \ad_{H_j}$ with domain $\bigcap_j \Dom \ad_{H_j}$ admit an eigenvalue $h=0$ with corresponding eigen-density operator $\rho$.
If $\rho\in\bigcap_{j=0}^{L-1} \Dom \ad^2_{H_j}\cap \Dom \ad_{H_j}\sum_{i=0}^{j-1} \ad_{H_i}$, we can bound the Trotter error
\begin{equation}
\xi_N(t;\rho)=\left\Vert\biggl(\prod_{j=0}^{L-1}\rme^{-\rmi\frac{t}{N}\ad_{H_j}}\biggr)^N\rho-\rho\right\Vert_\mathrm{HS}
\end{equation}
by means of Theorem~\ref{thm:Trotter} as
\begin{align}
	\xi_N(t;\rho)\leq \frac{t^{2}}{N}\sum_{k=0}^{L-1}\bigg(\frac{1}{2}\left\Vert \ad_{H_{k}}^{2}\rho\right\Vert_\mathrm{HS}+\bigg\Vert \ad_{H_{k}}\sum_{i=0}^{k-1}\ad_{H_{i}}\rho\bigg\Vert_\mathrm{HS} \bigg).\label{eq:Trotter_Liouville}
\end{align}
\end{corollary}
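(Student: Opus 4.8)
The plan is to deduce Corollary~\ref{cor:Trotter_Liouville} directly from Theorem~\ref{thm:Trotter} by invoking the substitution rules established at the end of Section~\ref{sec:Liouville}, rather than redoing the telescoping argument from scratch. The key structural observation is that $\LL(\HH)$, equipped with the Hilbert--Schmidt inner product $\braket{S,T}_{\rm HS}=\tr(S^\dag T)$, is itself a Hilbert space [point (i)], that each Liouvillian $\ad_{H_j}$ is a self-adjoint operator on this space with domain $\Dom\ad_{H_j}$ [point (iii)], and that each $\Ad_{U_j(t)}=\e^{-\i t\,\ad_{H_j}}$ is a strongly continuous unitary propagator on it [point (ii)]. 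Since Theorem~\ref{thm:Trotter} and its proof rely exclusively on abstract Hilbert-space notions---the inner product, the induced norm, unitarity and unitary invariance of the norm, self-adjointness of the generators, and the integral representation~\eqref{eq:one_bound}---every step carries over verbatim after the replacements $\HH\to\LL(\HH)$, $\ket{\varphi}\to\rho$, $H_j\to\ad_{H_j}$, and $\|\cdot\|\to\|\cdot\|_{\rm HS}$.

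Concretely, I would first check that the hypotheses of Theorem~\ref{thm:Trotter} are met in the Liouville setting. The operators $\ad_{H_j}$ are self-adjoint on $\Dom\ad_{H_j}$ (assumed in the statement and established in Ref.~\cite{lonigro2024liouville}); the density operator $\rho$ is an eigenvector of $\sum_j\ad_{H_j}$ with eigenvalue $h=0$ (assumed); and the regularity requirement $\rho\in\bigcap_j\Dom\ad_{H_j}^2$ is precisely the Liouville-space analog of $\ket{\varphi}\in\bigcap_j\Dom H_j^2$, guaranteeing that both $\ad_{H_k}^2\rho$ and the mixed products $\ad_{H_k}\sum_{i<k}\ad_{H_i}\rho$ are well-defined elements of $\LL(\HH)$. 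With these in place, Theorem~\ref{thm:Trotter} yields the bound on $\xi_N(t;\rho)$ with $H_k$ replaced by the shifted Liouvillian $\ad_{H_k}-h g_k$.

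Because we restrict to the eigenvalue $h=0$, the shift terms disappear: $\ad_{H_k}-hg_k=\ad_{H_k}$ for any admissible choice of the $g_k$, so the auxiliary constraint $\sum_k g_k=1$ becomes irrelevant and the bound collapses to exactly Eq.~\eqref{eq:Trotter_Liouville}. I do not expect a genuine analytic obstacle here, since the content of the corollary lies entirely in the legitimacy of the transfer. The one point deserving care is to confirm that the integral-action machinery driving the proof of Theorem~\ref{thm:Trotter}---in particular the representation~\eqref{eq:one_bound} and the boundary identity~\eqref{eq:action_biundary}---depends only on self-adjointness and local integrability of the time-dependent generator $\tilde{\mathbf{H}}(s)$, which here is the piecewise-constant family assembled from the self-adjoint Liouvillians $\ad_{H_{j\bmod L}}$. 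Since this family inherits both properties from the individual $\ad_{H_j}$, the representation lemma of Ref.~\cite[Lemma~1]{Burgarth2022} applies without modification, and the derivation closes.
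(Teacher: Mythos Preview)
Your proposal is correct and mirrors the paper's own justification essentially verbatim: the paper does not give a separate proof of the corollary but simply observes (in the paragraph immediately preceding it) that Theorem~\ref{thm:Trotter} and its proof use only abstract Hilbert-space notions, so the three substitution rules of Section~\ref{sec:Liouville} transfer the bound to $\LL(\HH)$, after which the specialization $h=0$ removes the shift terms. If anything, your write-up is more detailed than the paper's, since you explicitly verify the hypotheses and flag the applicability of the integral-action lemma to the piecewise-constant Liouvillian family.
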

Notice that $\ad_H^2$ is to be understood as the unique self-adjoint extension of the superoperator $[H,[H,\cdot]]$.
See Ref.~\cite[Section~3.3]{lonigro2024liouville} for details. 

As we will see in the next section, Corollary~\ref{cor:Trotter_Liouville} will enable us to compute the error on dynamical decoupling for the general class of models considered in Section~\ref{sec:spin-boson}\@.

\begin{remark}
From a physical perspective, it seems more natural to consider the trace norm $\Vert X\Vert_\mathrm{tr} = \tr(\sqrt{X^\dagger X})$ instead of the Hilbert--Schmidt norm distance to quantify the decoupling error as it corresponds to physically measurable quantities. Furthermore, one would mostly be interested in the distance of the \emph{reduced system} dynamics under dynamical decoupling to the free decoupled \emph{system} evolution. We remark that such a result can indeed be obtained as a simple corollary from our Theorem~\ref{thm:Trotter}\@: It follows from Ref.~\cite[Lemma~5.1]{Davies_76} that the result in Corollary~\ref{cor:Trotter_Liouville} also holds when replacing the Hilbert--Schmidt norm with the trace norm. One can then use the fact that the partial trace is a linear contraction to obtain, under the same assumptions as in Corollary~\ref{cor:Trotter_Liouville},
\begin{align}
\left\Vert\tr_\rmb\left[\biggl(\prod_{j=0}^{L-1}\rme^{-\rmi\frac{t}{N}\ad_{H_j}}\biggr)^N\rho\right]-\rho_\rms\right\Vert_\mathrm{tr}   
\leq \frac{t^{2}}{N}\sum_{k=0}^{L-1}\bigg(\frac{1}{2}\left\Vert \ad_{H_{k}}^{2}\rho\right\Vert_\mathrm{tr}+\bigg\Vert \ad_{H_{k}}\sum_{i=0}^{k-1}\ad_{H_{i}}\rho\bigg\Vert_\mathrm{tr} \bigg).\label{eq:trace_norm_error}
\end{align}
Here, $\rho=\rho_\rms\otimes\rho_\rmb$ is a product state acting on the bipartite Hilbert space $\mathcal{H}=\mathcal{H}_\rms\otimes\mathcal{H}_\rmb$.
While this result is valuable from an abstract perspective, the bound in Eq.~\eqref{eq:trace_norm_error}
is not computable in practice as the operators $\ad_{H_{k}}\ad_{H_{i}}\rho$ are not necessarily positive, thus making their corresponding trace norms not explicit. This is why we use the Hilbert--Schmidt norm to quantify the decoupling error.
\end{remark}

\addtocontents{toc}{\vspace{-.6em}}\section{Efficiency of dynamical decoupling for spin--boson models}\label{sec:dd_spin-boson}

In this section we discuss the procedure of dynamical decoupling for the general spin--boson models introduced before. We then present our main results on the dynamical decoupling of these models: a general-purpose recipe for obtaining bounds for the decoupling error.

\subsection{Dynamical decoupling for spin--boson models}

Physically, the model~\eqref{eq:spin--boson} describes a qubit system that is coupled to a bosonic bath, which introduces noise to the system.
We shall assume $\tr(H_\rms)=0=\tr(B)$, which is a standard assumption in the context of dynamical decoupling. 
Both conditions, however, are to ease the presentation and could be relaxed:
\begin{itemize}
\item if $\tr(H_{\rms})\neq0$, we would simply get an additional global phase in the target dynamics which would not affect our results;
\item if $\tr(B)\neq0$, one can always rewrite the Hamiltonian~\eqref{eq:spin--boson} in the form $H=H_\rms+\tilde{H}_\rmb+\sum_{k\in K}\big(f_k^* \tilde{B}^\dagger a_k + f_k \tilde{B} a_k^\dagger\big)$, where now $\tr(\tilde{B})=0$ and the the transformed bath Hamiltonian 
$\tilde{H}_\rmb = \sum_{k\in K} \big(\tilde{a}_k^\dagger \tilde{a}_k -\omega_k^{-1}\vert f_k\vert^2\vert\tr(B)\vert^2\big)$ 
is given in terms of shifted bosonic operators defined by 
$\tilde{a}_k=\omega_k^{1/2}a_k + \omega_k^{-1/2}f_k\tr(B)$. The additional term 
$\omega_k^{-1}\vert f_k\vert^2\vert\tr(B)\vert^2$ 
in $\tilde{H}_\rmb$ commutes with everything and does not affect our results.
\end{itemize}

The goal of dynamical decoupling (DD) is then to effectively remove the interaction Hamiltonian $H_\mathrm{I}=\sum_{k\in K}\big(f_k^* B^\dagger a_k + f_k B a_k^\dagger\big)$ and the system Hamiltonian $H_\rms$ by means of strong and fast controls on the system alone.
That is, by only acting on $\HH_\rms$, we want to achieve the following:
\begin{equation}
	\rme^{-\rmi tH}\overset{DD}{\longrightarrow} \rme^{-\rmi t I_\rms \otimes H_\rmb},
\end{equation}
where $I_\rms$ denotes the identity on $\HH_\rms$.
To describe the system controls, we define a set $\mathscr{V}=\{\mathbf{V}_j\}_{j=0}^{L-1}$ called the ``decoupling set''.
Its elements are called ``pulses'' or ``unitary kicks'' and are unitary operations.
They are of the form $\mathbf{V}_j=(v_j\otimes I_\rmb)\cdot(v_j^\dagger\otimes I_\rmb)$, where the $v_j$'s are unitary matrices and $I_\rmb$ denotes the identity on the bath Hilbert space $\HH_\rmb$.
Furthermore, we require the so-called ``decoupling condition'': $\{v_j\}_{j=0}^L$ generates a unitary group that acts irreducibly, i.e.\ only the identity $v_0=I_\rms$ commutes with the entire group.
By Schur's lemma, this is equivalent to
\begin{equation}
	\frac{1}{L}\sum_{j=0}^{L-1} v_jAv_j^\dagger=\frac{\tr(A)}{\dim(\HH_\rms)}I_\rms, \quad\text{for all $A\in\BB(\HH_\rms)$,}\label{eq:decoupling_condition}
\end{equation}
where $\BB(\HH_\rms)$ is the set of linear operators acting on $\HH_\rms$.
We would like to remark that Eq.~\eqref{eq:decoupling_condition} is sometimes also referred to as a ``twirl over a unitary 1-design'' in the case in which $\mathscr{V}$ is not the full unitary group on $\HH_\rms$.

To describe the process of dynamical decoupling, we will have to go to the density operator picture and thus to the Liouville space, as described in Section~\ref{sec:Liouville}. Here, dynamical decoupling means interspersing the dynamics under the Liouville operator associated with $H$ by the decoupling pulses, i.e.\
\begin{align}
	\Ad_{U_N(t)}\,\rho_\rms\otimes\rho_\rmb &=\left(\prod_{j=0}^{L-1} \mathbf{V}_j\rme^{-\rmi \frac{t}{LN}\ad_H}\mathbf{V}^\dagger_j\right)^N(\rho_\rms\otimes\rho_\rmb)\nonumber\\
    &= \left(\prod_{j=0}^{L-1} \rme^{-\rmi \frac{t}{LN}\mathbf{V}_j\ad_H\mathbf{V}^\dagger_j}\right)^N(\rho_\rms\otimes\rho_\rmb)
 \label{eq:DD}
\end{align}
where $N\in\mathbb{N}$, $\rho_\rms$ is a density operator on $\HH_\rms$, and $\rho_\rmb$ is assumed to be the Gibbs state, see Eq.~\eqref{eq:Gibbs_state}.
This is the commonly assumed initial state for dynamically decoupling bosonic baths; physically, as the Gibbs state maximizes the entropy, it reflects the fact that we do not have any information about the bath.

Under the assumption~\eqref{eq:decoupling_condition}, the procedure in Eq.~\eqref{eq:DD} removes the interaction Hamiltonian, and an initial product state $\rho=\rho_\rms\otimes\rho_\rmb$ stays a product state \emph{approximately} after the evolution under $\Ad_{U_N(t)}$.
To see this, notice that Eq.~\eqref{eq:DD} is essentially a Trotter product, where one Trotterizes between operators $\frac{1}{L}\ad_{H_j}=\frac{1}{L}\mathbf{V}_j\ad_H\mathbf{V}_j^\dagger$.
Therefore, we know that, if $\frac{1}{L}\sum_{j=0}^{L-1} \ad_{H_j}$ is essentially self-adjoint (with respect to the Hilbert--Schmidt inner product) on the domain $\bigcap_{j=0}^{L-1} \Dom \ad_{H_j}$, Trotter converges on all input states, see Theorem~\ref{thm:Trotter_convergence}\@.
Then,
\begin{equation}
	\lim_{N\rightarrow\infty} \Ad_{U_N(t)}\,\rho = \rme^{-\rmi t\frac{1}{L}\sum_{j=0}^{L-1}\ad_{H_j}}\rho.
\end{equation}
This can be computed explicitly: If we define $U_j=v_j\otimes I$, we have
\begin{align}
	\ad_{H_j}\rho&=\Ad_{U_j}\ad_H\Ad_{U_j^\dagger}\rho\nonumber\\
	&=\Ad_{U_j}[H,U_j^\dagger\rho U_j]\nonumber\\
	&=U_j HU_j^\dagger\rho U_j U_j^\dagger-U_j U_j^\dagger \rho U_j H U_j^\dagger\nonumber\\
	&=\ad_{U_j H U_j^\dagger}\rho\label{eq:ad_vHv}
\end{align}
and thus
\begingroup
\allowdisplaybreaks
\begin{align}
	\frac{1}{L}\sum_{j=0}^{L-1} \ad_{H_j}\,\rho =& \frac{1}{L}\sum_{j=0}^{L-1}[U_jHU_j^\dagger,\rho_\rms\otimes\rho_\rmb]\nonumber\\
	=&\frac{1}{L}\sum_{j=0}^{L-1} \Big([v_j H_\rms v_j^\dagger,\rho_\rms]\otimes \rho_\rmb \nonumber\\
	&\quad+\sum_{k\in K} f_k^*\big([v_jB^\dagger v_j^\dagger,\rho_\rms]\otimes a_k\rho_\rmb + \rho_\rms v_j B^\dagger v_j^\dagger \otimes [a_k,\rho_\rmb]\big)\nonumber\\
	&\qquad+f_k\big([v_jBv_j^\dagger,\rho_\rms]\otimes a_k^\dagger\rho_\rmb+\rho_\rms v_jBv_j^\dagger\otimes[a_k^\dagger,\rho_\rmb]\big)\Big)\nonumber\\
	=&\,0,\label{eq:DD_works}
\end{align}
\endgroup
where the second step uses that the Gibbs state $\rho_\rmb$ commutes with $H_\rmb$, and the last step follows from Eq.~\eqref{eq:decoupling_condition} and $\tr(H_\rms)=0=\tr(B)$.

A general treatment of dynamical decoupling for unbounded operators has been developed in Ref.~\cite{Arenz2018} for the case of pure states.
Here, it has been shown that the convergence of Trotter, together with the condition~\eqref{eq:decoupling_condition}, suffices for dynamical decoupling to work.
However, Ref.~\cite{Arenz2018} does not explicitly cover the mixed state case; furthermore, it only considers the limiting evolution $N\rightarrow\infty$ and does not provide quantitative bounds to the convergence rate of dynamical decoupling for finite $N$.
The solutions to both of these problems are presented in the next subsection.

\subsection{Error bound for the dynamical decoupling of spin--boson models}

This section combines the results and discussions from the previous sections and presents a generic scheme to quantify the efficiency of dynamical decoupling for the class of models introduced in Section~\ref{sec:spin-boson}\@.

We begin by stating two additional regularity requirements for spin--boson models other than Assumptions~\ref{assumption:omega}--\ref{assumption:couplings}\@.
They are needed to get an explicit bound on the decoupling error in Theorem~\ref{thm:spin--boson}\@.

\begin{assumption}\label{assumption:bounds_i}
    The modes $\omega=(\omega_k)_{k\in K}$ and the couplings $f=(f_k)_{k\in K}$ between the qubit and the field modes satisfy the following property:
        \begin{equation}
            \sum_{k\in K} \omega_k^2 \vert f_k\vert^2 < \infty.
        \end{equation}
\end{assumption}

Physically, Assumption~\ref{assumption:bounds_i} might require the existence of a UV-cutoff in the noise spectrum (e.g.\ for Lorentzian resonance, where dynamical decoupling is expected to work). However, it is mathematically unclear whether it is possible to relax this assumption, while still obtaining the same $1/N$ error scaling in dynamical decoupling. We expect that combining our method with a result similar to~\cite[Theorem~3]{Burgarth2023} might enable the computation of decoupling error bounds under a weaker condition than Assumption~\ref{assumption:bounds_i} at the cost of a slower convergence speed. We also comment on this in the concluding remarks (see Section~\ref{sec:end}, point (iii) of the avenues for generalization).

\begin{assumption}\label{assumption:bounds_ii}
    For $p\in\{1,2\}$ and all $\beta>0$, the following estimate holds for the modes $\omega=(\omega_k)_{k\in K}$:
        \begin{equation}
            \sum_{k\in K} \omega_k^p \rme^{-2\beta \omega_k} < \infty.
        \end{equation}
\end{assumption}

For a full list of all assumptions made in this paper and their purpose, we refer to Table~\ref{tab:assumptions}\@.
If all these assumptions are satisfied, we can apply our Trotter bound from Corollary~\ref{cor:Trotter_Liouville} to the setting of dynamical decoupling. This allows us to finally state our main result about the efficiency of dynamical decoupling.

\begin{table}[]\label{table:assumptions}
    \centering
    \begin{tabular}{|l|p{37mm}|p{75mm}|}
        \hline
         & \bf{Assumption} & \bf{Purpose} \\
        \hhline{:===:}
        ~\ref{assumption:omega}(i) & $m=\inf_{k\in K}\omega_k > -\infty$ & Self-adjointness of $H_\rmb$~\eqref{eq:free_boson_field_Hamiltonian}\\
         \hline
        ~\ref{assumption:omega}(ii) & $\sum_{k\in K}\rme^{-\beta\omega_k}<\infty$ & Well-definedness of the Gibbs state~\eqref{eq:Gibbs_state}\\
         \hline
        ~\ref{assumption:couplings} & $\sum_{k\in K} \vert f_k\vert^2<\infty $ & Self-adjointness of $H$~\eqref{eq:spin--boson}\\
         \hline
        ~\ref{assumption:bounds_i} & $\sum_{k\in K} \omega_k^2\vert f_k\vert^2 < \infty$ & $1/N$ error scaling for dynamical decoupling\\
         \hline
        ~\ref{assumption:bounds_ii} & $\sum_{k\in K}\omega_k^p\rme^{-2\beta\omega_k}<\infty$ \newline for $p\in\{1,2\}$ & Partition function $Z(2\beta)$ twice differentiable in $(2\beta)$\\
         \hline
    \end{tabular}
    \caption{Assumptions made in this paper and their purpose. In particular, these assumptions are made in Theorem~\ref{thm:spin--boson}\@. This table aims to help navigate the paper and clarify the reason for each assumption. We remark that all these assumptions are trivially satisfied whenever the boson field has finitely many modes. Also notice that Assumption~\ref{assumption:omega}(i) is stated in a slightly different form here and in the main text, where we assume $m > 0$. Both assumptions are equivalent after a shift in the zero-point energy $\omega_0$, which can always be done w.l.o.g. To simplify the calculations, we assumed $m>0$ in the derivation of Theorem~\ref{thm:spin--boson} but an analogous result for the case $m>-\infty$ is directly obtained by replacing $m\rightarrow m-\omega_0$ and each $\omega_k\rightarrow\omega_k-\omega_0$.}
    \label{tab:assumptions}
\end{table}

\begin{theorem}\label{thm:spin--boson}
Consider a spin--boson Hamiltonian $H$ on $\HH=\mathbb{C}^2\otimes\mathcal{F}$ as in Eq.~\eqref{eq:spin--boson} satisfying Assumptions~\ref{assumption:omega} and~\ref{assumption:couplings}\@. Let the initial state be of the form $\rho_\rms\otimes\rho_\rmb$, where $\rho_\rms$ is an arbitrary density matrix on $\mathbb{C}^2$ and $\rho_\rmb=\rme^{-\beta H_\rmb}/Z(\beta)$ is the Gibbs state of the boson field at inverse temperature $\beta\in\mathbb{R}_+$. Here, $Z(\beta)=\tr \rme^{-\beta H_\rmb}$ is the grand canonical partition function. Then, for a decoupling set $\mathscr{V}=\{v_0,\dots,v_{L-1}\}$ satisfying the decoupling condition~\eqref{eq:decoupling_condition}, the following properties hold:
\begin{itemize}
    \item dynamical decoupling works for $H$;
    \item if, in addition, both assumptions~\ref{assumption:bounds_i} and~\ref{assumption:bounds_ii} hold, then $Z(2\beta)$ is twice differentiable in $(2\beta)$, $\rho_\rms\otimes\rho_\rmb\in\bigcap_{j=0}^{L-1} \Dom \ad^2_{U_j H U_j^\dagger}\cap \Dom \ad_{U_j H U_j^\dagger}\sum_{i=0}^{j-1} \ad_{U_i H U_i^\dagger}$, and the error of dynamical decoupling can be bounded through Corollary~\ref{cor:Trotter_Liouville} by
    \begin{align}
	\xi_N(t;\rho)&< \frac{t^{2}}{N}\frac{L+1}{2L}C,\label{eq:DD_error_simple}
\end{align}
where $\vert\mathscr{V}\vert=L\in\mathbb{N}$ is the number of decoupling operations, $U_j=v_j\otimes I_\rmb$, and $C\in\mathbb{R}_+$ is a constant that entirely depends on the Hamiltonian $H$. More precisely,
    \begin{align}
		C={}& \frac{4}{Z(\beta)} \max\{\Vert H_\rms\Vert_\mathrm{HS}^2,\Vert B\Vert_\mathrm{HS}^2\}\times\Bigg(Z(2\beta)\nonumber\\
				&+4\big(\Vert f\Vert^2+\Vert\omega f\Vert^2\big)\left[-\frac{1}{m}\frac{\rmd}{\rmd(2\beta)}+1\right]Z(2\beta)\nonumber\\
				&+ 58\Vert f\Vert^4 \left[\frac{1}{m^{2}}\frac{\mathrm{d}^{2}}{\mathrm{d}(2\beta)^{2}}-\frac{3}{m}\frac{\mathrm{d}}{\mathrm{d}(2\beta)}+2\right]Z(2\beta)\Bigg)^\frac{1}{2},\label{eq:theorem_rough_bound}
	\end{align}
 where again $m=\inf_j \omega_j$.
\end{itemize}
\begin{proof}
The first statement follows directly from Thm.~\ref{thm:Trotter_convergence} noticing that $\bigcap_{j=0}^{L-1}\Dom \ad_{H_j}=\Dom \ad_{I_\rms\otimes H_\rmb}$.
	The second statement is proven in the Appendix. In particular, Proposition~\ref{prop:partition_function} shows that $Z(2\beta)$ is twice differentiable in $(2\beta)$ under our assumptions. Then,  Corollary~\ref{cor:loose_bound} proves the bound presented here. It is a consequence of the Trotter bound in Liouville space (Corollary~\ref{cor:Trotter_Liouville}) applied to the context of dynamical decoupling.
    After applying a triangle inequality to Corollary~\ref{cor:Trotter_Liouville}, one obtains the following error bound for dynamical decoupling
    \begin{equation}
        \xi_N(t;\rho_\rms\otimes\rho_\rmb)
        \leq
        \frac{1}{L^2}\sum_{k=0}^{L-1}\bigg(\frac{1}{2}\left\Vert \ad_{U_kHU_{k}^\dagger}^{2}\rho_\rms\otimes\rho_\rmb\right\Vert_\mathrm{HS}+\sum_{i=0}^{k-1}\bigg\Vert \ad_{U_kHU_{k}^\dagger}\ad_{U_iHU_{i}^\dagger}\rho_\rms\otimes\rho_\rmb\bigg\Vert_\mathrm{HS} \bigg).\label{eq:proof_main_thm_triangle}
    \end{equation}
    Then, the bound follows from Theorem~\ref{thm:main_result_bound}, which summarizes the explicit computations of all Hilbert--Schmidt norms appearing in Eq.~\eqref{eq:proof_main_thm_triangle}\@. These computations are performed in Lemmas~\ref{lem:fundamental_norms}--\ref{lem:fundamental_norms_cont}\@.
    The constant $C$ is then the maximum over all these norms. The pre-factor $(L+1)/(2L)$ comes from noticing that the sum over $k$ in Eq.~\eqref{eq:proof_main_thm_triangle} has $L(L+1)/2$ terms (the $L$-th triangular number).
\end{proof}
\end{theorem}

The quantity $C$ in Eq.~\eqref{eq:theorem_rough_bound}, in fact, constitutes a loose version of a tighter (but with a much more cumbersome expression) bound presented in the Appendix, see Theorem~\ref{thm:main_result_bound}\@. This tighter bound, unlike the one presented here, carries an explicit dependence on the density matrix of the system $\rho_{\rm S}$ as well as the specific unitary matrices $v_j$ employed in the decoupling process.

Remarkably, in both versions (loose and tight) of our bound, the dependence of the error on the specifics of the boson field is entirely encoded in the grand canonical partition function $Z(\beta)$ of the boson field and its first two derivatives. All intricate domain conditions that would have to be checked in the general case, Corollary~\ref{cor:Trotter_Liouville}, are automatically taken care of in this case, as the Gibbs state regularizes the bath infinities. Therefore, this bound constitutes a ready-to-use recipe for computing dynamical decoupling efficiencies for spin--boson models---once the partition function of the boson field is known, the bound can be computed.

We conclude this section with some physical remarks. At first glance, it might seem counter-intuitive that the decoupling error increases when the lowest bath frequency $m$ is small: one could expect low-frequency noise to be easy to decouple. However, this effect actually makes sense from a physical perspective, if we recall that we consider the bath to be in a thermal state $\rho_\rmb$. For small $m$, the ``cost'' of creating a bath excitation is low: thus, for a fixed bath temperature $T$, the occupancy number of the Gibbs state increases when decreasing $m$. In turn, $\rho_\rmb$ becomes less tracial (less regularizing). Instead, if one starts in the ground state of the bath, the occupancy number is fixed and one would indeed expect a smaller decoupling error when $m$ is small. The same low-error behaviour with low $m$ is also expected for finite-dimensional baths, in which the noise frequency completely determines the required decoupling speed. This shows an important qualitative feature of our bound that---to the best of our knowledge---has not been observed before: On the one hand, if our qubit is coupled to other two-level bath systems, low-frequency noise is favourable for dynamical decoupling. On the other hand, if our qubit is coupled to a bath oscillator in a thermal state, low-frequency noise becomes particularly hard to decouple.

\addtocontents{toc}{\vspace{-.6em}}\section{Examples}\label{sec:examples}

In this section, we apply the bound to some relevant examples and compare it with numerical simulations.
In the case of the qubit coupled to a single bosonic mode, we are able to obtain even tighter estimates than the one given in Theorem~\ref{thm:spin--boson}\@.
These are explicitly given in Appendix~\ref{subsec:single_mode} and are used for the single--mode examples in the following.

As we will see in this section, our bounds correctly characterize the error scaling in various system parameters, most importantly in the number of decoupling cycles $N$ and the inverse temperature $\beta$. Nevertheless, a comparison with numerical simulations shows that they are quite loose. This is not surprising as their calculation relies upon several loose estimates so there might be room for improvement. Most importantly, we use several instances of the triangle inequality as well as some other loose estimates to simplify the technical derivations (in particular, see Equations~\eqref{eq:technical_inequality_i}--\eqref{eq:fundamental_inequality} in Appendix~\ref{sec:appendix_calculations}).

\subsection{Jaynes--Cummings model}\label{sec:Jaynes--Cummings}

The Jaynes--Cummings model is recovered from the general spin--boson Hamiltonian~\eqref{eq:spin--boson} by setting the number of bath modes to $d=1$, $H_{\rm S}=\frac{\omega_\rms}{2}\sigma_z$ and $B=\sigma_-$.
Furthermore, we will set $f\in\mathbb{R}$.
Then, the Hamiltonian reads
\begin{equation}
	H=\frac{\omega_\rms}{2}\sigma_z + \omega_\rmb a^\dagger a+f\big(\sigma^+ a+\sigma^- a^\dagger\big).\label{eq:Jaynes-Cummings}
\end{equation}
This Hamiltonian describes a spin, which interacts with a monochromatic bosonic environment via a flip--flop interaction. Differently from the dephasing model analyzed in Section~\ref{sec:motivating_example}, here we have to use the full qubit decoupling set $\mathscr{V}=\{I,\sigma_x,\sigma_y,\sigma_z\}$, and thus rely on the novel results presented in the previous sections, to eliminate the system Hamiltonian and interaction components through dynamical decoupling.
We find that the decoupling error can be upper bounded through Eq.~\eqref{eq:theorem_rough_bound} by
\begin{equation}
    \xi_{N}(t;\rho)<\frac{5t^2}{16N} \max \left(2,|\omega_\rms| ^2\right) \kappa(\beta,\omega_\rmb,f),\label{eq:loose_bound_JC}
\end{equation}
where $\kappa$ is given in Eq.~\eqref{eq:kappa}\@.
A more refined bound is given in Appendix~\ref{subsec:Appendix_examples}, see Eqs.~\eqref{eq:refined_JC_0}--\eqref{eq:refined_JC_+}\@.
In particular, we again have $\xi_N(t;\rho)=\mathcal{O}(1/N)$. This is confirmed numerically in Fig.~\ref{fig:JC_error}(a).
The refined bounds take into account the explicit dependency on the system input state on the error.
For example, consider the qubit states $\rho_\rms=\ket{1}\bra{1}$ and $\rho_\rms=\ket{+}\bra{+}$.
If we take the zero-temperature limit ($\beta\rightarrow\infty$) of the refined bound, we find that the decoupling errors become
\begin{align}
    \xi_N(t;\ket{1}\bra{1}\otimes\rho_\rmb)\leq \frac{ \vert f\vert t^2}{4N}\Big(\big(3+6 \sqrt{2}\big) \vert f\vert+4 (\omega_\rms +\omega_\rmb )\Big)
\end{align}
and
\begin{align}
    \xi_N(t;\ket{+}\bra{+}\otimes\rho_\rmb)\leq \frac{ t^2}{8N}\Big(\big(14+9 \sqrt{2}+2 \sqrt{3}+\sqrt{6}\big) \vert f\vert^2+6 \sqrt{2} \vert f\vert (\omega_\rms
   +\omega_\rmb )+2 \sqrt{2} \omega_\rms ^2\Big).
\end{align}
The dependency of the refined bounds on the inverse temperature $\beta$ is shown in Fig.~\ref{fig:JC_error}(b).
To the best of our knowledge, this is the first completely analytical treatment of the dynamical decoupling efficiency for such a flip--flop interaction.

\begin{figure}
	\centering
	\begin{tabular}{rr}
		\multicolumn{1}{l}{\footnotesize(a)}
		&
		\multicolumn{1}{l}{\footnotesize(b)}
		\\
		\begin{tikzpicture}[mark size={0.5mm}, scale=1]
			\pgfplotsset{%
				width=.49\textwidth,
				height=7.6cm,
			}
			\begin{axis}[
				ylabel near ticks,
				xlabel={Number of decoupling steps $N$},
				ylabel={Decoupling error},
				x post scale=0.9,
				y post scale=0.8,
				xmode=log,
				ymode=log,
				legend cell align={left},
				legend columns=1,
				label style={font=\footnotesize},
				tick label style={font=\footnotesize},
				legend style={font=\footnotesize, at={(0.5,-0.27)},anchor=north},
				xmin=1,
				xmax=100,
				]
				\addplot[color=plotorange, only marks, mark=square] table[x=N, y=error, col sep=comma]{JC_plus_Bound_N.csv};
				\addlegendentry{Our bound $\rho_\rms=\ket{+}\bra{+}$};
				\addplot[color=plotblue, only marks, mark=square] table[x=N, y=error, col sep=comma]{JC_1_Bound_N.csv};
				\addlegendentry{Our bound $\rho_\rms=\ket{1}\bra{1}$};
				\addplot[color=plotorange, only marks] table[x=N, y=error, col sep=comma]{JC_plus_Error_N.csv};
				\addlegendentry{Numerics $\rho_\rms=\ket{+}\bra{+}$};
				\addplot[color=plotblue, only marks] table[x=N, y=error, col sep=comma]{JC_1_Error_N.csv};
				\addlegendentry{Numerics $\rho_\rms=\ket{1}\bra{1}$};
			\end{axis}
		\end{tikzpicture}
		&
		\pgfplotsset{
			every non boxed x axis/.style={} 
		}
		\begin{tikzpicture}[mark size={0.5mm}, scale=1]
			\pgfplotsset{%
				width=.49\textwidth,
				height=8cm
			}
			\begin{groupplot}[
				group style={
					group name=my fancy plots,
					group size=1 by 2,
					xticklabels at=edge bottom,
					vertical sep=0pt,
					ylabels at=edge left,
				},
				width=0.49\textwidth,
				xmin=2e-15, xmax=5e-1,
				]
				\nextgroupplot[
				ylabel near ticks,
				ymin=3e-2,ymax=0.16,
				ytick={5e-2,8e-2,11e-2,14e-2},
				axis x line=top, 
				axis y discontinuity=parallel,
				height=4.5cm,
				label style={font=\footnotesize},
				tick label style={font=\footnotesize},
                    scaled y ticks=false,
                    yticklabel=\pgfkeys{/pgf/number format/.cd,fixed,precision=2,zerofill}\pgfmathprintnumber{\tick},
				legend style={font=\footnotesize},
				legend style={at={(0.5,-1.1)},anchor=north},
				legend cell align={left},
				ylabel={Decoupling error},
				y label style={at={(-0.125,0.5)},anchor=south east},
				]
				\addplot[color=plotorange, mark=none, dashed, ultra thick] table[x=beta, y=error, col sep=comma]{JC_Bound_beta_plus.csv};
				\addlegendentry{Our bound $\rho_\rms=\ket{+}\bra{+}$};
				\addplot[color=plotblue, mark=none, dashed, ultra thick] table[x=beta, y=error, col sep=comma]{JC_Bound_beta_1.csv};
				\addlegendentry{Our bound $\rho_\rms=\ket{1}\bra{1}$};
				\addplot[color=plotblue, only marks] table[x=beta, y=error, col sep=comma]{JC_Error_beta_1.csv};
				\addlegendentry{Numerics for $\rho_\rms=\ket{1}\bra{1}$};
				\addplot[color=plotorange, only marks] table[x=beta, y=error, col sep=comma]{JC_Error_beta_plus.csv};
				\addlegendentry{Numerics for $\rho_\rms=\ket{+}\bra{+}$};
				\nextgroupplot[ymin=1e-3,ymax=5e-3,
				ytick={0.002,0.003,0.004},
				xtick={0.000008,0.1,0.2,0.3,0.4,0.5},
				axis x line=bottom,
				height=3.5cm,
				label style={font=\footnotesize},
				tick label style={font=\footnotesize},
				legend style={font=\footnotesize},
				xlabel={Inverse temperature $\beta$},
				every y tick scale label/.append style={yshift=-1.0em},
				every y tick scale label/.append style={xshift=0.2em},
				]
				\addplot[color=plotblue, only marks] table[x=beta, y=error, col sep=comma]{JC_Error_beta_1.csv};
				\addplot[color=plotorange, only marks] table[x=beta, y=error, col sep=comma]{JC_Error_beta_plus.csv};
			\end{groupplot}
		\end{tikzpicture}
	\end{tabular}
	\caption{\label{fig:JC_error}Dynamical decoupling error for a qubit coupled to monochromatic boson field via a flip--flop (Jaynes--Cummings) interaction. The Hamiltonian is given in Eq.~\eqref{eq:Jaynes-Cummings} and dynamical decoupling is performed by repetitive pulse cycles through the Pauli group $\mathscr{V}=\{I,\sigma_x,\sigma_y,\sigma_z\}$. The initial state is $\rho_\rms\otimes\rho_\rmb$, where $\rho_\rmb$ is the Gibbs state of the bosonic bath with inverse temperature $\beta$ and $\rho_\rms$ is either $\ket{+}\bra{+}$ (orange) or $\ket{1}\bra{1}$ (blue). We fix the total evolution time to $t=1$ and choose $\omega_\rms=1$, $\omega_\rmb=10$ and $f=0.1$.
		(a) Error as a function of the number $N$ of decoupling cycles for fixed $\beta=1$. The empty squares show our analytical bound (Eq.~\eqref{eq:refined_JC_0} for $\rho_\rms=\ket{0}\bra{0}$ and Eq.~\eqref{eq:refined_JC_+} for $\rho_\rms=\ket{+}\bra{+}$) and the filled dots show a numerical simulation, where we truncated the bosonic field in Fock space at dimension $d=10$. We see that the decoupling error decays as $1/N$. Our bound captures this asymptotic behaviour of the decoupling error.
		(b) Error as a function of the inverse temperature $\beta$ for fixed $N=10$. The dashed lines are our bound (Eq.~\eqref{eq:refined_JC_0} for $\rho_\rms=\ket{0}\bra{0}$ and Eq.~\eqref{eq:refined_JC_+} for $\rho_\rms=\ket{+}\bra{+}$) and the dots are a numerical simulation, where we truncated the bosonic field in Fock space at dimension $d=10$. We see that the decoupling error first decays with increasing $\beta$ and then increases as $\mathcal{O}(\sqrt{\beta})$. Finally, the error saturates at a constant due to finite $N$. The error decay for small $\beta$ is lightly visible in our bound (which scales as $\mathcal{O}(1/\sqrt{\beta})$ in this regime), and it correctly captures the $\mathcal{O}(\sqrt{\beta})$ scaling and eventual saturation for larger $\beta$.}
\end{figure}

\subsection{Quantum Rabi model}\label{sec:Rabi}

The quantum Rabi model is a general model to describe the interaction between light and matter; the Jaynes--Cummings model presented before actually corresponds to the rotating-wave approximation of this model, as rigorously proven in Ref.~\cite{burgarth2024taming}.
We can obtain it as a special case of the class of spin--boson Hamiltonians~\eqref{eq:spin--boson} with the choice of $d=1$ bath modes, $H_\rms=\frac{\omega_\rms}{2}\sigma_z$ and $B=\sigma_x$. In addition, we will fix $f\in\mathbb{R}$, so that the Hamiltonian becomes
\begin{equation}
	H=\frac{\omega_\rms}{2}\sigma_z + \omega_\rmb a^\dagger a+f \sigma_x \big(a+a^\dagger\big).\label{eq:quantum_Rabi}
\end{equation}
Again, the decoupling group will be $\mathscr{V}=\{I,\sigma_x,\sigma_y,\sigma_z\}$, and the decoupling error is bounded via Eq.~\eqref{eq:theorem_rough_bound} as
\begin{equation}
    \xi_N(t;\rho)<\frac{5t^2}{16N} \max \left(4,|\omega_\rms| ^2\right)\kappa(\beta,\omega_\rmb,f),\label{eq:loose_bound_QR}
\end{equation}
where again $\kappa(\beta,\omega_\rmb,f)$ is given in Eq.~\eqref{eq:kappa}\@.
A tighter bound can be found in Appendix~\ref{subsec:Appendix_examples}, see Eq.~\eqref{eq:Rabi_refined}.
Importantly, our bound proves that the decoupling error admits a $\mathcal{O}(1/N)$ scaling, which is confirmed numerically in Fig.~\ref{fig:Rabi_error}(a).
Furthermore, our (tighter) bound captures the dependency of the error on the inverse temperature $\beta$ genuinely, see Fig.~\ref{fig:Rabi_error}(b) for a comparison to a numerical simulation, when the initial state of the system is $\rho_\rms=\ket{0}\bra{0}$.
In this case, the zero temperature limit ($\beta\rightarrow\infty$) of the refined error bound becomes
\begin{align}
    \xi_N(t;\ket{0}\bra{0}\otimes\rho_\rmb)\leq \frac{ \vert f\vert t^2}{2N}\Big(2 \big(5+4 \sqrt{2}\big) \vert f\vert+\big(2+\sqrt{2}\big) (\omega_\rms +\omega_\rmb
   )\Big).
\end{align}
As in the case of the Jaynes--Cummings model~\eqref{eq:Jaynes-Cummings}, the decoupling dynamics for the quantum Rabi model~\eqref{eq:quantum_Rabi} is not exactly solvable and we are not aware of any fully analytical treatment of its decoupling error.

\begin{figure}
	\centering
	\begin{tabular}{rr}
		\multicolumn{1}{l}{\footnotesize(a)}
		&
		\multicolumn{1}{l}{\footnotesize(b)}
		\\
		\begin{tikzpicture}[mark size={0.5mm}, scale=1]
			\pgfplotsset{%
				width=.49\textwidth,
				height=7.6cm,
			}
			\begin{axis}[
				ylabel near ticks,
				xlabel={Number of decoupling steps $N$},
				ylabel={Decoupling error},
				x post scale=0.9,
				y post scale=0.8,
				xmode=log,
				ymode=log,
				legend pos=north east,
				legend cell align={left},
				label style={font=\footnotesize},
				tick label style={font=\footnotesize},
				legend style={font=\footnotesize},
				xmin=1,
				xmax=100,
				]
				\addplot[color=plotorange, only marks] table[x=N, y=error, col sep=comma]{Rabi_Bound_N.csv};
				\addlegendentry{Our bound};
				\addplot[color=plotblue, only marks] table[x=N, y=error, col sep=comma]{Rabi_Error_N.csv};
				\addlegendentry{Numerics};
			\end{axis}
		\end{tikzpicture}
		&
		\pgfplotsset{
			every non boxed x axis/.style={} 
		}
		\begin{tikzpicture}[mark size={0.5mm}, scale=1]
			\pgfplotsset{%
				width=.49\textwidth,
				height=8cm,
			}
			\begin{groupplot}[
				group style={
					group name=my fancy plots,
					group size=1 by 2,
					xticklabels at=edge bottom,
					vertical sep=0pt,
					ylabels at=edge left,
				},
				width=.49\textwidth,
				xmin=2e-15, xmax=5e-1,
				]
				\nextgroupplot[
				ylabel near ticks,
				ymin=0.06,ymax=0.2,
				ytick={0.09,0.12,0.15,0.18},
				axis x line=top, 
				axis y discontinuity=parallel,
				height=4.5cm,
				label style={font=\footnotesize},
				tick label style={font=\footnotesize},
                    scaled y ticks=false,
                    yticklabel=\pgfkeys{/pgf/number format/.cd,fixed,precision=2,zerofill}\pgfmathprintnumber{\tick},
				legend style={font=\footnotesize},
				legend style={at={(0.97,0.86)}},
				legend cell align={left},
				ylabel={Decoupling error},
				y label style={at={(-0.125,0.5)},anchor=south east},
				]
				\addplot[color=plotorange, ultra thick] table[x=beta, y=error, col sep=comma]{Rabi_Bound_beta_0.csv};
				\addlegendentry{Our bound};
				\addplot[color=plotblue, only marks] table[x=beta, y=error, col sep=comma]{Rabi_Error_beta_0.csv};
				\addlegendentry{Numerics};
				
				\nextgroupplot[ymin=5e-3,ymax=7.5e-3,
				ytick={0.0055,0.006,0.0065,0.007},
				xtick={0.00007,0.1,0.2,0.3,0.4,0.5},
				axis x line=bottom,
				height=3.5cm,
				label style={font=\footnotesize},
				tick label style={font=\footnotesize},
				legend style={font=\footnotesize},
				xlabel={Inverse temperature $\beta$},
				every y tick scale label/.append style={yshift=-1.0em},
				every y tick scale label/.append style={xshift=0.2em},
				]
				\addplot[color=plotblue, only marks] table[x=beta, y=error, col sep=comma]{Rabi_Error_beta_0.csv};
			\end{groupplot}
		\end{tikzpicture}
	\end{tabular}
	\caption{\label{fig:Rabi_error}Dynamical decoupling error for a qubit coupled to monochromatic boson field via a quantum Rabi interaction. The Hamiltonian is given in Eq.~\eqref{eq:quantum_Rabi} and dynamical decoupling is performed by repetitive pulse cycles through the Pauli group $\mathscr{V}=\{I,\sigma_x,\sigma_y,\sigma_z\}$. The initial state is $\ket{0}\bra{0}\otimes\rho_\rmb$, where $\rho_\rmb$ is the Gibbs state of the bosonic bath with inverse temperature $\beta$. We fix the total evolution time to $t=1$ and choose $\omega_\rms=1$, $\omega_\rmb=10$ and $f=0.1$. The orange dots show our analytical bound~\eqref{eq:Rabi_refined} and the blue dots show a numerical simulation, where we truncated the bosonic field in Fock space at dimension $d=10$.
		(a) Error as a function of the number $n$ of decoupling cycles for fixed $\beta=1$. We see that the decoupling error decays as $1/N$. Our bound captures this asymptotic behaviour of the decoupling error.
		(b) Error as a function of the inverse temperature $\beta$ for fixed $N=10$. We see that the decoupling error first decays with increasing $\beta$ and then increases as $\mathcal{O}(\sqrt{\beta})$. Finally, the error saturates at a constant due to finite $N$. The error decay for small $\beta$ is lightly visible in our bound (which scales as $\mathcal{O}(1/\sqrt{\beta})$ in this regime), and it correctly captures the $\mathcal{O}(\sqrt{\beta})$ scaling and eventual saturation for larger $\beta$.}
\end{figure}

\subsection{Qubit coupled to infinitely many modes}\label{sec:boson_gas}

The last example we consider is a qubit isotropically coupled to infinitely many bosonic modes.
This model can be recovered from the general case~\eqref{eq:spin--boson} by setting $K=\mathbb{N}$ (thus with the dimension of the single-particle space being $d=\infty$) and $H_\rms=\frac{\omega_\rms}{2}\sigma_z$. Leaving the parameters $\omega_k$, $f_k$, and the operator $B$ arbitrary for the moment, the Hamiltonian reads:
\begin{equation}
	H=\frac{\omega_\rms}{2}\sigma_z+\sum_{k=1}^\infty\omega_ka_k^\dag a_k+\sum_{k=1}^\infty\left(f_k^*B^\dag a_k+f_kBa^\dag_k\right).\label{eq:photon_gas}
\end{equation}
The decoupling set is again $\mathscr{V}=\{I,\sigma_x,\sigma_y,\sigma_z\}$.
Under the assumptions of Theorem~\ref{thm:spin--boson}, we can bound the decoupling error for this model.
On the one hand, this model involves an infinite number of field modes:
therefore, numerical simulations become unfeasible and we must entirely rely on analytical estimates for the decoupling efficiency.
On the other hand, this model describes noise more realistically than the previous models, since it incorporates a qubit coupling to arbitrary bath frequencies.
This highlights the importance of our analytical procedure.

To compute the bounds, we recall that the partition function for $H$~\eqref{eq:photon_gas} reads
\begin{equation}
Z(\beta)=\mathrm{e}^{-\sum_{i=1}^{\infty}\ln\left[1-\exp\left(-\beta\omega_{i}\right)\right]}.\label{eq:partition_function_boson_gas}
\end{equation}
Therefore, the first derivative reads
\begin{align}
-\frac{\mathrm{d}}{\mathrm{d}(2\beta)}Z(2\beta)=\sum_{i=1}^{\infty}\frac{\mathrm{e}^{-2\beta \omega_{i}}\omega_{i}}{\left(1-\mathrm{e}^{-2\beta \omega_{i}}\right)\prod_{j=1}^{\infty}\left(1-\mathrm{e}^{-2\beta \omega_{j}}\right)}
\end{align}
and the second derivative computes to
\begin{align}
\frac{\mathrm{d}^{2}}{\mathrm{d}(2\beta)^{2}}Z(2\beta) ={}&\sum_{i=1}^{\infty}\Bigg(4\prod_{k=1}^{\infty}\big(1-\mathrm{e}^{-2\beta \omega_{k}}\big)\Bigg)^{-1} \nonumber\Bigg[\omega_{i}^{2}\mathrm{csch}^{2}\left(\beta \omega_{i}\right)\nonumber\\
&\quad+\omega_{i}\left(\coth\left(\beta \omega_{i}\right)-1\right)\sum_{j=1}^{\infty}\omega_{j}\left(\coth\left(\beta \omega_{j}\right)-1\right)\Bigg].\label{eq:second_derivative_boson_gas}
\end{align}
As long as Assumption~\ref{assumption:bounds_ii} is satisfied, the series above converge and can be computed for a given choice of $(\omega_k)_{k\in K}$.
An explicit derivation of $\frac{\mathrm{d}^{2}}{\mathrm{d}(2\beta)^{2}}Z(2\beta)$ in the general case can be found in Appendix~\ref{subsec:Appendix_examples}, in particular, see Eq.~\eqref{eq:boson_gas_partition_function_second_derivative}\@.
A closed-form expression for the bound on the dynamical decoupling error can then directly obtained by inserting Eqs.~\eqref{eq:partition_function_boson_gas}--\eqref{eq:second_derivative_boson_gas} into Theorem~\ref{thm:spin--boson}\@. When particular values for $\omega_\rms$, $(f_k)_{k\in\mathbb{N}}$, and $(\omega_k)_{k\in\mathbb{N}}$, as well as the coupling operator $B$ are specified for the model at hand, this bound can be computed explicitly.

For concreteness, we shall consider a toy model corresponding to the following choices of parameters: $\omega_k=k$ (linearly increasing modes) and $f_k=f/k^2$ for $k=1,2,\dots$, with $f\in\mathbb{R}$ being a coupling constant. Then, all of our assumptions are satisfied for all $\beta>0$. Indeed, we have
\begin{align}
m & =\inf_{k\in\mathbb{N}}k=1>0\\
\sum_{k=1}^{\infty}\mathrm{e}^{-\beta k} & =\frac{1}{\mathrm{e}^{\beta}-1}<\infty,
\end{align}
so that Assumption~\ref{assumption:omega} is satisfied. Furthermore,
\begin{align}
    \sum_{k=1}^{\infty}\frac{f^2}{k^{4}} & =\frac{f^2\pi^{4}}{90}<\infty,
\end{align}
thus Assumption~\ref{assumption:couplings} also holds. Lastly, 
\begin{align}
\sum_{k=1}^{\infty}k^{2}\frac{f^2}{k^{4}} & =\sum_{k=1}^{\infty}\frac{f^2}{k^{2}}=\frac{f^2\pi^{2}}{6}<\infty;\\
\sum_{k=1}^{\infty}k\mathrm{e}^{-2\beta k} & =\frac{\mathrm{e}^{2\beta}}{\left(\mathrm{e}^{2\beta}-1\right)^{2}}<\infty;\\
\sum_{k=1}^{\infty}k^{2}\mathrm{e}^{-2\beta k} & =\frac{\mathrm{e}^{2\beta}+\mathrm{e}^{4\beta}}{\left(\mathrm{e}^{2\beta}-1\right)^{3}}<\infty,
\end{align}
which also shows that Assumptions~\ref{assumption:bounds_i} and~\ref{assumption:bounds_ii} are satisfied. Thus, Theorem~\ref{thm:spin--boson} applies.
For this choice of parameters, we can compute the partition function by evaluating the series
\begin{equation}
    \sum_{k=1}^\infty \ln\big[1-q^k]=\ln\big[(q;q)_\infty\big],
\end{equation}
where $q=\rme^{-\beta}$
and $(a;q)_\infty$ is the $q$-Pochhammer symbol~\cite[Chapter~2]{Koepf2014}, which is easily evaluated numerically.
Thus, the partition function reads
\begin{equation}
    Z(\beta)=\frac{1}{(q;q)_\infty}.
\end{equation}
The derivatives of $Z(2\beta)$ with respect to $(2\beta)$ are given in Appendix~\ref{subsec:Appendix_examples}\@. In particular, see Eqs.~\eqref{eq:app_partition_function_boson_gas}--\eqref{eq:app_derivative_partition_function_boson_gas}\@.

We now fix $B=\sigma_-$, i.e.\ a flip--flop interaction analogous to the one for the Jaynes--Cummings model, and compute our bound numerically. The results are shown in Fig.~\ref{fig:infinite_modes}(a) for the error as a function of decoupling cycles $N$ and in Fig.~\ref{fig:infinite_modes}(b) for the error as a function of the inverse temperature $\beta$.
The decoupling error scales as $\mathcal{O}(N^{-1})$; furthermore, for fixed $N$, the error increases as $\mathcal{O}(\sqrt{\beta})$ for small $\beta$ and eventually saturates.

We stress again that, despite the simplicity of the toy model studied here, for an infinite number of modes numerical simulations of the true dynamics become unavailable so that one has to rely on error bounds.
This highlights the importance of our analytical results.

\begin{figure}
	\centering
	\begin{tabular}{rr}
		\multicolumn{1}{l}{\footnotesize(a)}
		&
		\multicolumn{1}{l}{\footnotesize(b)}
		\\
		\begin{tikzpicture}[mark size={0.5mm}, scale=1]
			\pgfplotsset{%
				width=.49\textwidth,
				height=.35\textwidth
			}
			\begin{axis}[
				ylabel near ticks,
				xlabel={Number of decoupling steps $N$},
				ylabel={Decoupling error},
				x post scale=0.9,
				y post scale=0.8,
				xmode=log,
				ymode=log,
				legend pos=north east,
				legend cell align={left},
				label style={font=\footnotesize},
				tick label style={font=\footnotesize},
				legend style={font=\footnotesize},
				xmin=1,
				xmax=100,
				]
				\addplot[color=plotblue, only marks] table[x=N, y=error, col sep=comma]{JC_inf_modes_N.csv};
			\end{axis}
		\end{tikzpicture}
		&
			\begin{tikzpicture}[mark size={0.5mm}, scale=1]
			\pgfplotsset{%
				width=.49\textwidth,
				height=.35\textwidth
			}
			\begin{axis}[
				ylabel near ticks,
				xlabel={Inverse temperature $\beta$},
				ylabel={Decoupling error},
				x post scale=0.9,
				y post scale=0.8,
				legend pos=north east,
				legend cell align={left},
				label style={font=\footnotesize},
				tick label style={font=\footnotesize},
				legend style={font=\footnotesize},
				xmin=0.1,
				xmax=10,
				xtick={0.1,2,4,6,8,10},
				]
				\addplot[color=plotblue, ultra thick] table[x=beta, y=error, col sep=comma]{JC_inf_modes_beta.csv};
			\end{axis}
		\end{tikzpicture}
	\end{tabular}
	\caption{\label{fig:infinite_modes}Dynamical decoupling error for a qubit coupled to an infinite number of bosonic modes via a Jaynes--Cummings interaction. The Hamiltonian is given in Eq.~\eqref{eq:photon_gas} for $B=\sigma_-$, $(\omega_k)_{k\in\mathbb{N}}=(k)_{k\in\mathbb{N}}$ and $(f_k)_{k\in\mathbb{N}}=\big(\frac{f}{k^2}\big)_{k\in\mathbb{N}}$ with $f=0.1$. Dynamical decoupling is performed by repetitive pulse cycles through the Pauli group $\mathscr{V}=\{I,\sigma_x,\sigma_y,\sigma_z\}$. The initial state is $\ket{0}\bra{0}\otimes\rho_\rmb$, where $\rho_\rmb$ is the Gibbs state of the bosonic bath with inverse temperature $\beta$. We fix the total evolution time to $t=1$ and choose $\omega_\rms=1$. The blue dots show our analytical bound for the decoupling error.
		(a) Error as a function of the number of decoupling cycles $N$ for fixed $\beta=1$ (This bound is obtained by inserting Eqs.~\eqref{eq:app_partition_function_boson_gas}--\eqref{eq:app_derivative_partition_function_boson_gas} into Eq.~\eqref{eq:main_bound_appendix}). We see that the decoupling error decays as $1/N$.
		(b) Error as a function of $\beta$ for fixed $N=10$ (This bound is obtained by inserting Eqs.~\eqref{eq:app_partition_function_boson_gas}--\eqref{eq:app_derivative_partition_function_boson_gas} into Eq.~\eqref{eq:main_bound_appendix}). We see that the decoupling error increases as $\mathcal{O}(\sqrt{\beta})$ for small $\beta$ and eventually saturates at a constant due to finite $N$.}
\end{figure}
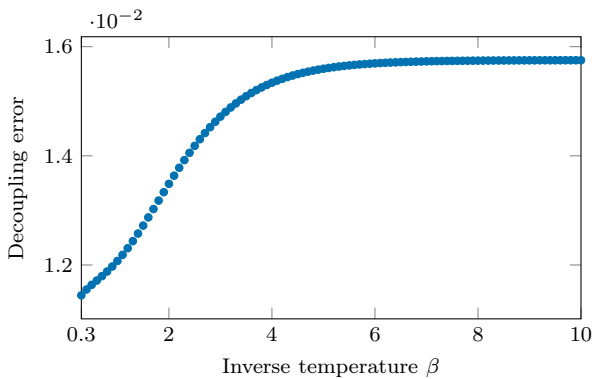

 \addtocontents{toc}{\vspace{-.6em}}\section{Concluding remarks}\label{sec:end}

We have presented a general framework to analytically compute quantitative bounds for the efficiency of dynamical decoupling in a vast class of models describing the interaction between a two-level system and a structured boson bath, the latter being in a thermal state with arbitrary temperature. Our results are gathered in Theorem~\ref{thm:spin--boson} and can be interpreted as a general recipe to test the convergence speed of dynamical decoupling for all models in the class, no matter the structure of the qubit--field coupling---be it longitudinal as in the Jaynes--Cummings and Rabi model, purely transversal as in the dephasing-type model described in the motivating example, or anything in the middle. Our results neither have restrictions on the structure of the boson modes (as long as there are at most countably many) nor the coupling constants between the qubit and each mode. This makes our results potentially adaptable to a vast range of experimental scenarios.

In all such cases, our analytic results show that (i) the Trotter error corresponding to a finite number $N$ of steps decreases as $\mathcal{O}(N^{-1})$, a result which (as pointed out in Refs.~\cite{Burgarth2023,becker2024convergence}) is not obvious a priori because of the unbounded nature of our models; (ii) while our bounds are not sharp, they fully capture the asymptotic behavior of the decoupling error, both in relation with the number of steps as well as the temperature of the bath. Model-specific and/or state-specific bounds obtained through our approach might improve the matching of our results with the simulations while retaining the correct description of the asymptotic behavior.

Besides being applicable more broadly, our analytical approach offers several advantages over a numerical or perturbative treatment of dynamical decoupling via filter functions.  The latter two methods rely on an arbitrarily chosen cut-off in the bath dimension that neither has any physical interpretation nor predictive power. Thus, the cut-off dimension has to be tuned \emph{a posteriori} to match the experimental data. On the other hand, our analytical approach shows that the temperature of the bath takes the role of a \emph{physical} cut-off parameter. In this sense, our analytical treatment is fully \emph{ab initio} and only depends on real physical parameters. Nevertheless, doing analytics requires additional knowledge about the physical error model; in particular, about the bath resonance frequencies and coupling strengths between the system and the bath. Since these quantities are usually not known in real experiments, analytical decoupling bounds can serve as a first step towards a more rigorous analytical treatment of \emph{bath spectroscopy}~\cite{Mims1961,Mims1965,Schweiger2001,Alvarez2011,Reinhard2012,Romach2015} in the presence of bosonic baths, where the common qu$d$it assumption does not hold (for a review on bath spectroscopy for dephasing noise using dynamical decoupling, see Ref.~\cite{Szankowski2017}\@). This topic has attracted a lot of attention recently as it is crucial for the technological development of robust quantum devices~\cite{Sung2021,Khan2024,Wang2024,HernandezGomez2018,BarGill2012}\@.

From a physical perspective, our bounds revealed a surprising qualitative behavior of dynamical decoupling: One would expect that low-frequency noise is easier to decouple than high-frequency noise. That is because the system-bath interaction time scale is slower for low-frequency noise and therefore slower decoupling suffices to filter out the noise. While this intuition holds true for a system qubit coupled to two-level baths, it turns out to be the opposite for an oscillator bath in a thermal state. For the latter, a low minimal frequency implies a small energy barrier for the creation of a photon. Therefore, low-frequency noise leads to a high occupancy number of the thermal state, which makes it less regular. In turn, dynamical decoupling becomes harder to achieve. This behavior can be observed from our bounds, where it enters via the factors $1/m$ and $1/m^2$.

Furthermore, our results about spin--boson models crucially rely on a new Trotter convergence theorem for multiple Hamiltonians (Theorem~\ref{thm:Trotter_convergence}), which extends a similar result already reported in Ref.~\cite{Burgarth2023} on the Trotterization between two Hamiltonians. Other than constituting the mathematical backbone of our results about spin--boson models, this theorem is \textit{per se} of major interest for dynamical decoupling. Indeed, save from specific examples like the model described in the motivating example or spin--spin interactions~\cite{Roetteler2008}, dynamical decoupling usually requires a decoupling set of cardinality larger than two. As such, other than being applicable to the spin--boson models considered in this paper, this theorem could foster a plethora of new quantitative estimates for the convergence speed of dynamical decoupling in many other systems of theoretical and practical interest.

We conclude by commenting on four directions in which the results of the present paper could be further extended or improved: (i) replacing the qubit with a system of finite, but arbitrarily large, dimension (e.g. a qu$d$it or a family of multiple qubits); (ii) investigating the continuum limit for the boson bath; (iii) relaxing the assumption $\sum_k\omega_k^2|f_k|^2<\infty$ that was used to obtain our estimates; (iv) extending the bounds to other decoupling schemes with a faster convergence rate in the number of decoupling steps.

Point~(i), as already remarked in the text, does not exhibit particular technical challenges regarding the estimate procedure.
In this case, the bosonic bath could involve different coupling strengths (form factors) to each qu$d$it-level or qubit. 
All calculations would remain the same otherwise.

Point~(ii) is more delicate, as the existence of the Gibbs state in such a case is compromised. Here, thermal states are implicitly defined via the KMS condition~\cite{bratteli2013operator}, and their existence and uniqueness must be carefully examined. Besides, the occupancy number basis, which plays a key role in computing all Hilbert--Schmidt norms involved in our bounds, ceases to be well-defined in the continuum limit. 

About point~(iii), we can distinguish two cases. If $\sum_k\omega_k^2|f_k|^2=\infty$ but still $\sum_k|f_k|^2<\infty$ (normalizable form factor), then dynamical decoupling is still expected to work, but, as Theorem~\ref{thm:spin--boson} does not apply, one could conjecture that a scenario analogous to the one studied in Refs.~\cite{Burgarth2023} is unveiled: decoupling works, but with a lower convergence speed in the number $N$ of steps, say, $\mathcal{O}(N^{-\delta})$ for some $0<\delta<1$. Such an expectation seems to be backed up by the recent results in Ref.~\cite{becker2024convergence} on the convergence of the Trotter product formula for two Hamiltonians and will be the object of future research. If, instead, $\sum_k|f_k|^2=\infty$ (non-normalizable form factor), then, as discussed in Refs.~\cite{lonigro2022generalized,lonigro2023self,lill2023self}, the self-adjointness domain of the models considered in this paper acquires a nontrivial dependence on the coupling itself, potentially invalidating the validity itself of dynamical decoupling. In fact, negative counterexamples were presented in Refs.~\cite{lonigro2022quantum,lonigro2022regression}, where two spin--boson models with continuous bath and flat (thus, non-normalizable) form factors were shown to exactly satisfy the quantum regression theorem---thus necessarily disproving the possibility of achieving decoupling. Incidentally, comparing these negative results with the positive ones shown in the present paper clearly shows that the presence of ultraviolet divergences in models of matter--field interaction can have practical, experimentally relevant consequences.

Finally, about point~(iv), there are several possible ways to improve the convergence speed of dynamical decoupling to $1/N^2$ or even better. An obvious route would be to employ higher-order Trotterization schemes. Obtaining bounds for this approach would be relatively straightforward by generalizing the higher-order Trotter bounds from Ref.~\cite{Burgarth2023} to multiple Hamiltonian components and applying them to dynamical decoupling following the recipe that is established in this paper. Another route would be to use non-equidistant pulses as in Uhrig dynamical decoupling~\cite{Uhrig2007, Uhrig2008}\@. We expect that decoupling error bounds for such methods are obtainable by using the framework established in Ref.~\cite[Appendix~C.1]{Burgarth2023} and again following the recipe from this paper to convert pure state into mixed state error bounds.\medskip

 \subsection*{Acknowledgments} A.H. was partially supported by the Sydney Quantum Academy. D.L. acknowledges financial support by Friedrich-Alexander-Universit\"at Erlangen-N\"urnberg through the funding program ``Emerging Talent Initiative'' (ETI), and was partially supported by the project TEC-2024/COM-84 QUITEMAD-CM.

\newpage
\appendix

% In the appendix, reset equation numbering per section
\renewcommand{\theequation}{\thesection.\arabic{equation}}
\setcounter{equation}{0}

\addtocontents{toc}{\vspace{-.6em}}\section{The occupancy number basis}\label{appendix:Fock}

Let us consider a boson bath Hamiltonian $H_{\rm B}=\sum_{k\in K}\omega_ka_k^\dag a_k$ acting on the Bose--Fock space $\mathcal{F}$ with single-particle space $\mathfrak{h}$, as defined in Section~\ref{sec:spin-boson} of the main text. Recall that $\mathfrak{h}=\mathbb{C}^d$ and $K=\{1,\dots,d\}$ for a boson field with a finite number $d$ of modes, while we have $\mathfrak{h}=\ell^2$ and $K=\mathbb{N}$ for countably many modes. We shall start our analysis by introducing the orthonormal basis that will be used in the computation of all trace and Hilbert--Schmidt norms involved in our bounds---the occupancy number basis.

To begin with, the single-particle Hilbert space $\mathfrak{h}$ admits a complete orthonormal set $(e_k)_{k\in K}$ of eigenvectors of $H_{\rm B}$, each being defined as the sequence
\begin{equation}
    e_k:=(\delta_{kh})_{h\in K}=\begin{cases}
        1,&h=k\\
        0,&h\neq k,
    \end{cases}
\end{equation}
whence immediately satisfying $H_{\rm B}e_k=\omega_k e_k$. This complete orthonormal set in $\mathfrak{h}$ serves as the primary building block of a complete orthonormal set of the whole Bose--Fock space $\mathcal{F}$, defined as follows. Given a finite subset $\{k_1,\dots,k_N\}$ of the momentum set $K$, and $n_{k_1},\dots,n_{k_N}\in\mathbb{N}$ nonzero integers, define
\begin{equation}
    \ket{n_{k_1},n_{k_2},\dots,n_{k_N}}:=S\left(e_{k_1}^{\otimes n_{k_1}}\otimes e_{k_2}^{\otimes n_{k_2}}\otimes\cdots e_{k_N}^{\otimes n_{k_N}}\right),
\end{equation}
where $S$ is the symmetrization operator. The action of the annihilation and creation operators on these states can be directly computed: given $k_j\in\{k_1,\dots,k_N\}$, one has
\begin{align}
    a_{k_j}\ket{n_{k_1},\ldots,n_{k_j},\ldots,n_{k_N}}&=\sqrt{n_{k_j}}\ket{n_{k_1},\ldots,n_{k_j}-1,\ldots,n_{k_N}};\\
    a^\dag_{k_j}\ket{n_{k_1},\ldots,n_{k_j},\ldots,n_{k_N}}&=\sqrt{n_{k_j}+1}\ket{n_{k_1},\ldots,n_{k_j}+1,\ldots,n_{k_N}},
\end{align}
while, given $k\in K\setminus\{k_1,\ldots,k_N\}$,
\begin{align}
    a_{k}\ket{n_{k_1},\ldots,n_{k_N}}&=0;\\
    a^\dag_{k}\ket{n_{k_1},\ldots,n_{k_N}}&=\ket{n_{k_1},\ldots,n_{k_j},1_k},
\end{align}
the latter notation signaling that the new state corresponds to a family of integers $n_{k_1},\dots,n_{k_N},n_k$ with $n_k=1$. As such, the operator $\mathcal{N}_k:=a_k^\dag a_k$ serves as the \textit{number operator} corresponding to the $k$-th mode of the boson field, since, in all cases,
\begin{equation}
    \mathcal{N}_k\ket{n_{k_1},\ldots,n_{k_j},\ldots,n_{k_N}}=n_k\ket{n_{k_1},\ldots,n_{k_j},\ldots,n_{k_N}},
\end{equation}
where of course $n_k=0$ unless $k\in\{k_1,\dots,k_N\}$.

Equivalently, and more conveniently, we can index all these states as $\{\ket{\bm{n}}\}_{\bm{n}\in\mathbb{N}^K_0}$, where $\mathbb{N}^K_0$ is the space of all sequences of integers $(n_k)_{k\in K}$ such that $n_k=0$ for all but finitely many $k\in K$. Note that, even if $K=\mathbb{N}$, this is a countable set since it can be obtained as the countable union of countable sets (the set of all sequences with $1$ nonzero integer, the set of all sequences with $2$ nonzero integers, etc.). With this notation, for every $k\in K$ we have
\begin{align}\label{eq:ak}
    a_{k}\ket{\bm{n}}&=\sqrt{n_k}\ket{n_1,n_2,\dots,n_k-1,\dots};\\
    a^\dag_{k}\ket{\bm{n}}&=\sqrt{n_{k}+1}\ket{n_1,n_2,\ldots,n_{k}+1,\ldots};\\\label{eq:nk}
    \mathcal{N}_k\ket{\bm{n}}&=n_k\ket{\bm{n}},
\end{align}
with the convention that, whenever one of the entries is $-1$ (which happens at the right-hand side of Eq.~\eqref{eq:ak} whenever $n_k=0$), the vector equals zero. $\{\ket{\bm{n}}\}_{\bm{n}\in\mathbb{N}^K_0}$ is a complete orthonormal set of $\mathcal{F}$, which is usually referred to as the \textit{occupancy number} basis. This set is particularly useful since it is invariant under the action of $a_k$ and $a_k^\dag$. Furthermore, since $H_{\rm B}=\sum_{k\in K}\omega_ka_k^\dag a_k=\sum_{k\in K}\omega_k\mathcal{N}_k$, Eq.~\eqref{eq:nk} readily implies that any vector $\ket{\bm{n}}$ is also an eigenvector of $H_{\rm B}$,
\begin{equation}
    H_{\rm B}\ket{\bm{n}}=\left(\sum_{k\in K}n_k\omega_k\right)\ket{\bm{n}},
\end{equation}
the sum being finite since $n_k=0$ for all but finitely many values of $k$. As such, this is in fact a complete orthonormal set of eigenvectors of $H_{\rm B}$. They are also eigenvectors of the total number operator $\mathcal{N}=\sum_{k\in K}\mathcal{N}_k$, with
\begin{equation}
    \mathcal{N}\ket{\bm{n}}=\left(\sum_{k\in K}n_k\right)\ket{\bm{n}}.
\end{equation}

\setcounter{equation}{0}
\addtocontents{toc}{\vspace{-.6em}}\section{Grand-canonical partition function and the Gibbs state}\label{appendix:Gibbs}

Now let us assume that the boson field described above satisfies Assumption~\ref{assumption:omega}. Let $\beta>0$, and $\mu$ be a real parameter strictly smaller than $\min_{k\in K}\omega_k$, that is, $m:=\inf_{k\in K}(\omega_k-\mu)>0$; physically, $\beta$ is a thermodynamic quantity related to the temperature $T$ by $\beta=1/(k_{\rm B}T)$, and $\mu$ is the chemical potential of the field. Notice that $H_{\rm B}-\mu\mathcal{N}$, defined on the domain of $\mathcal{N}$, is a nonnegative operator since
\begin{equation}\label{eq:loss}
   H_{\rm B}-\mu\mathcal{N}=\sum_{k\in K}(\omega_k-\mu)a_k^\dag a_k,
\end{equation}
and $\omega_k-\mu>0$ for all $k\in K$. Correspondingly, 
\begin{equation}
    \rho_{\rm B}=\frac{\e^{-\beta (H_{\rm B}-\mu\mathcal{N})}}{\tr\e^{-\beta (H_{\rm B}-\mu\mathcal{N})}}\label{eq:Gibbs_mu}
\end{equation}
is the grand canonical Gibbs state of the field at temperature $T$ and chemical potential $\mu$~\cite{bratteli2013operator}. As is clear from Eq.~\eqref{eq:loss}, and already remarked in the main text, there is no loss of generality in assuming $\mu=0$ (and thus $m=\inf_{k\in K}\omega_k$) since we can always redefine the field modes accordingly; as such, we will hereafter work under this simplifying choice. The Gibbs state is formally given by $\rho_{\rm B}:=Z(\beta)^{-1}\e^{-\beta H_{\rm B}}$, where we introduced the grand canonical partition function,
\begin{equation}
    Z(\beta):=\tr\e^{-\beta H_{\rm B}}.
\end{equation}
At this stage, we still do not know whether $Z(\beta)$ is actually a finite quantity. However, it is a known fact (see e.g.~\cite[Proposition 5.2.27]{bratteli2013operator}) that Assumption~\ref{assumption:omega}(ii) for some $\beta$ is equivalent to the finiteness of $Z(\beta)$. For our purposes, it will be useful to recall the explicit calculation of $Z(\beta)$, together with some useful properties of it. 

Before starting, let us recall the following definition. Given a family $(c_j)_{j\in\mathbb{N}}$ of positive real numbers, their infinite product is defined via
\begin{equation}\label{eq:infinite_product}
    \prod_{j\in\mathbb{N}}c_j=\exp\left(\sum_{j\in\mathbb{N}}\log c_j\right),
\end{equation}
and, as such, it converges if and only if the series $\sum_j\log c_j$ converges. Instead, when the series $\sum_j\log c_j$ diverges to infinity, we say that $\prod_jc_j$ diverges to zero.
\begin{proposition}\label{prop:partition_function}
    Let the boson field satisfy Assumption~\ref{assumption:omega}\@. Then, for all $\beta>0$, $Z(\beta)$ is finite and given by
    \begin{equation}
        Z(\beta)=\prod_{k\in K}\frac{1}{1-\e^{-\beta\omega_k}}.
    \end{equation}
    If, in addition, Assumption~\ref{assumption:bounds_ii} is satisfied, the function $\mathbb{R}_+\ni\beta\rightarrow Z(\beta)$ is twice differentiable.
\end{proposition}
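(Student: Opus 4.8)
The plan is to reduce everything to the occupancy number basis $\{\ket{\bm{n}}\}_{\bm{n}\in\mathbb{N}_0^K}$ constructed in Appendix~\ref{appendix:Fock}, in which $H_{\rm B}$ acts diagonally. Computing the trace in this basis gives
\begin{equation}
Z(\beta)=\sum_{\bm{n}\in\mathbb{N}_0^K}\prod_{k\in K}\e^{-\beta n_k\omega_k}.
\end{equation}
Since every term is nonnegative, I would justify factorizing this sum into a product of geometric series by truncating to the first $d$ modes $K_d=\{1,\dots,d\}$ and applying monotone convergence as $d\to\infty$: the partial sums over sequences supported in $K_d$ increase to the full sum, and for finite $K_d$ the factorization is elementary. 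Each geometric series converges because Assumption~\ref{assumption:omega}(i) forces $\e^{-\beta\omega_k}\leq\e^{-\beta m}<1$, yielding $\sum_{n\geq 0}\e^{-\beta n\omega_k}=(1-\e^{-\beta\omega_k})^{-1}$ and hence the claimed product formula.

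Next I would show the infinite product is finite. By the definition~\eqref{eq:infinite_product}, this amounts to the convergence of $-\sum_{k\in K}\log(1-\e^{-\beta\omega_k})$. Using the elementary estimate $-\log(1-x)\leq x/(1-x)$ for $0\leq x<1$ together with $\e^{-\beta\omega_k}\leq\e^{-\beta m}$, each summand is bounded by $(1-\e^{-\beta m})^{-1}\e^{-\beta\omega_k}$, so Assumption~\ref{assumption:omega}(ii) gives absolute convergence of the series and thus $Z(\beta)<\infty$ for all $\beta>0$.

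For the twice differentiability I would pass to $g(\beta):=\log Z(\beta)=-\sum_{k\in K}\log(1-\e^{-\beta\omega_k})=:\sum_{k\in K}g_k(\beta)$, which converts the product into a sum amenable to term-by-term differentiation. A direct computation gives
\begin{equation}
g_k'(\beta)=-\frac{\omega_k\e^{-\beta\omega_k}}{1-\e^{-\beta\omega_k}},\qquad g_k''(\beta)=\frac{\omega_k^2\e^{-\beta\omega_k}}{(1-\e^{-\beta\omega_k})^2}.
\end{equation}
On any compact interval $[\beta_1,\beta_2]\subset\mathbb{R}_+$, monotonicity in $\beta$ and the bound $(1-\e^{-\beta\omega_k})^{-1}\leq(1-\e^{-\beta_1 m})^{-1}$ yield $|g_k'(\beta)|\leq C_1\,\omega_k\e^{-\beta_1\omega_k}$ and $|g_k''(\beta)|\leq C_2\,\omega_k^2\e^{-\beta_1\omega_k}$. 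These majorants are summable by Assumption~\ref{assumption:bounds_ii} (with $p=1,2$): although that assumption is stated with the weight $\e^{-2\beta\omega_k}$, it holds for all $\beta>0$, so choosing $2\beta=\beta_1$ covers the weights $\e^{-\beta_1\omega_k}$ appearing here. The Weierstrass $M$-test then gives locally uniform convergence of $\sum_k g_k'$ and $\sum_k g_k''$, so by the standard term-by-term differentiation theorem $g\in C^2(\mathbb{R}_+)$; since $Z=\e^{g}$ with $g$ twice differentiable, $Z$ is twice differentiable as well.

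The main obstacle I anticipate is the careful bookkeeping in the infinite-mode case: justifying the interchange of the trace sum with the infinite product (which rests on nonnegativity and monotone convergence) and confirming that the product converges to a finite, nonzero value rather than diverging to zero. Once the representation through $\log Z$ is in place, the differentiability is routine, the only point of attention being that the first and second derivatives carry the weights $\omega_k\e^{-\beta\omega_k}$ and $\omega_k^2\e^{-\beta\omega_k}$, which is precisely what Assumption~\ref{assumption:bounds_ii} is designed to control.
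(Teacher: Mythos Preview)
Your proposal is correct and follows essentially the same strategy as the paper: compute the trace in the occupancy number basis, factorize into a product of geometric series, control the infinite product via its logarithm using Assumption~\ref{assumption:omega}(ii), and obtain twice differentiability by term-by-term differentiation of $\log Z$ with summable majorants supplied by Assumption~\ref{assumption:bounds_ii}. The only differences are stylistic: where the paper invokes the limit comparison test via the asymptotic $-\log(1-\e^{-\beta\omega_k})\sim\e^{-\beta\omega_k}$, you use the explicit inequality $-\log(1-x)\leq x/(1-x)$ and the Weierstrass $M$-test on compact $\beta$-intervals, which is arguably cleaner; and you make the monotone-convergence justification for the sum--product interchange explicit, whereas the paper writes the factorization directly. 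Your observation that Assumption~\ref{assumption:bounds_ii}, being stated for all $\beta>0$, covers the weights $\e^{-\beta_1\omega_k}$ by taking $2\beta=\beta_1$ is exactly the point the paper uses implicitly. The worry about the product ``diverging to zero'' is immediately dispelled since each factor $(1-\e^{-\beta\omega_k})^{-1}\geq 1$.
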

\begin{proof}
We consider $K=\mathbb{N}$, i.e. countably many modes; the finite case is easily recovered as a particular case.
Let us start by showing that the function
\begin{equation}
    \mathbb{R}_+\ni\beta\mapsto\prod_{k\in\mathbb{N}}\frac{1}{1-\e^{-\beta\omega_k}}=\exp\left(\sum_{k\in\mathbb{N}}\log\frac{1}{1-\e^{-\beta\omega_k}}\right)
\end{equation}
is well-defined and twice differentiable. For the first one, we need to show that the series
\begin{equation}
   z(\beta):= \sum_{k\in\mathbb{N}}\log\frac{1}{1-\e^{-\beta\omega_k}}
\end{equation}
converges. For this purpose, it suffices to note that, by Assumption~\ref{assumption:omega}(ii), the series $\sum_{k\in K}\e^{-\beta\omega_k}$ converges, which necessarily implies $\lim_{k\to\infty}\e^{-\beta\omega_k}=0$. But then
\begin{equation}  \label{eq:asymptotically}
    \log\frac{1}{1-\e^{-\beta\omega_k}}\sim\e^{-\beta\omega_k}\qquad (k\to\infty),
\end{equation}
whence the series above, and thus the corresponding infinite product, converges by the limit comparison test. Thus $\beta\mapsto z(\beta)$ is well-defined and so is its exponential. To show it has the desired regularity, we notice that, by Eq.~\eqref{eq:asymptotically}, for every $p\in\mathbb{N}$ we also have 
\begin{equation}
   \frac{\mathrm{d}^p}{\mathrm{d}\beta^p}\log\frac{1}{1-\e^{-\beta\omega_k}}\sim \frac{\mathrm{d}^p}{\mathrm{d}\beta^p}\e^{-\beta\omega_k}=(-1)^p\,\omega_k^p\e^{-\beta\omega_k}\qquad(k\to\infty).
\end{equation}
By our assumptions, $\sum_k\omega_k^p\e^{-\beta\omega_k}<\infty$ for all $\beta\in\mathbb{R}$ and $p\in\{0,1,2\}$. Therefore, we can differentiate under the series sign:
\begin{equation}
    \frac{\mathrm{d}^p}{\mathrm{d}\beta^p}z(\beta)=\sum_{k\in K}\frac{\mathrm{d}^p}{\mathrm{d}\beta^p}\log\frac{1}{1-\e^{-\beta\omega_k}}<\infty,
\end{equation}
where again we used the limit comparison test. Thus $\beta\mapsto z(\beta)$ is twice differentiable and so is $\exp(z(\beta))$.

As such, we only need to show that $\tr\e^{-\beta H_{\rm B}}$ is actually equal to the quantity in Eq.~\eqref{eq:infinite_product}. To this end, we compute the trace on the occupancy number basis $\{\ket{\bm{n}}\}_{\bm{n}\in\mathbb{N}^K_0}$ defined in the previous section. We have
\begin{align}
    \tr\e^{-\beta H_{\rm B}}&=\sum_{\bm{n}\in\mathbb{N}^K_0}\braket{\bm{n}|\e^{-\beta H_{\rm B}}|\bm{n}}\nonumber\\
    &=\sum_{\bm{n}\in\mathbb{N}^K_0}\exp\left(-\beta\sum_{k\in K}n_k\omega_k\right)\nonumber\\
    &=\sum_{\bm{n}\in\mathbb{N}^K_0}\prod_{k\in K}\left(\e^{-\beta\omega_k}\right)^{n_k}\nonumber\\
    &=\prod_{k\in K}\sum_{n\in\mathbb{N}}\left(\e^{-\beta\omega_k}\right)^{n}\nonumber\\
    &=\prod_{k\in K}\frac{1}{1-\e^{-\beta\omega_k}},
\end{align}
where in the last step we computed a geometric series and used the fact that $\omega_k>0$ for all $k\in K$. This completes the proof. 
    \end{proof}
    
    We remark that, as a result, the grand canonical partition function of a boson field with modes $(\omega_k)_{k\in K}$ equals the product of the partition functions of single-mode boson fields with each mode being $\omega_1,\omega_2,\dots$ Physically speaking, this could be regarded as a manifestation of the fact that $H_{\rm B}$ describes a family of \textit{noninteracting} bosons.

\setcounter{equation}{0}
\addtocontents{toc}{\vspace{-.6em}}\section{Calculating the Hilbert--Schmidt norms}\label{sec:appendix_calculations}

We will now prove Theorem~\ref{thm:main_result_bound}\@. That is, we will prove that, under our assumptions, the state
\begin{equation}
    \rho_{\rm S}\otimes\rho_{\rm B} \in \bigcap_{j=0}^{L-1} \Dom \ad^2_{U_j H U_j^\dagger}\cap \Dom \ad_{U_j H U_j^\dagger}\sum_{i=0}^{j-1} \ad_{U_i H U_i^\dagger,}\label{eq:domain_condition}
\end{equation}
i.e.\ satisfies the domain condition to apply Corollary~\ref{cor:Trotter_Liouville}\@. Here, $\rho_{\rm S}$ is an arbitrary density matrix of the system, $\rho_{\rm B}$ is the Gibbs state~\eqref{eq:Gibbs_mu} (with $\mu=0$), $H$ being the Hamiltonian for the spin--boson model, and the $U_j=v_j\otimes I_\rmb$ are unitaries that act on the finite-dimensional system component alone.
Again, without loss of generality, we will put ourselves in the countable case $K=\mathbb{N}$ (and thus $\mathfrak{h}=\ell^2$, and $\bm{n}\in\mathbb{N}_0^{\mathbb{N}}$); the cases of finitely many modes follow identically.
Our proof is done by explicitly calculating all Hilbert--Schmidt norms of the summands appearing in $\mathbf{ad}_{U_j H U_j^\dagger}\mathbf{ad}_{U_i H U_i^\dagger}\rho_\rms\otimes\rho_\rmb$ and showing that they are finite for all pairs $i,j$ ($j=0,\dots,L-1$; $i=0,\dots, j$).
These computations guarantee the domain condition~\eqref{eq:domain_condition} and immediately lead to a bound of the error for dynamical decoupling via Corollary~\ref{cor:Trotter_Liouville}\@.
A loose form of this bound is presented in the main text in Theorem~\ref{thm:spin--boson}\@.

We shall start by introducing the following compact notation. For any $f\in\ell^2$, we set
\begin{equation}
    \a{f}=\sum_{k\in \mathbb{N}}f_k^*a_k,\qquad \adag{f}=\sum_{k\in \mathbb{N}}f_ka_k^\dag.
\end{equation}
With this notation, the model in Eq.~\eqref{eq:spin--boson} reads, more compactly,
\begin{equation}
    H=H_{\rm S}+H_{\rm B}+B\,\adag{f}+B^\dagger\,\a{f},
\end{equation}
where again tensor products are understood.

\subsection{The general case}

Throughout this section, given $\bm{n}\in\mathbb{N}^{\mathbb{N}}_0$, we will use the notation $\ket{\bm{n}(n_k\rightarrow n_k\pm 1)}$ for\\$\ket{n_1,\dots,n_k\pm 1,\dots,}$.

\begin{lemma}\label{lemma:commutation}
Let the boson field satisfy Assumption~\ref{assumption:omega}\@. Furthermore, let $\bm{n}\in\mathbb{N}^\mathbb{N}_0$, and $f\in\ell^2$ (Assumption~\ref{assumption:couplings}). Then the following properties hold:
\begin{itemize}
    \item[(i)] $[a(f),a^\dag(f)]\ket{\bm{n}}=\|f\|^2\ket{\bm{n}}$;
    \item[(ii)] $\left[a(f),\mathrm{e}^{-\beta H_{\mathrm{B}}}\right]\left|\bm{n}\right\rangle  =\sum_{k\in\mathbb{N}}f_{k}^{*}\sqrt{n_{k}}\mathrm{e}^{-\beta\sum_{i}\omega_{i}n_{i}}\left(1-\mathrm{e}^{+\beta\omega_{k}}\right)\left|\bm{n}(n_{k}\rightarrow n_{k}-1)\right\rangle $;
    \item[(iii)] $\left[a^{\dagger}(f),\mathrm{e}^{-\beta H_{\mathrm{B}}}\right]\left|\bm{n}\right\rangle  =\sum_{k\in\mathbb{N}}f_{k}\sqrt{n_{k}+1}\mathrm{e}^{-\beta\sum_{i}\omega_{i}n_{i}}\left(1-\mathrm{e}^{-\beta\omega_{k}}\right)\left|\bm{n}(n_{k}\rightarrow n_{k}+1)\right\rangle$.
\end{itemize}
In addition, if $\omega f\in\ell^2$ (Assumption~\ref{assumption:bounds_i}), then
\begin{itemize}
    \item[(iv)] $[a(f),H_{\rm B}]\ket{\bm{n}}=\a{\omega f}\ket{\bm{n}}$;
    \item[(v)] $[a^\dag(f),H_{\rm B}]\ket{\bm{n}}=-\adag{\omega f}\ket{\bm{n}}$.
\end{itemize}
\end{lemma}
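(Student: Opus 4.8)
The plan is to verify all five identities by direct computation on the occupancy number basis $\{\ket{\bm{n}}\}$, using the explicit action of $a_k$, $a_k^\dagger$ and $H_{\rm B}$ recalled in Appendix~\ref{appendix:Fock}. The structural fact I would exploit throughout is that each $\ket{\bm{n}}$ is an eigenvector of $H_{\rm B}$ with eigenvalue $\sum_i\omega_i n_i$, so that $\e^{-\beta H_{\rm B}}$ acts diagonally while $\a{f}$ and $\adag{f}$ merely lower or raise individual occupation numbers.

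First, for (i), I would expand $\comm{\a{f}}{\adag{f}}=\sum_{k,l}f_k^* f_l\,\comm{a_k}{a_l^\dagger}$ and invoke the canonical commutation relation $\comm{a_k}{a_l^\dagger}=\delta_{kl}$ to collapse the double sum to $\sum_k|f_k|^2=\norm{f}^2$. For (iv) and (v) I would first establish the single-mode identities $\comm{a_k}{H_{\rm B}}=\omega_k a_k$ and $\comm{a_k^\dagger}{H_{\rm B}}=-\omega_k a_k^\dagger$ (immediate from $\comm{a_k}{a_j^\dagger a_j}=\delta_{kj}a_j$), then sum against $f_k^*$ and $f_k$ respectively to obtain $\a{\omega f}$ and $-\adag{\omega f}$, using that the $\omega_k$ are real.

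For (ii) and (iii)---the only parts involving the exponential---I would write $\comm{\a{f}}{\e^{-\beta H_{\rm B}}}\ket{\bm{n}}=\a{f}\e^{-\beta H_{\rm B}}\ket{\bm{n}}-\e^{-\beta H_{\rm B}}\a{f}\ket{\bm{n}}$ and evaluate each term by letting $\e^{-\beta H_{\rm B}}$ act on eigenvectors: the first term carries the factor $\e^{-\beta\sum_i\omega_i n_i}$, while the lowered state $\ket{\bm{n}(n_k\to n_k-1)}$ appearing in the second term carries $\e^{-\beta(\sum_i\omega_i n_i-\omega_k)}$. Subtracting produces the factor $(1-\e^{+\beta\omega_k})$ claimed in (ii); the creation case (iii) is identical, with the shift $\omega_k\to-\omega_k$ producing $(1-\e^{-\beta\omega_k})$.

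The only genuine subtlety---and the step I would treat with care---is the domain and convergence bookkeeping needed to make each identity a valid operator equation on the dense span of the $\ket{\bm{n}}$. Since each $\bm{n}$ has only finitely many nonzero entries, the annihilation sums are finite and cause no trouble; the creation sums, however, receive a contribution from every mode (including those with $n_k=0$), so I would confirm that the resulting vector lies in $\F$ by computing its squared norm, e.g.\ $\norm{\adag{f}\ket{\bm{n}}}^2=\sum_k|f_k|^2(n_k+1)$, which is finite precisely under Assumption~\ref{assumption:couplings} ($f\in\ell^2$). The analogous bound for (iv)--(v) instead requires $\omega f\in\ell^2$, namely Assumption~\ref{assumption:bounds_i}, which is exactly why those hypotheses are invoked only in the respective parts of the statement.
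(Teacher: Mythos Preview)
Your proposal is correct and follows essentially the same route as the paper: direct computation on the occupancy number basis, first establishing the single-mode identities $[a_k,H_{\rm B}]=\omega_ka_k$ and $[a_k^\dagger,H_{\rm B}]=-\omega_ka_k^\dagger$ for (iv)--(v), then using the eigenvector property $\e^{-\beta H_{\rm B}}\ket{\bm{n}}=\e^{-\beta\sum_i\omega_in_i}\ket{\bm{n}}$ together with the shifted eigenvalue on $\ket{\bm{n}(n_k\to n_k\pm1)}$ for (ii)--(iii). Your explicit check that $\adag{f}\ket{\bm{n}}\in\mathcal{F}$ via $\sum_k|f_k|^2(n_k+1)<\infty$ is a bit more careful than the paper, which simply remarks that the number basis is stable under polynomials in $a_k,a_k^\dagger$ so no domain issues arise.
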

\begin{proof}
    All calculations that follow are legitimate since, as previously pointed out, the set $\{\ket{\bm{n}}\}_{\bm{n}\in\mathbb{N}^{\mathbb{N}}_0}$ is mapped into itself by any polynomial in $a_k$ and $a_k^\dag$ (and thus by $H_{\rm B}$ as well), whence no domain issues arise. We start by noticing the following relations:
    \begin{align}\label{eq:comm1}
        [a_k,H_{\rm B}]\ket{\bm{n}}&=\sum_{k'\in\mathbb{N}}\omega_{k'}[a_k,a^\dag_{k'}a_{k'}]\ket{\bm{n}}\nonumber\\
        &=\sum_{k'\in\mathbb{N}}\omega_{k'}\left([a_k,a_{k'}^\dag]a_{k'}+a^\dag_{k'}[a_k,a_{k'}]\right)\ket{\bm{n}}\nonumber\\
        &=\omega_ka_k\ket{\bm{n}},
    \end{align}
    and similarly
    \begin{align}\label{eq:comm1bis}
        [a^\dag_k,H_{\rm B}]\ket{\bm{n}}&=\sum_{k'\in\mathbb{N}}\omega_{k'}[a^\dag_k,a^\dag_{k'}a_{k'}]\ket{\bm{n}}\nonumber\\
        &=\sum_{k'\in\mathbb{N}}\omega_{k'}\left([a^\dag_k,a_{k'}^\dag]a_{k'}+a^\dag_{k'}[a^\dag_k,a_{k'}]\right)\ket{\bm{n}}\nonumber\\
        &=-\omega_ka^\dag_k\ket{\bm{n}}.
    \end{align}
      
    Then:

\begin{align}
    [\a{f},\adag{f}]\ket{\bm{n}}&=\sum_{k,k'\in\mathbb{N}}f_k^*f_{k'}[a_k,a^\dag_{k'}]\ket{\bm{n}}=\|f\|^2\ket{\bm{n}};\\
    [\a{f}, H_{\rm B}]\ket{\bm{n}}&=\sum_{k\in\mathbb{N}}f_k^*[a_k,H_{\rm B}]\ket{\bm{n}}\nonumber\\
    &=\sum_{k\in\mathbb{N}}\omega_kf_ka^\dag_k\ket{\bm{n}}\nonumber\\&=\a{\omega f}\ket{\bm{n}};\\
    [\adag{f}, H_{\rm B}]\ket{\bm{n}}&=\sum_{k\in\mathbb{N}}f_k[a_k,H_{\rm B}]\ket{\bm{n}}\nonumber\\
    &=-\sum_{k\in\mathbb{N}}\omega_kf_ka^\dag_k\ket{\bm{n}}\nonumber\\&=-\adag{\omega f}\ket{\bm{n}}.
    \end{align}
    Furthermore, we have
\begin{align}
\left[a^{\dagger}(f),\mathrm{e}^{-\beta H_{\mathrm{B}}}\right]\left|\bm{n}\right\rangle ={} & \mathrm{e}^{-\beta\sum_{i}\omega_{i}n_{i}}\sum_{k\in\mathbb{N}}f_{k}a_{k}^{\dagger}\left|\bm{n}\right\rangle\nonumber\\ & -\sum_{k\in\mathbb{N}}f_{k}\sqrt{n_{k}+1}\prod_{i}\mathrm{e}^{-\beta\omega_{i}a_{i}^{\dagger}a_{i}}\left|n_{1},\dots,n_{k-1},n_{k}+1,\dots\right\rangle \nonumber\\
={} & \mathrm{e}^{-\beta\sum_{i}\omega_{i}n_{i}}\sum_{k\in\mathbb{N}}f_{k}\sqrt{n_{k}+1}\left|\bm{n}(n_{k}\rightarrow n_{k}+1)\right\rangle \nonumber\\
 & -\sum_{k\in\mathbb{N}}f_{k}\sqrt{n_{k}+1}\left(\prod_{i\neq k}\mathrm{e}^{-\beta\omega_{i}n_{i}}\right)\mathrm{e}^{-\beta\omega_{k}(n_{k}+1)}\left|\bm{n}(n_{j}\rightarrow n_{k}+1)\right\rangle \nonumber\\
={} & \sum_{k\in\mathbb{N}}f_{k}\sqrt{n_{k}+1}\left(\mathrm{e}^{-\beta\sum_{i}\omega_{i}n_{i}}-\mathrm{e}^{-\beta\left(\omega_{k}(n_{k}+1)+\sum_{i\neq k}\omega_{i}n_{i}\right)}\right)\nonumber\\
&\qquad\left|\bm{n}(n_{k}\rightarrow n_{k}+1)\right\rangle \nonumber\\
={} & \sum_{k\in\mathbb{N}}f_{k}\sqrt{n_{k}+1}\mathrm{e}^{-\beta\sum_{i}\omega_{i}n_{i}}\left(1-\mathrm{e}^{-\beta\omega_{k}}\right)\left|\bm{n}(n_{k}\rightarrow n_{k}+1)\right\rangle 
\end{align}
and
\begin{align}
\left[a(f),\mathrm{e}^{-\beta H_{\mathrm{B}}}\right]\left|\bm{n}\right\rangle ={} & \mathrm{e}^{-\beta\sum_{i}\omega_{i}n_{i}}\sum_{k\in\mathbb{N}}f_{k}^{*}a_{k}\left|\bm{n}\right\rangle\nonumber\\ &-\sum_{k\in\mathbb{N}}f_{k}^{*}\sqrt{n_{k}}\prod_{i}\mathrm{e}^{-\beta\omega_{i}a_{i}^{\dagger}a_{i}}\left|n_{1},\dots,n_{k-1},n_{k}-1,\dots\right\rangle \nonumber\\
={} & \mathrm{e}^{-\beta\sum_{i}\omega_{i}n_{i}}\sum_{k\in\mathbb{N}}f_{k}^{*}\sqrt{n_{k}}\left|\bm{n}(n_{k}\rightarrow n_{k}-1)\right\rangle \nonumber\\
 & \quad-\sum_{k\in\mathbb{N}}f_{k}^{*}\sqrt{n_{k}}\left(\prod_{i\neq k}\mathrm{e}^{-\beta\omega_{i}n_{i}}\right)\mathrm{e}^{-\beta\omega_{k}(n_{k}-1)}\left|\bm{n}(n_{k}\rightarrow n_{k}-1)\right\rangle \nonumber\\
={} & \sum_{k\in\mathbb{N}}f_{k}^{*}\sqrt{n_{k}}\left(\mathrm{e}^{-\beta\sum_{i}\omega_{i}n_{i}}-\mathrm{e}^{-\beta\left(\omega_{k}(n_{k}-1)+\sum_{i\neq k}\omega_{i}n_{i}\right)}\right)\nonumber\\
&\qquad\left|\bm{n}(n_{k}\rightarrow n_{k}-1)\right\rangle \nonumber\\
={} & \sum_{k\in\mathbb{N}}f_{k}^{*}\sqrt{n_{k}}\mathrm{e}^{-\beta\sum_{i}\omega_{i}n_{i}}\left(1-\mathrm{e}^{+\beta\omega_{k}}\right)\left|\bm{n}(n_{k}\rightarrow n_{k}-1)\right\rangle\,,
\end{align}
thus completing the proof.
\end{proof}

Next, we compute $\mathbf{ad}_{U_{j}HU_{j}^{\dagger}}\mathbf{ad}_{U_{i}HU_{i}^{\dagger}}(\rho_{S}\otimes\rho_{B})$, which we need for the explicit bounds on the error of dynamical decoupling (Recall that $U_j=v_j\otimes I$, where $v_j\in\mathcal{B}(\HH_\rms)$ is unitary).
In particular, by doing so, we will retrieve $\mathbf{ad}_H^2(\rho_\rms\otimes\rho_\rmb)$ by setting $U_i=U_j=I\otimes I$; notice, however, that for the condition $\rho_\rms\otimes\rho_\rmb\in\Dom\mathbf{ad}_{U_{j}HU_{j}^{\dagger}}\mathbf{ad}_{U_{i}HU_{i}^{\dagger}}$ the $U_k$ do not play any role, as they only act on the finite-dimensional system component.

We start by formally computing $\mathbf{ad}_{U_{j}HU_{j}^{\dagger}}(\rho_{\mathrm{S}}\otimes\rho_{\mathrm{B}})$:
\begin{align}
\mathbf{ad}_{U_{j}HU_{j}^{\dagger}}(\rho_{\mathrm{S}}\otimes\rho_{\mathrm{B}})={} & \left[v_{j}H_\rms v_{j}^{\dagger},\rho_{\mathrm{S}}\right]\otimes\rho_{\mathrm{B}}+\left[v_{j}Bv_{j}^{\dagger},\rho_{\mathrm{S}}\right]\otimes a^{\dagger}(f)\rho_{\mathrm{B}}+\rho_{\mathrm{S}}v_{j}Bv_{j}^{\dagger}\otimes\left[a^{\dagger}(f),\rho_{\mathrm{B}}\right]\nonumber\\&+\left[v_{j}B^{\dagger}v_{j}^{\dagger},\rho_{\mathrm{S}}\right]\otimes a(f)\rho_{\mathrm{B}}+\rho_{\mathrm{S}}v_{j}B^{\dagger}v_{j}^{\dagger}\otimes\left[a(f),\rho_{\mathrm{B}}\right],
\end{align}
where we used the formal relation $[A\otimes B, C\otimes D]=[A,C]\otimes BD+CA\otimes[B,D]$ and the fact that the Gibbs state $\rho_\rmb$ commutes with $H_\rmb$. From this, we can formally compute
\begingroup
\allowdisplaybreaks
\begin{align}
\mathbf{ad}_{U_{j}HU_{j}^{\dagger}}\mathbf{ad}_{U_{i}HU_{i}^{\dagger}}(\rho_{S}\otimes\rho_{B})={} & \left[U_{j}HU_{j}^{\dagger},\left[v_{i}H_\rms v_{i}^{\dagger},\rho_{\mathrm{S}}\right]\otimes\rho_{\mathrm{B}}\right]\nonumber\\
 & +\left[U_{j}HU_{j}^{\dagger},\left[v_{i}Bv_{i}^{\dagger},\rho_{\mathrm{S}}\right]\otimes a^{\dagger}(f)\rho_{\mathrm{B}}\right]\nonumber\\
 &+\left[U_{j}HU_{j}^{\dagger},\rho_{\mathrm{S}}v_{i}Bv_{i}^{\dagger}\otimes\left[a^{\dagger}(f),\rho_{\mathrm{B}}\right]\right]\nonumber\\
 & +\left[U_{j}HU_{j}^{\dagger},\left[v_{i}B^{\dagger}v_{i}^{\dagger},\rho_{\mathrm{S}}\right]\otimes a(f)\rho_{\mathrm{B}}\right]\nonumber\\
 &+\left[U_{j}HU_{j}^{\dagger},\rho_{\mathrm{S}}v_{i}B^{\dagger}v_{i}^{\dagger}\otimes\left[a(f),\rho_{\mathrm{B}}\right]\right]\nonumber\\
={} & \left[v_{j}H_\rms v_{j}^{\dagger},\left[v_{i}H_\rms v_{i}^{\dagger},\rho_{\mathrm{S}}\right]\right]\otimes\rho_{\mathrm{B}}\nonumber\\
 & +\left[v_{j}Bv_{j}^{\dagger},\left[v_{i}H_\rms v_{i}^{\dagger},\rho_{\mathrm{S}}\right]\right]\otimes a^{\dagger}(f)\rho_{\mathrm{B}}\nonumber\\
 &+\left[v_{i}H_\rms v_{i}^{\dagger},\rho_{\mathrm{S}}\right]v_{j}Bv_{j}^{\dagger}\otimes\left[a^{\dagger}(f),\rho_{\mathrm{B}}\right]\nonumber\\
 & +\left[v_{j}B^{\dagger}v_{j}^{\dagger},\left[v_{i}H_\rms v_{i}^{\dagger},\rho_{\mathrm{S}}\right]\right]\otimes a(f)\rho_{\mathrm{B}}\nonumber\\
 &+\left[v_{i}H_\rms v_{i}^{\dagger},\rho_{\mathrm{S}}\right]v_{j}B^{\dagger}v_{j}^{\dagger}\otimes\left[a(f),\rho_{\mathrm{B}}\right]\nonumber\\
 & +\left[v_{j}H_\rms v_{j}^{\dagger},\left[v_{i}Bv_{i}^{\dagger},\rho_{\mathrm{S}}\right]\right]\otimes a^{\dagger}(f)\rho_{\mathrm{B}}\nonumber\\
 &+\left[v_{i}Bv_{i}^{\dagger},\rho_{\mathrm{S}}\right]\otimes\left[H_{\mathrm{B}},a^{\dagger}(f)\rho_{\mathrm{B}}\right]\nonumber\\
 & +\left[v_{j}Bv_{j}^{\dagger},\left[v_{i}Bv_{i}^{\dagger},\rho_{\mathrm{S}}\right]\right]\otimes a^{\dagger}(f)a^{\dagger}(f)\rho_{\mathrm{B}}\nonumber\\
 &+\left[v_{i}Bv_{i}^{\dagger},\rho_{\mathrm{S}}\right]v_{j}Bv_{j}^{\dagger}\otimes\left[a^{\dagger}(f),a^{\dagger}(f)\rho_{\mathrm{B}}\right]\nonumber\\
 & +\left[v_{j}B^{\dagger}v_{j}^{\dagger},\left[v_{i}Bv_{i}^{\dagger},\rho_{\mathrm{S}}\right]\right]\otimes a(f)a^{\dagger}(f)\rho_{\mathrm{B}}\nonumber\\
 &+\left[v_{i}Bv_{i}^{\dagger},\rho_{\mathrm{S}}\right]v_{j}B^{\dagger}v_{j}^{\dagger}\otimes\left[a(f),a^{\dagger}(f)\rho_{\mathrm{B}}\right]\nonumber\\
 & +\left[v_{j}H_\rms v_{j}^{\dagger},\rho_{\mathrm{S}}v_{i}Bv_{i}^{\dagger}\right]\otimes\left[a^{\dagger}(f),\rho_{\mathrm{B}}\right]\nonumber\\
 &+\rho_{\mathrm{S}}v_{i}Bv_{i}^{\dagger}\otimes\left[H_{\mathrm{B}},\left[a^{\dagger}(f),\rho_{\mathrm{B}}\right]\right]\nonumber\\
 & +\left[v_{j}Bv_{j}^{\dagger},\rho_{\mathrm{S}}v_{i}Bv_{i}^{\dagger}\right]\otimes a^{\dagger}(f)\left[a^{\dagger}(f),\rho_{\mathrm{B}}\right]\nonumber\\
 &+\rho_{\mathrm{S}}v_{i}Bv_{i}^{\dagger}v_{j}Bv_{j}^{\dagger}\otimes\left[a^{\dagger}(f),\left[a^{\dagger}(f),\rho_{\mathrm{B}}\right]\right]\nonumber\\
 & +\left[v_{j}B^{\dagger}v_{j}^{\dagger},\rho_{\mathrm{S}}v_{i}Bv_{i}^{\dagger}\right]\otimes a(f)\left[a^{\dagger}(f),\rho_{\mathrm{B}}\right]\nonumber\\
 &+\rho_{\mathrm{S}}v_{i}Bv_{i}^{\dagger}v_{j}B^{\dagger}v_{j}^{\dagger}\otimes\left[a(f),\left[a^{\dagger}(f),\rho_{\mathrm{B}}\right]\right]\nonumber\\
 & +\left[v_{j}H_\rms v_{j}^{\dagger},\left[v_{i}B^{\dagger}v_{i}^{\dagger},\rho_{\mathrm{S}}\right]\right]\otimes a(f)\rho_{\mathrm{B}}\nonumber\\
 &+\left[v_{i}B^{\dagger}v_{i}^{\dagger},\rho_{\mathrm{S}}\right]\otimes\left[H_{\mathrm{B}},a(f)\rho_{\mathrm{B}}\right]\nonumber\\
 & +\left[v_{j}Bv_{j}^{\dagger},\left[v_{i}B^{\dagger}v_{i}^{\dagger},\rho_{\mathrm{S}}\right]\right]\otimes a^{\dagger}(f)a(f)\rho_{\mathrm{B}}\nonumber\\
 &+\left[v_{i}B^{\dagger}v_{i}^{\dagger},\rho_{\mathrm{S}}\right]v_{j}Bv_{j}^{\dagger}\otimes\left[a^{\dagger}(f),a(f)\rho_{\mathrm{B}}\right]\nonumber\\
 & +\left[v_{j}B^{\dagger}v_{j}^{\dagger},\left[v_{i}B^{\dagger}v_{i}^{\dagger},\rho_{\mathrm{S}}\right]\right]\otimes a(f)a(f)\rho_{\mathrm{B}}\nonumber\\
 &+\left[v_{i}B^{\dagger}v_{i}^{\dagger},\rho_{\mathrm{S}}\right]v_{j}B^{\dagger}v_{j}^{\dagger}\otimes\left[a(f),a(f)\rho_{\mathrm{B}}\right]\nonumber\\
 & +\left[v_{j}H_\rms v_{j}^{\dagger},\rho_{\mathrm{S}}v_{i}B^{\dagger}v_{i}^{\dagger}\right]\otimes\left[a(f),\rho_{\mathrm{B}}\right]\nonumber\\
 &+\rho_{\mathrm{S}}v_{i}B^{\dagger}v_{i}^{\dagger}\otimes\left[H_{\mathrm{B}},\left[a(f),\rho_{\mathrm{B}}\right]\right]\nonumber\\
 & +\left[v_{j}Bv_{j}^{\dagger},\rho_{\mathrm{S}}v_{i}B^{\dagger}v_{i}^{\dagger}\right]\otimes a^{\dagger}(g)\left[a(f),\rho_{\mathrm{B}}\right]\nonumber\\
 &+\rho_{\mathrm{S}}v_{i}B^{\dagger}v_{i}^{\dagger}v_{j}Bv_{j}^{\dagger}\otimes\left[a^{\dagger}(f),\left[a(f),\rho_{\mathrm{B}}\right]\right]\nonumber\\
 & +\left[v_{j}B^{\dagger}v_{j}^{\dagger},\rho_{\mathrm{S}}v_{i}B^{\dagger}v_{i}^{\dagger}\right]\otimes a(f)\left[a(f),\rho_{\mathrm{B}}\right]\nonumber\\
 &+\rho_{\mathrm{S}}v_{i}B^{\dagger}v_{i}^{\dagger}v_{j}B^{\dagger}v_{j}^{\dagger}\otimes\left[a(f),\left[a(f),\rho_{\mathrm{B}}\right]\right]\nonumber\\
={} & \left[v_{j}H_\rms v_{j}^{\dagger},\left[v_{i}H_\rms v_{i}^{\dagger},\rho_{\mathrm{S}}\right]\right]\otimes\rho_{\mathrm{B}}\nonumber\\
 & +\left(\left[v_{j}Bv_{j}^{\dagger},\left[v_{i}H_\rms v_{i}^{\dagger},\rho_{\mathrm{S}}\right]\right]+\left[v_{j}H_\rms v_{j}^{\dagger},\left[v_{i}Bv_{i}^{\dagger},\rho_{\mathrm{S}}\right]\right]\right)\otimes a^{\dagger}(f)\rho_{\mathrm{B}}\nonumber\\
 & +\left(\left[v_{j}B^{\dagger}v_{j}^{\dagger},\left[v_{i}H_\rms v_{i}^{\dagger},\rho_{\mathrm{S}}\right]\right]+\left[v_{j}H_\rms v_{j}^{\dagger},\left[v_{i}B^{\dagger}v_{i}^{\dagger},\rho_{\mathrm{S}}\right]\right]\right)\otimes a(f)\rho_{\mathrm{B}}\nonumber\\
 & +\left(\left[v_{i}H_\rms v_{i}^{\dagger},\rho_{\mathrm{S}}\right]v_{j}Bv_{j}^{\dagger}+\left[v_{j}H_\rms v_{j}^{\dagger},\rho_{\mathrm{S}}v_{i}Bv_{i}^{\dagger}\right]\right)\otimes\left[a^{\dagger}(f),\rho_{\mathrm{B}}\right]\nonumber\\
 & +\left(\left[v_{i}H_\rms v_{i}^{\dagger},\rho_{\mathrm{S}}\right]v_{j}B^{\dagger}v_{j}^{\dagger}+\left[v_{j}H_\rms v_{j}^{\dagger},\rho_{\mathrm{S}}v_{i}B^{\dagger}v_{i}^{\dagger}\right]\right)\otimes\left[a(f),\rho_{\mathrm{B}}\right]\nonumber\\
 & +\left[v_{j}Bv_{j}^{\dagger},\left[v_{i}Bv_{i}^{\dagger},\rho_{\mathrm{S}}\right]\right]\otimes a^{\dagger}(f)a^{\dagger}(f)\rho_{\mathrm{B}}\nonumber\\
 &+\left[v_{j}B^{\dagger}v_{j}^{\dagger},\left[v_{i}B^{\dagger}v_{i}^{\dagger},\rho_{\mathrm{S}}\right]\right]\otimes a(f)a(f)\rho_{\mathrm{B}}\nonumber\\
 & +\left[v_{j}B^{\dagger}v_{j}^{\dagger},\left[v_{i}Bv_{i}^{\dagger},\rho_{\mathrm{S}}\right]\right]\otimes a(f)a^{\dagger}(f)\rho_{\mathrm{B}}\nonumber\\
 &+\left[v_{j}Bv_{j}^{\dagger},\left[v_{i}B^{\dagger}v_{i}^{\dagger},\rho_{\mathrm{S}}\right]\right]\otimes a^{\dagger}(f)a(f)\rho_{\mathrm{B}}\nonumber\\
 & +\left[v_{i}Bv_{i}^{\dagger},\rho_{\mathrm{S}}\right]v_{j}Bv_{j}^{\dagger}\otimes\left[a^{\dagger}(f),a^{\dagger}(f)\rho_{\mathrm{B}}\right]\nonumber\\
 &+\left[v_{i}B^{\dagger}v_{i}^{\dagger},\rho_{\mathrm{S}}\right]v_{j}B^{\dagger}v_{j}^{\dagger}\otimes\left[a(f),a(f)\rho_{\mathrm{B}}\right]\nonumber\\
 & +\left[v_{i}Bv_{i}^{\dagger},\rho_{\mathrm{S}}\right]v_{j}B^{\dagger}v_{j}^{\dagger}\otimes\left[a(f),a^{\dagger}(f)\rho_{\mathrm{B}}\right]\nonumber\\
 &+\left[v_{i}B^{\dagger}v_{i}^{\dagger},\rho_{\mathrm{S}}\right]v_{j}Bv_{j}^{\dagger}\otimes\left[a^{\dagger}(f),a(f)\rho_{\mathrm{B}}\right]\nonumber\\
 & +\left[v_{j}Bv_{j}^{\dagger},\rho_{\mathrm{S}}v_{i}Bv_{i}^{\dagger}\right]\otimes a^{\dagger}(f)\left[a^{\dagger}(f),\rho_{\mathrm{B}}\right]\nonumber\\
 &+\left[v_{j}B^{\dagger}v_{j}^{\dagger},\rho_{\mathrm{S}}v_{i}B^{\dagger}v_{i}^{\dagger}\right]\otimes a(f)\left[a(f),\rho_{\mathrm{B}}\right]\nonumber\\
 & +\left[v_{j}B^{\dagger}v_{j}^{\dagger},\rho_{\mathrm{S}}v_{i}Bv_{i}^{\dagger}\right]\otimes a(f)\left[a^{\dagger}(f),\rho_{\mathrm{B}}\right]\nonumber\\
 &+\left[v_{j}Bv_{j}^{\dagger},\rho_{\mathrm{S}}v_{i}B^{\dagger}v_{i}^{\dagger}\right]\otimes a^{\dagger}(f)\left[a(f),\rho_{\mathrm{B}}\right]\nonumber\\
 & +\rho_{\mathrm{S}}v_{i}Bv_{i}^{\dagger}v_{j}Bv_{j}^{\dagger}\otimes\left[a^{\dagger}(f),\left[a^{\dagger}(f),\rho_{\mathrm{B}}\right]\right]\nonumber\\
 &+\rho_{\mathrm{S}}v_{i}B^{\dagger}v_{i}^{\dagger}v_{j}B^{\dagger}v_{j}^{\dagger}\otimes\left[a(f),\left[a(f),\rho_{\mathrm{B}}\right]\right]\nonumber\\
 & +\rho_{\mathrm{S}}v_{i}Bv_{i}^{\dagger}v_{j}B^{\dagger}v_{j}^{\dagger}\otimes\left[a(f),\left[a^{\dagger}(f),\rho_{\mathrm{B}}\right]\right]\nonumber\\
 &+\rho_{\mathrm{S}}v_{i}B^{\dagger}v_{i}^{\dagger}v_{j}Bv_{j}^{\dagger}\otimes\left[a^{\dagger}(f),\left[a(f),\rho_{\mathrm{B}}\right]\right]\nonumber\\
 & +\left[v_{i}Bv_{i}^{\dagger},\rho_{\mathrm{S}}\right]\otimes\left[H_{\mathrm{B}},a^{\dagger}(f)\rho_{\mathrm{B}}\right]\nonumber\\
 &+\left[v_{i}B^{\dagger}v_{i}^{\dagger},\rho_{\mathrm{S}}\right]\otimes\left[H_{\mathrm{B}},a(f)\rho_{\mathrm{B}}\right]\nonumber\\
 & +\rho_{\mathrm{S}}v_{i}Bv_{i}^{\dagger}\otimes\left[H_{\mathrm{B}},\left[a^{\dagger}(f),\rho_{\mathrm{B}}\right]\right]\nonumber\\
 &+\rho_{\mathrm{S}}v_{i}B^{\dagger}v_{i}^{\dagger}\otimes\left[H_{\mathrm{B}},\left[a(f),\rho_{\mathrm{B}}\right]\right].\label{eq:ad_compute_looong}
\end{align}
\endgroup
Our goal is to show that $\Vert \mathbf{ad}_{U_{j}HU_{j}^{\dagger}}\mathbf{ad}_{U_{i}HU_{i}^{\dagger}}(\rho_{S}\otimes\rho_{B})\Vert_\mathrm{HS}$ is finite.
By taking the Hilbert--Schmidt norm of the above expression and using the triangle inequality as well as the identity $\Vert A\otimes B\Vert_\mathrm{HS}=\Vert A\Vert_\mathrm{HS}\Vert B\Vert_\mathrm{HS}$, we can reduce this task to ensuring that the Hilbert--Schmidt norm of every individual bath term stays finite.
For this, we collect the values of certain norms in the following lemma.

\begin{lemma}\label{lem:fundamental_norms}
Let the boson field satisfy Assumption~\ref{assumption:omega}\@. In particular, define $m:=\inf_j\omega_j>0$.
	Let $f\in\ell^2$ (Assumption~\ref{assumption:couplings}) and $\omega f\in\ell^2$ (Assumption~\ref{assumption:bounds_i}).
 Furthermore, let Assumption~\ref{assumption:bounds_ii} be fulfilled.
	Then the following relations hold:
	\begin{enumerate}[(i)]
		\item\label{it:finite_norm_i} $\left\Vert \mathrm{e}^{-\beta H_{\mathrm{B}}}\right\Vert _{\mathrm{HS}}^{2}=Z(2\beta)$;
		\item\label{it:finite_norm_iii} $\left\Vert a(f)\mathrm{e}^{-\beta H_{\mathrm{B}}}\right\Vert _{\mathrm{HS}}^{2}\leq -\frac{\left\Vert f\right\Vert^{2}}{m}\frac{\mathrm{d}}{\mathrm{d}(2\beta)}Z(2\beta)$;
		\item\label{it:finite_norm_iv} $\left\Vert a^{\dagger}(f)\mathrm{e}^{-\beta H_{\mathrm{B}}}\right\Vert _{\mathrm{HS}}^{2}\leq \left\Vert f\right\Vert^{2}\left[-\frac{1}{m}\frac{\mathrm{d}}{\mathrm{d}(2\beta)}+1\right]Z(2\beta)$;
		\item\label{it:finite_norm_v} $\left\Vert a(f)^{2}\mathrm{e}^{-\beta H_{\mathrm{B}}}\right\Vert _{\mathrm{HS}}^{2}<\frac{\left\Vert f\right\Vert^{4}}{m^{2}}\frac{\mathrm{d}^{2}}{\mathrm{d}(2\beta)^{2}}Z(2\beta)$;
		\item\label{it:finite_norm_vi} $\left\Vert a^{\dagger}(f)^{2}\mathrm{e}^{-\beta H_{\mathrm{B}}}\right\Vert _{\mathrm{HS}}^{2}<\left\Vert f\right\Vert^{4}\left[\frac{1}{m^{2}}\frac{\mathrm{d}^{2}}{\mathrm{d}(2\beta)^{2}}-\frac{3}{m}\frac{\mathrm{d}}{\mathrm{d}(2\beta)}+2\right]Z(2\beta)$;
		\item\label{it:finite_norm_vii} $\left\Vert a^{\dagger}(f)a(f)\mathrm{e}^{-\beta H_{\mathrm{B}}}\right\Vert _{\mathrm{HS}}^{2}\leq\left\Vert f\right\Vert ^{4}\left[\frac{1}{m^{2}}\frac{\mathrm{d}^{2}}{\mathrm{d}(2\beta)^{2}}-\frac{1}{m}\frac{\mathrm{d}}{\mathrm{d}(2\beta)}\right]Z(2\beta)$
		\item\label{it:finite_norm_viii} $\left\Vert a(f)a^{\dagger}(f)\mathrm{e}^{-\beta H_{\mathrm{B}}}\right\Vert _{\mathrm{HS}}^{2}<\left\Vert f\right\Vert^{4}\left[\frac{1}{m^{2}}\frac{\mathrm{d}^{2}}{\mathrm{d}(2\beta)^{2}}-\frac{2}{m}\frac{\mathrm{d}}{\mathrm{d}(2\beta)}+1\right]Z(2\beta)$;
		\item\label{it:finite_norm_xiii} $\left\Vert \left[a^{\dagger}(f),\mathrm{e}^{-\beta H_{\mathrm{B}}}\right]\right\Vert _{\mathrm{HS}}^{2}<\left\Vert f\right\Vert ^{2}\left[-\frac{1}{m}\frac{\mathrm{d}}{\mathrm{d}(2\beta)}+1\right]Z(2\beta)$;
		\item\label{it:finite_norm_xiv} $\left\Vert \left[a(f),\mathrm{e}^{-\beta H_{\mathrm{B}}}\right]\right\Vert _{\mathrm{HS}}^{2}<\left\Vert f\right\Vert ^{2}\left[-\frac{1}{m}\frac{\mathrm{d}}{\mathrm{d}(2\beta)}+1\right]Z(2\beta)$;
		\item\label{it:finite_norm_xv} $\left\Vert a^{\dagger}(f)\left[a^{\dagger}(f),\mathrm{e}^{-\beta H_{\mathrm{B}}}\right]\right\Vert _{\mathrm{HS}}^{2}<\left\Vert f\right\Vert ^{4}\left[\frac{1}{m^{2}}\frac{\mathrm{d}^{2}}{\mathrm{d}(2\beta)^{2}}-\frac{3}{m}\frac{\mathrm{d}}{\mathrm{d}(2\beta)}+2\right]Z(2\beta)$;
		\item\label{it:finite_norm_xvi} $\left\Vert a(f)\left[a(f),\mathrm{e}^{-\beta H_{\mathrm{B}}}\right]\right\Vert _{\mathrm{HS}}^{2}<\left\Vert f\right\Vert ^{4}\left[\frac{1}{m^{2}}\frac{\mathrm{d}^{2}}{\mathrm{d}(2\beta)^{2}}-\frac{3}{m}\frac{\mathrm{d}}{\mathrm{d}(2\beta)}+2\right]Z(2\beta)$;
		\item\label{it:finite_norm_xvii} $\left\Vert a(f)\left[a^{\dagger}(f),\mathrm{e}^{-\beta H_{\mathrm{B}}}\right]\right\Vert _{\mathrm{HS}}^{2}<\left\Vert f\right\Vert ^{4}\left[\frac{1}{m^{2}}\frac{\mathrm{d}^{2}}{\mathrm{d}(2\beta)^{2}}-\frac{2}{m}\frac{\mathrm{d}}{\mathrm{d}(2\beta)}+1\right]Z(2\beta) $;
		\item\label{it:finite_norm_xviii} $\left\Vert \left[a^{\dagger}(f),a(f)\mathrm{e}^{-\beta H_{\mathrm{B}}}\right]\right\Vert _{\mathrm{HS}}^{2}<\left\Vert f\right\Vert ^{4}\left[\frac{1}{m^{2}}\frac{\mathrm{d}^{2}}{\mathrm{d}(2\beta)^{2}}-\frac{2}{m}\frac{\mathrm{d}}{\mathrm{d}(2\beta)}+1\right]Z(2\beta)$;
		\item\label{it:finite_norm_xix} $\left\Vert \left[a^{\dagger}(f),a^{\dagger}(f)\mathrm{e}^{-\beta H_{\mathrm{B}}}\right]\right\Vert _{\mathrm{HS}}^2<\left\Vert f\right\Vert ^{4}\left[\frac{1}{m^{2}}\frac{\mathrm{d}^{2}}{\mathrm{d}(2\beta)^{2}}-\frac{3}{m}\frac{\mathrm{d}}{\mathrm{d}(2\beta)}+2\right]Z(2\beta)$;
		\item\label{it:finite_norm_xx} $\left\Vert \left[a(f),a(f)\mathrm{e}^{-\beta H_{\mathrm{B}}}\right]\right\Vert _{\mathrm{HS}}^2<\left\Vert f\right\Vert ^{4}\left[\frac{1}{m^{2}}\frac{\mathrm{d}^{2}}{\mathrm{d}(2\beta)^{2}}-\frac{3}{m}\frac{\mathrm{d}}{\mathrm{d}(2\beta)}+2\right]Z(2\beta)$;
		\item\label{it:finite_norm_xxi} $\left\Vert \left[a^{\dagger}(f),\left[a^{\dagger}(f),\mathrm{e}^{-\beta H_{\mathrm{B}}}\right]\right]\right\Vert _{\mathrm{HS}}^{2}<4\left\Vert f\right\Vert^4 \left[\frac{1}{m^{2}}\frac{\mathrm{d}^{2}}{\mathrm{d}(2\beta)^{2}}-\frac{3}{m}\frac{\mathrm{d}}{\mathrm{d}(2\beta)}+2\right]Z(2\beta)$;
		\item\label{it:finite_norm_xxii} $\left\Vert \left[a(f),\left[a(f),\mathrm{e}^{-\beta H_{\mathrm{B}}}\right]\right]\right\Vert _{\mathrm{HS}}^{2} <4\left\Vert f\right\Vert ^{4}\left[\frac{1}{m^{2}}\frac{\mathrm{d}^{2}}{\mathrm{d}(2\beta)^{2}}-\frac{3}{m}\frac{\mathrm{d}}{\mathrm{d}(2\beta)}+2\right]Z(2\beta)$;
		\item\label{it:finite_norm_xxiii} $\left\Vert \left[H_{\mathrm{B}},a(f)\mathrm{e}^{-\beta H_{\mathrm{B}}}\right]\right\Vert _{\mathrm{HS}}^{2}\leq -\frac{\left\Vert \omega f\right\Vert ^{2}}{m}\frac{\mathrm{d}}{\mathrm{d}(2\beta)}Z(2\beta)$;
		\item\label{it:finite_norm_xxiv} $\left\Vert \left[H_{\mathrm{B}},a^{\dagger}(f)\mathrm{e}^{-\beta H_{\mathrm{B}}}\right]\right\Vert _{\mathrm{HS}}^{2}\leq \left\Vert \omega f\right\Vert ^{2}\left[-\frac{1}{m}\frac{\mathrm{d}}{\mathrm{d}(2\beta)}+1\right]Z(2\beta)$;
		\item\label{it:finite_norm_xxv} $\left\Vert \left[H_{\mathrm{B}},\left[a^{\dagger}(f),\mathrm{e}^{-\beta H_{\mathrm{B}}}\right]\right]\right\Vert _{\mathrm{HS}}^{2} < \left\Vert \omega f\right\Vert ^{2}\left[-\frac{1}{m}\frac{\mathrm{d}}{\mathrm{d}(2\beta)}+1\right]Z(2\beta)$;
		\item\label{it:finite_norm_xxvi} $\left\Vert \left[H_{\mathrm{B}},\left[a(f),\mathrm{e}^{-\beta H_{\mathrm{B}}}\right]\right]\right\Vert _{\mathrm{HS}}^{2} < \left\Vert \omega f\right\Vert ^{2}\left[-\frac{1}{m}\frac{\mathrm{d}}{\mathrm{d}(2\beta)}+1\right]Z(2\beta)$.
	\end{enumerate}
	\begin{proof}
	Due to Assumption~\ref{assumption:bounds_ii}, it follows from Proposition~\ref{prop:partition_function} that the function $\beta\rightarrow Z(\beta)$ is twice differentiable.
	Therefore, we can write all Hilbert--Schmidt norms in terms of derivatives of $Z(\beta)$.
	We proceed item by item starting with~\ref{it:finite_norm_i}:
	\begin{align}
\left\Vert \mathrm{e}^{-\beta H_{\mathrm{B}}}\right\Vert _{\mathrm{HS}}^{2} & =\sum_{\bm{n}}\left\Vert \mathrm{e}^{-\beta H_{\mathrm{B}}}\left|\bm{n}\right\rangle \right\Vert ^{2}\nonumber\\
 & =\sum_{\bm{n}}\mathrm{e}^{-2\beta\sum_{i}\omega_{i}n_{i}}\nonumber\\
 & =Z(2\beta).
\end{align}
	To prove the inequalities, we notice that
\begin{align}
\sum_{j}n_{j}\left|f_{j}\right|^{2} & \leq\left\Vert f\right\Vert^{2}\sum_{j}n_{j};\label{eq:technical_inequality_i}\\
\sum_{j}n_{j} & \leq\frac{1}{m}\sum_{j}n_{j}\omega_{j}.
\end{align}
Combining these two inequalities, we immediately get
\begin{equation}
\sum_{j}n_{j}\left|f_{j}\right|^{2}\leq\frac{\left\Vert f\right\Vert^{2}}{m}\sum_{j}n_{j}\omega_{j}.\label{eq:fundamental_inequality}
\end{equation}
We use Eq.~\eqref{eq:fundamental_inequality} for the inequality~\ref{it:finite_norm_iii}:
\begin{align}
\left\Vert a(f)\mathrm{e}^{-\beta H_{\mathrm{B}}}\right\Vert _{\mathrm{HS}}^{2} & =\sum_{\bm{n}}\left\Vert a(f)\mathrm{e}^{-\beta H_{\mathrm{B}}}\left|\bm{n}\right\rangle \right\Vert ^{2}\nonumber\\
 & \leq\frac{\left\Vert f\right\Vert^{2}}{m}\sum_{\bm{n}}\left(\sum_{i}n_{i}\omega_{i}\right)\mathrm{e}^{-2\beta\sum_{i}n_{i}\omega_{i}}\nonumber\\
 & =\frac{\left\Vert f\right\Vert^{2}}{m}\sum_{\bm{n}}-\frac{\mathrm{d}}{\mathrm{d}(2\beta)}\mathrm{e}^{-2\beta\sum_{i}n_{i}\omega_{i}}\nonumber\\
 & =-\frac{\left\Vert f\right\Vert^{2}}{m}\frac{\mathrm{d}}{\mathrm{d}(2\beta)}Z(2\beta).
\end{align}
Likewise, we can proceed for the creation operator to show~\ref{it:finite_norm_iv}:
\begin{align}
\left\Vert a^{\dagger}(f)\mathrm{e}^{-\beta H_{\mathrm{B}}}\right\Vert _{\mathrm{HS}}^{2} & =\sum_{\bm{n}}\left\Vert a^{\dagger}(f)\mathrm{e}^{-\beta H_{\mathrm{B}}}\left|\bm{n}\right\rangle \right\Vert ^{2}\nonumber\\
 & \leq\sum_{\bm{n}}\frac{\left\Vert f\right\Vert^{2}}{m}\left(\sum_{i}n_{i}\omega_{i}\right)\mathrm{e}^{-2\beta\sum_{i}n_{i}\omega_{i}}+\left\Vert f\right\Vert^{2}\mathrm{e}^{-2\beta\sum_{i}n_{i}\omega_{i}}\nonumber\\
 & =\sum_{\bm{n}}\frac{\left\Vert f\right\Vert^{2}}{2m}\left(-\frac{\mathrm{d}}{\mathrm{d}\beta}\right)\mathrm{e}^{-2\beta\sum_{i}n_{i}\omega_{i}}+\left\Vert f\right\Vert^{2}\mathrm{e}^{-2\beta\sum_{i}n_{i}\omega_{i}}\nonumber\\
 & =-\frac{\left\Vert f\right\Vert^{2}}{m}\frac{\mathrm{d}}{\mathrm{d}(2\beta)}Z(2\beta)+\left\Vert f\right\Vert^{2}Z(2\beta).
\end{align}
For the squared annihilation operator, we obtain~\ref{it:finite_norm_v}:
\begingroup
\allowdisplaybreaks
\begin{align}
\left\Vert a(f)^{2}\mathrm{e}^{-\beta H_{\mathrm{B}}}\right\Vert _{\mathrm{HS}}^{2}={} & \sum_{\bm{n}}\left\Vert a(f)^{2}\mathrm{e}^{-\beta H_{\mathrm{B}}}\left|\bm{n}\right\rangle \right\Vert ^{2}\nonumber\\
={} & \sum_{\bm{n}}\left\Vert \sum_{k}f_{k}^{*}a_{k}\sum_{j}f_{j}^{*}\sqrt{n_{j}}\mathrm{e}^{-\beta\sum_{i}n_{i}\omega_{i}}\left|\bm{n}(n_{j}\rightarrow n_{j}-1)\right\rangle \right\Vert ^{2}\nonumber\\
={} & \sum_{\bm{n}}\left\Vert \sum_{j\neq k}f_{k}^{*}f_{j}^{*}\sqrt{n_{j}}\sqrt{n_{k}}\mathrm{e}^{-\beta\sum_{i}n_{i}\omega_{i}}\left|\bm{n}(n_{j}\rightarrow n_{j}-1,n_{k}\rightarrow n_{k}-1)\right\rangle \right\Vert ^{2}\nonumber\\
 & \quad+\left\Vert \sum_{j}\left(f_{j}^{*}\right)^{2}\sqrt{n_{j}}\sqrt{n_{j}-1}\mathrm{e}^{-\beta\sum_{i}n_{i}\omega_{i}}\left|\bm{n}(n_{j}\rightarrow n_{j}-2)\right\rangle \right\Vert ^{2}\nonumber\\
\leq{} & \left\Vert f\right\Vert ^{4}\sum_{\bm{n}}\left(\sum_{j\neq k}n_{j}n_{k}+\sum_{j}n_{j}\left(n_{j}-1\right)\right)\mathrm{e}^{-2\beta\sum_{i}n_{i}\omega_{i}}\nonumber\\
  <{} &\frac{\left\Vert f\right\Vert^{4}}{m^{2}}\sum_{\bm{n}}\left(\sum_{i}n_{i}\omega_{i}\right)^{2}\mathrm{e}^{-2\beta\sum_{i}n_{i}\omega_{i}}\nonumber\\
  ={}&\frac{\left\Vert f\right\Vert^{4}}{4m^{2}}\sum_{\bm{n}}\frac{\mathrm{d}^{2}}{\mathrm{d}\beta^2}\mathrm{e}^{-2\beta\sum_{i}n_{i}\omega_{i}}\nonumber\\
  ={} &\frac{\left\Vert f\right\Vert^{4}}{m^{2}}\frac{\mathrm{d}^{2}}{\mathrm{d}(2\beta)^{2}}Z(2\beta).
\end{align}
\endgroup
To compute the term involving the squared creation operator, first recall that the notation $\left|\bm{n}(n_{j}\rightarrow n_{j}+1)\right\rangle $ refers to the vector obtained from $\left|\bm{n}\right\rangle =\left|n_{1},n_{2},\dots\right\rangle $ by replacing the entry $n_{j}$ with $n_{j}+1$, i.e. $\left|\bm{n}(n_{j}\rightarrow n_{j}+1)\right\rangle =\left|n_{1},\dots,n_{j-1},n_{j}+1,n_{j+1},\dots\right\rangle $.
Then, the relation~\ref{it:finite_norm_vi} is proven as follows:
\begingroup
\allowdisplaybreaks
\begin{align}
&\left\Vert a^{\dagger}(f)^{2}\mathrm{e}^{-\beta H_{\mathrm{B}}}\right\Vert _{\mathrm{HS}}^{2} =\sum_{\bm{n}}\left\Vert a^{\dagger}(f)^{2}\mathrm{e}^{-\beta H_{\mathrm{B}}}\left|\bm{n}\right\rangle \right\Vert ^{2}\nonumber\\
 &\quad =\sum_{\bm{n}}\left\Vert \mathrm{e}^{-\beta\sum_{i}n_{i}\omega_{i}}a^{\dagger}(f)\sum_{j}f_{j}\sqrt{n_{j}+1}\left|\bm{n}(n_{j}\rightarrow n_{j}+1)\right\rangle \right\Vert^2 \nonumber\\
 &\quad =\sum_{\bm{n}}\left\Vert \mathrm{e}^{-\beta\sum_{i}n_{i}\omega_{i}}\left[\sum_{j\neq k}f_{j}f_{k}\sqrt{n_{j}+1}\sqrt{n_{k}+1}\left|\bm{n}(n_{j}\rightarrow n_{j}+1,n_{k}\rightarrow n_{k}+1)\right\rangle \right.\right.\nonumber\\
 &\quad\qquad\qquad\qquad\qquad\left.\left.+\sum_{j}f_{j}^{2}\sqrt{n_{j}+1}\sqrt{n_{j}+2}\left|\bm{n}(n_{j}\rightarrow n_{j}+2)\right\rangle \right]\right\Vert^2 \nonumber\\
 &\quad =\sum_{\bm{n}}\mathrm{e}^{-2\beta\sum_{i}n_{i}\omega_{i}}\left[\sum_{j\neq k}\left|f_{j}\right|^{2}\left|f_{k}\right|^{2}\left(n_{j}+1\right)\left(n_{k}+1\right)+\sum_{j}\left|f_{j}\right|^{4}\left(n_{j}+1\right)\left(n_{j}+2\right)\right]\nonumber\\
 &\quad <\left\Vert f\right\Vert^{4}\sum_{\bm{n}}\mathrm{e}^{-2\beta\sum_{i}n_{i}\omega_{i}}\left[\sum_{j\neq k}\left(n_{j}+1\right)\left(n_{k}+2\right)+\sum_{j}\left(n_{j}+1\right)\left(n_{j}+2\right)\right]\nonumber\\
 &\quad <\left\Vert f\right\Vert^{4}\sum_{\bm{n}}\mathrm{e}^{-2\beta\sum_{i}n_{i}\omega_{i}}\left[\sum_{j}n_{j}+2\right]\left[\sum_{j}n_{j}+1\right]\nonumber\\
 &\quad =\left\Vert f\right\Vert^{4}\sum_{\bm{n}}\mathrm{e}^{-2\beta\sum_{i}n_{i}\omega_{i}}\left[\left(\sum_{j}n_{j}\right)^{2}+3\left(\sum_{j}n_{j}\right)+2\right]\nonumber\\
 &\quad \leq\left\Vert f\right\Vert^{4}\sum_{\bm{n}}\mathrm{e}^{-2\beta\sum_{i}n_{i}\omega_{i}}\left[\frac{1}{m^{2}}\left(\sum_{j}n_{j}\omega_{j}\right)^{2}+\frac{3}{m}\left(\sum_{j}n_{j}\omega_{j}\right)+2\right]\nonumber\\
 &\quad =\left\Vert f\right\Vert^{4}\left[\frac{1}{m^{2}}\frac{\mathrm{d}^{2}}{\mathrm{d}(2\beta)^{2}}-\frac{3}{m}\frac{\mathrm{d}}{\mathrm{d}(2\beta)}+2\right]Z(2\beta).
\end{align}
\endgroup
Next, we show the inequalities~\ref{it:finite_norm_vii}:
\begingroup
\allowdisplaybreaks
\begin{align}
\left\Vert a^{\dagger}(f)a(f)\mathrm{e}^{-\beta H_{\mathrm{B}}}\right\Vert _{\mathrm{HS}}^{2} & =\sum_{\bm{n}}\left\Vert a^{\dagger}(f)a(f)\mathrm{e}^{-\beta H_{\mathrm{B}}}\left|\bm{n}\right\rangle \right\Vert ^{2}\nonumber\\
 & =\sum_{\bm{n}}\left\Vert \sum_{j}a^{\dagger}(f)f_{j}^{*}\sqrt{n_{j}}\mathrm{e}^{-\beta\sum_{i}\omega_{i}n_{i}}\left|\bm{n}(n_{j}\rightarrow n_{j}-1)\right\rangle \right\Vert ^{2}\nonumber\\
 & =\sum_{\bm{n}}\left\Vert \sum_{j\neq k}f_{k}f_{j}^{*}\sqrt{n_{k}+1}\sqrt{n_{j}}\mathrm{e}^{-\beta\sum_{i}\omega_{i}n_{i}}\left|\bm{n}(n_{j}\rightarrow n_{j}-1,n_{k}\rightarrow n_{k}+1)\right\rangle \right.\nonumber\\
 &\qquad\quad\left.+\sum_{j}\left|f_{j}\right|n_{j}\mathrm{e}^{-\beta\sum_{i}\omega_{i}n_{i}}\left|\bm{n}\right\rangle \right\Vert ^{2}\nonumber\\
 & <\left\Vert f\right\Vert ^{4}\sum_{\bm{n}}\left(\sum_{i}n_{i}+1\right)\left(\sum_{i}n_{i}\right)\mathrm{e}^{-2\beta\sum_{i}\omega_{i}n_{i}}\nonumber\\
 & \leq\left\Vert f\right\Vert ^{4}\sum_{\bm{n}}\mathrm{e}^{-2\beta\sum_{i}n_{i}\omega_{i}}\left[\frac{1}{m^{2}}\left(\sum_{j}n_{j}\omega_{j}\right)^{2}+\frac{1}{m}\left(\sum_{j}n_{j}\omega_{j}\right)\right]\nonumber\\
 & =\left\Vert f\right\Vert ^{4}\left[\frac{1}{m^{2}}\frac{\mathrm{d}^{2}}{\mathrm{d}(2\beta)^{2}}-\frac{1}{m}\frac{\mathrm{d}}{\mathrm{d}(2\beta)}\right]Z(2\beta),
\end{align}
\endgroup
and~\ref{it:finite_norm_viii}:
\begingroup
\allowdisplaybreaks
\begin{align}
&\left\Vert a(f)a^{\dagger}(f)\mathrm{e}^{-\beta H_{\mathrm{B}}}\right\Vert _{\mathrm{HS}}^{2}  =\sum_{\bm{n}}\left\Vert a(f)a^{\dagger}(f)\mathrm{e}^{-\beta H_{\mathrm{B}}}\left|\bm{n}\right\rangle \right\Vert ^{2}\nonumber\\
 &\quad =\sum_{\bm{n}}\left\Vert \mathrm{e}^{-\beta\sum_{i}n_{i}\omega_{i}}a(f)\sum_{j}f_{j}\sqrt{n_{j}+1}\left|\bm{n}(n_{j}\rightarrow n_{j}+1)\right\rangle \right\Vert^2 \nonumber\\
 &\quad =\sum_{\bm{n}}\left\Vert \mathrm{e}^{-\beta\sum_{i}n_{i}\omega_{i}}\left[\sum_{j\neq k}f_{j}f_{k}^{*}\sqrt{n_{j}+1}\sqrt{n_{k}}\left|\bm{n}(n_{j}\rightarrow n_{j}+1,n_{k}\rightarrow n_{k}-1)\right\rangle \nonumber\right.\right.\\
 &\quad\qquad\qquad\qquad\qquad\left.\left.+\sum_{j}\left|f_{j}\right|^{2}\left(n_{j}+1\right)\left|\bm{n}\right\rangle \right]\right\Vert^2 \nonumber\\
 &\quad <\left\Vert f\right\Vert^{4}\sum_{\bm{n}}\mathrm{e}^{-2\beta\sum_{i}n_{i}\omega_{i}}\left[\sum_{j}n_{j}+1\right]^{2}\nonumber\\
 &\quad \leq\left\Vert f\right\Vert^{4}\sum_{\bm{n}}\mathrm{e}^{-2\beta\sum_{i}n_{i}\omega_{i}}\left[\frac{1}{m^{2}}\left(\sum_{j}n_{j}\omega_{j}\right)^{2}+\frac{2}{m}\left(\sum_{j}n_{j}\omega_{j}\right)+1\right]\nonumber\\
 &\quad =\left\Vert f\right\Vert^{4}\left[\frac{1}{m^{2}}\frac{\mathrm{d}^{2}}{\mathrm{d}(2\beta)^{2}}-\frac{2}{m}\frac{\mathrm{d}}{\mathrm{d}(2\beta)}+1\right]Z(2\beta).
\end{align}
\endgroup
For the commutators of creation and annihilation operators with the Gibbs state, we use Lemma~\ref{lemma:commutation} to prove~\ref{it:finite_norm_xiii}:
\begin{align}
\left\Vert \left[a^{\dagger}(f),\mathrm{e}^{-\beta H_{\mathrm{B}}}\right]\right\Vert _{\mathrm{HS}}^{2} & =\sum_{\bm{n}}\left\Vert \left[a^{\dagger}(f),\mathrm{e}^{-\beta H_{\mathrm{B}}}\right]\left|\bm{n}\right\rangle \right\Vert ^{2}\nonumber\\
 & =\sum_{\bm{n}}\left\Vert \sum_{j}f_{j}\sqrt{n_{j}+1}\mathrm{e}^{-\beta\sum_{i}\omega_{i}n_{i}}\left(1-\mathrm{e}^{-\beta\omega_{j}}\right)\left|\bm{n}(n_{j}\rightarrow n_{j}+1)\right\rangle \right\Vert ^{2}\nonumber\\
 & <\sum_{\bm{n}}\frac{\left\Vert f\right\Vert ^{2}}{m}\left(\sum_{i}n_{i}\omega_{i}\right)\mathrm{e}^{-2\beta\sum_{i}n_{i}\omega_{i}}+\left\Vert f\right\Vert ^{2}\mathrm{e}^{-2\beta\sum_{i}n_{i}\omega_{i}}\nonumber\\
 & =-\frac{\left\Vert f\right\Vert ^{2}}{m}\frac{\mathrm{d}}{\mathrm{d}(2\beta)}Z(2\beta)+\left\Vert f\right\Vert ^{2}Z(2\beta),
\end{align}
and~\ref{it:finite_norm_xiv}:
\begin{align}
\left\Vert \left[a(f),\mathrm{e}^{-\beta H_{\mathrm{B}}}\right]\right\Vert _{\mathrm{HS}}^{2} & =\left\Vert \left[a(f),\mathrm{e}^{-\beta H_{\mathrm{B}}}\right]^{\dagger}\right\Vert _{\mathrm{HS}}^{2}\nonumber\\
 & =\left\Vert -\left[a^{\dagger}(f),\mathrm{e}^{-\beta H_{\mathrm{B}}}\right]\right\Vert _{\mathrm{HS}}^{2}\nonumber\\
 & <-\frac{\left\Vert f\right\Vert ^{2}}{m}\frac{\mathrm{d}}{\mathrm{d}(2\beta)}Z(2\beta)+\left\Vert f\right\Vert ^{2}Z(2\beta).
\end{align}
Next, we have creation/annihilation operators applied to the commutators.
This is~\ref{it:finite_norm_xv},
\begingroup
\allowdisplaybreaks
\begin{align}
&\left\Vert a^{\dagger}(f)\left[a^{\dagger}(f),\mathrm{e}^{-\beta H_{\mathrm{B}}}\right]\right\Vert _{\mathrm{HS}}^{2}
= \sum_{\bm{n}}\left\Vert a^{\dagger}(f)\left[a^{\dagger}(f),\mathrm{e}^{-\beta H_{\mathrm{B}}}\right]\left|\bm{n}\right\rangle \right\Vert ^{2}\nonumber\\
&\quad= \sum_{\bm{n}}\left\Vert a^{\dagger}(f)\sum_{j}f_{j}\sqrt{n_{j}+1}\mathrm{e}^{-\beta\sum_{i}\omega_{i}n_{i}}\left(1-\mathrm{e}^{-\beta\omega_{j}}\right)\left|\bm{n}(n_{j}\rightarrow n_{j}+1)\right\rangle \right\Vert ^{2}\nonumber\\
&\quad= \sum_{\bm{n}}\Bigg\Vert\sum_{j\neq k}f_{j}f_{k}\sqrt{n_{j}+1}\sqrt{n_{k}+1}\mathrm{e}^{-\beta\sum_{i}\omega_{i}n_{i}}\left(1-\mathrm{e}^{-\beta\omega_{j}}\right)\nonumber\\
&\qquad\qquad\qquad\quad\times \left|\bm{n}(n_{j}\rightarrow n_{j}+1,n_{k}\rightarrow n_{k}+1)\right\rangle \nonumber\\
 & \qquad\qquad+\sum_{j}f_{j}^{2}\sqrt{n_{j}+1}\sqrt{n_{j}+2}\mathrm{e}^{-\beta\sum_{i}\omega_{i}n_{i}}\left(1-\mathrm{e}^{-\beta\omega_{j}}\right)\left|\bm{n}(n_{j}\rightarrow n_{j}+2)\right\rangle \Bigg\Vert^{2}\nonumber\\
&\quad< \left\Vert f\right\Vert ^{4}\sum_{\bm{n}}\mathrm{e}^{-2\beta\sum_{i}\omega_{i}n_{i}}\left(\sum_{j}\left(1-\mathrm{e}^{-\beta\omega_{j}}\right)^{2}\left(n_{j}+1\right)\right)\left(\sum_{k}\left(n_{k}+2\right)\right)\nonumber\\
&\quad< \left\Vert f\right\Vert ^{4}\sum_{\bm{n}}\mathrm{e}^{-2\beta\sum_{i}\omega_{i}n_{i}}\left[\sum_{i}n_{i}+2\right]\left[\sum_{i}n_{i}+1\right]\nonumber\\
&\quad\leq \left\Vert f\right\Vert ^{4}\left[\frac{1}{m^{2}}\frac{\mathrm{d}^{2}}{\mathrm{d}(2\beta)^{2}}-\frac{3}{m}\frac{\mathrm{d}}{\mathrm{d}(2\beta)}+2\right]Z(2\beta),
\end{align}
\endgroup
\ref{it:finite_norm_xvi}:
\begingroup
\allowdisplaybreaks
\begin{align}
&\left\Vert a(f)\left[a(f),\mathrm{e}^{-\beta H_{\mathrm{B}}}\right]\right\Vert _{\mathrm{HS}}^{2}
= \left\Vert \left[a(f),\mathrm{e}^{-\beta H_{\mathrm{B}}}\right]^{\dagger}a^{\dagger}(f)\right\Vert _{\mathrm{HS}}^{2}\nonumber\\
&\quad= \left\Vert \left[a^{\dagger}(f),\mathrm{e}^{-\beta H_{\mathrm{B}}}\right]^{\dagger}a^{\dagger}(f)\right\Vert _{\mathrm{HS}}^{2}\nonumber\\
&\quad= \sum_{\bm{n}}\left\Vert \left[a^{\dagger}(f),\mathrm{e}^{-\beta H_{\mathrm{B}}}\right]^{\dagger}a^{\dagger}(f)\left|\bm{n}\right\rangle \right\Vert ^{2}\nonumber\\
&\quad= \sum_{\bm{n}}\left\Vert \sum_{j}a^{\dagger}(f)\mathrm{e}^{-\beta H_{\mathrm{B}}}f_{j}\sqrt{n_{j}+1}\left|\bm{n}(n_{j}\rightarrow n_{j}+1)\right\rangle \right.\nonumber\\
&\quad\qquad\left.-\sum_{j}\mathrm{e}^{-\beta H_{\mathrm{B}}}a^{\dagger}(f)f_{j}\sqrt{n_{j}+1}\left|\bm{n}(n_{j}\rightarrow n_{j}+1)\right\rangle \right\Vert ^{2}\nonumber\\
&\quad= \sum_{\bm{n}}\Bigg\Vert\sum_{j\neq k}f_{k}f_{j}\sqrt{n_{k}+1}\sqrt{n_{j}+1}\mathrm{e}^{-\beta\sum_{i}n_{i}\omega_{i}}\mathrm{e}^{-\beta\omega_{j}}\left|\bm{n}(n_{j}\rightarrow n_{j}+1,n_{k}\rightarrow n_{k}+1)\right\rangle \nonumber\\
 & \quad\quad+\sum_{j}f_{j}^{2}\sqrt{n_{j}+1}\sqrt{n_{j}+2}\mathrm{e}^{-\beta\sum_{i}n_{i}\omega_{i}}\mathrm{e}^{-\beta\omega_{j}}\left|\bm{n}(n_{j}\rightarrow n_{j}+2)\right\rangle\nonumber \\
 & \quad\quad-\sum_{j\neq k}\mathrm{e}^{-\beta\sum_{i}n_{i}\omega_{i}}\mathrm{e}^{-\beta(\omega_{j}+\omega_{k})}f_{k}f_{j}\sqrt{n_{j}+1}\sqrt{n_{k}+1}\left|\bm{n}(n_{j}\rightarrow n_{j}+1,n_{k}\rightarrow n_{k}+1)\right\rangle \nonumber\\
 & \quad\quad-\sum_{j}\mathrm{e}^{-\beta\sum_{i}n_{i}\omega_{i}}\mathrm{e}^{-2\beta\omega_{j}}f_{j}^{2}\sqrt{n_{j}+1}\sqrt{n_{j}+2}\left|\bm{n}(n_{j}\rightarrow n_{j}+2)\right\rangle \Bigg\Vert^{2}\nonumber\\
&\quad= \sum_{\bm{n}}\Bigg\Vert\sum_{j\neq k}f_{k}f_{j}\sqrt{n_{k}+1}\sqrt{n_{j}+1}\mathrm{e}^{-\beta\sum_{i}n_{i}\omega_{i}}\left(\mathrm{e}^{-\beta\omega_{j}}-\mathrm{e}^{-\beta(\omega_{j}+\omega_{k})}\right)\nonumber\\
&\quad\qquad\qquad\quad\times\left|\bm{n}(n_{j}\rightarrow n_{j}+1,n_{k}\rightarrow n_{k}+1)\right\rangle \nonumber\\
 & \quad\qquad+\sum_{j}f_{j}^{2}\sqrt{n_{j}+1}\sqrt{n_{j}+2}\mathrm{e}^{-\beta\sum_{i}n_{i}\omega_{i}}\left(\mathrm{e}^{-\beta\omega_{j}}-\mathrm{e}^{-2\beta\omega_{j}}\right)\left|\bm{n}(n_{j}\rightarrow n_{j}+2)\right\rangle \Bigg\Vert^{2}\nonumber\\
&\quad< \left\Vert f\right\Vert ^{4}\sum_{\bm{n}}\left(\sum_{i}n_{i}+1\right)\left(\sum_{i}n_{i}+2\right)\mathrm{e}^{-2\beta\sum_{i}n_{i}\omega_{i}}\nonumber\\
&\quad\leq \left\Vert f\right\Vert ^{4}\sum_{\bm{n}}\left[\frac{1}{m^{2}}\left(\sum_{i}n_{i}\omega_{i}\right)^{2}+\frac{3}{m}\left(\sum_{i}n_{i}\omega_{i}\right)+2\right]\mathrm{e}^{-2\beta\sum_{i}n_{i}\omega_{i}}\nonumber\\
&\quad= \left\Vert f\right\Vert ^{4}\left[\frac{1}{m^{2}}\frac{\mathrm{d}^{2}}{\mathrm{d}(2\beta)^{2}}-\frac{3}{m}\frac{\mathrm{d}}{\mathrm{d}(2\beta)}+2\right]Z(2\beta),
\end{align}
\endgroup
and~\ref{it:finite_norm_xvii}:
\begingroup
\allowdisplaybreaks
\begin{align}
&\left\Vert a(f)\left[a^{\dagger}(f),\mathrm{e}^{-\beta H_{\mathrm{B}}}\right]\right\Vert _{\mathrm{HS}}^{2} 
 =\sum_{\bm{n}}\left\Vert a(f)\left[a^{\dagger}(f),\mathrm{e}^{-\beta H_{\mathrm{B}}}\right]\left|\bm{n}\right\rangle \right\Vert ^{2}\nonumber\\
 &\quad =\sum_{\bm{n}}\left\Vert a(f)\sum_{j}f_{j}\sqrt{n_{j}+1}\mathrm{e}^{-\beta\sum_{i}\omega_{i}n_{i}}\left(1-\mathrm{e}^{-\beta\omega_{j}}\right)\left|\bm{n}(n_{j}\rightarrow n_{j}+1)\right\rangle \right\Vert ^{2}\nonumber\\
 &\quad =\sum_{\bm{n}}\left\Vert \sum_{j\neq k}f_{j}f_{k}^{*}\sqrt{n_{j}+1}\sqrt{n_{k}}\mathrm{e}^{-\beta\sum_{i}\omega_{i}n_{i}}\left(1-\mathrm{e}^{-\beta\omega_{j}}\right)\left|\bm{n}(n_{j}\rightarrow n_{j}+1,n_{k}\rightarrow n_{k}-1)\right\rangle \right.\nonumber\\
 &\quad\qquad\left.+\sum_{j}\left|f_{j}\right|^{2}\left(n_{j}+1\right)\mathrm{e}^{-\beta\sum_{i}\omega_{i}n_{i}}\left(1-\mathrm{e}^{-\beta\omega_{j}}\right)\left|\bm{n}\right\rangle \right\Vert ^{2}\nonumber\\
 &\quad <\left\Vert f\right\Vert ^{4}\sum_{\bm{n}}\mathrm{e}^{-2\beta\sum_{i}\omega_{i}n_{i}}\left(\sum_{j}\left(1-\mathrm{e}^{-\beta\omega_{j}}\right)^{2}\left(n_{j}+1\right)\right)\left(\sum_{k}\left(n_{k}+1\right)\right)\nonumber\\
 &\quad <\left\Vert f\right\Vert ^{4}\sum_{\bm{n}}\mathrm{e}^{-2\beta\sum_{i}\omega_{i}n_{i}}\left(\sum_{i}n_{i}+1\right)^{2}\nonumber\\
 &\quad \leq\left\Vert f\right\Vert ^{4}\left[\frac{1}{m^{2}}\frac{\mathrm{d}^{2}}{\mathrm{d}(2\beta)^{2}}-\frac{2}{m}\frac{\mathrm{d}}{\mathrm{d}(2\beta)}+1\right]Z(2\beta),
\end{align}
\endgroup
where we used Lemma~\ref{lemma:commutation} in each computation.
We continue with commutators of the form~\ref{it:finite_norm_xviii}:
\begingroup
\allowdisplaybreaks
\begin{align}
&\left\Vert \left[a^{\dagger}(f),a(f)\mathrm{e}^{-\beta H_{\mathrm{B}}}\right]\right\Vert _{\mathrm{HS}}^{2}
= \sum_{\bm{n}}\left\Vert a^{\dagger}(f)a(f)\mathrm{e}^{-\beta H_{\mathrm{B}}}\left|\bm{n}\right\rangle -a(f)\mathrm{e}^{-\beta H_{\mathrm{B}}}a^{\dagger}(f)\left|\bm{n}\right\rangle \right\Vert ^{2}\nonumber\\
&\quad= \sum_{\bm{n}}\left\Vert \sum_{j}a^{\dagger}(f)f_{j}^{*}\sqrt{n_{j}}\mathrm{e}^{-\beta\sum_{i}\omega_{i}n_{i}}\left|\bm{n}(n_{j}\rightarrow n_{j}-1)\right\rangle \right.\nonumber\\
&\quad\qquad\left.-\sum_{k}a(f)f_{k}\sqrt{n_{k}+1}\mathrm{e}^{-\beta\sum_{i}\omega_{i}n_{i}}\mathrm{e}^{-\beta\omega_{k}}\left|\bm{n}(n_{k}\rightarrow n_{k}+1)\right\rangle \right\Vert ^{2}\nonumber\\
&\quad= \sum_{\bm{n}}\Bigg\Vert\sum_{j\neq k}f_{j}^{*}f_{k}\sqrt{n_{j}}\sqrt{n_{k}+1}\mathrm{e}^{-\beta\sum_{i}\omega_{i}n_{i}}\left(1-\mathrm{e}^{-\beta\omega_{k}}\right)\left|\bm{n}(n_{j}\rightarrow n_{j}-1,n_{k}\rightarrow n_{k}+1)\right\rangle \nonumber\\
 & \quad\qquad+\sum_{j}\left|f_{j}\right|^{2}n_{j}\mathrm{e}^{-\beta\sum_{i}\omega_{i}n_{i}}\left|\bm{n}\right\rangle -\sum_{j}\left|f_{j}\right|^{2}\left(n_{j}+1\right)\mathrm{e}^{-\beta\sum_{i}\omega_{i}n_{i}}\mathrm{e}^{-\beta\omega_{j}}\left|\bm{n}\right\rangle \Bigg\Vert^{2}\nonumber\\
&\quad< \left\Vert f\right\Vert ^{4}\sum_{\bm{n}}\left(\sum_{i}n_{i}+1\right)^{2}\mathrm{e}^{-2\beta\sum_{i}\omega_{i}n_{i}}\nonumber\\
&\quad\leq  \left\Vert f\right\Vert ^{4}\sum_{\bm{n}}\mathrm{e}^{-2\beta\sum_{i}n_{i}\omega_{i}}\left[\frac{1}{m^{2}}\left(\sum_{i}n_{i}\omega_{i}\right)^{2}+\frac{2}{m}\left(\sum_{i}n_{i}\omega_{i}\right)+1\right]\nonumber\\
&\quad= \left\Vert f\right\Vert ^{4}\left[\frac{1}{m^{2}}\frac{\mathrm{d}^{2}}{\mathrm{d}(2\beta)^{2}}-\frac{2}{m}\frac{\mathrm{d}}{\mathrm{d}(2\beta)}+1\right]Z(2\beta),
\end{align}
\endgroup
~\ref{it:finite_norm_xix}:
\begin{align}
\left\Vert \left[a^{\dagger}(f),a^{\dagger}(f)\mathrm{e}^{-\beta H_{\mathrm{B}}}\right]\right\Vert _{\mathrm{HS}} 
& =\left\Vert a^{\dagger}(f)a^{\dagger}(f)\mathrm{e}^{-\beta H_{\mathrm{B}}}-a^{\dagger}(f)\mathrm{e}^{-\beta H_{\mathrm{B}}}a^{\dagger}(f)\right\Vert _{\mathrm{HS}}\nonumber\\
 & =\left\Vert a^{\dagger}(f)\left[a^{\dagger}(f),\mathrm{e}^{-\beta H_{\mathrm{B}}}\right]\right\Vert _{\mathrm{HS}}\nonumber\\
 & <\left\Vert f\right\Vert ^{4}\left[\frac{1}{m^{2}}\frac{\mathrm{d}^{2}}{\mathrm{d}(2\beta)^{2}}-\frac{3}{m}\frac{\mathrm{d}}{\mathrm{d}(2\beta)}+2\right]Z(2\beta),
\end{align}
and~\ref{it:finite_norm_xx}:
\begin{align}
\left\Vert \left[a(f),a(f)\mathrm{e}^{-\beta H_{\mathrm{B}}}\right]\right\Vert _{\mathrm{HS}} 
& =\left\Vert a(f)a(f)\mathrm{e}^{-\beta H_{\mathrm{B}}}-a(f)\mathrm{e}^{-\beta H_{\mathrm{B}}}a(f)\right\Vert _{\mathrm{HS}}\nonumber\\
 & =\left\Vert a(f)\left[a(f),\mathrm{e}^{-\beta H_{\mathrm{B}}}\right]\right\Vert _{\mathrm{HS}}\nonumber\\
 & <\left\Vert f\right\Vert ^{4}\left[\frac{1}{m^{2}}\frac{\mathrm{d}^{2}}{\mathrm{d}(2\beta)^{2}}-\frac{3}{m}\frac{\mathrm{d}}{\mathrm{d}(2\beta)}+2\right]Z(2\beta).
\end{align}
Next, we have the double commutators.
Starting with~\ref{it:finite_norm_xx}, we have
\begingroup
\allowdisplaybreaks
\begin{align}
&\left\Vert \left[a^{\dagger}(f),\left[a^{\dagger}(f),\mathrm{e}^{-\beta H_{\mathrm{B}}}\right]\right]\right\Vert _{\mathrm{HS}}^{2}\nonumber\\
&\quad= \left\Vert a^{\dagger}(f)\left[a^{\dagger}(f),\mathrm{e}^{-\beta H_{\mathrm{B}}}\right]+\mathrm{e}^{-\beta H_{\mathrm{B}}}a^{\dagger}(f)a^{\dagger}(f)-a^{\dagger}(f)\mathrm{e}^{-\beta H_{\mathrm{B}}}a^{\dagger}(f)\right\Vert _{\mathrm{HS}}^{2}\nonumber\\
&\quad= \sum_{\bm{n}}\Bigg\Vert\sum_{j}a^{\dagger}(f)f_{j}\sqrt{n_{j}+1}\mathrm{e}^{-\beta\sum_{i}\omega_{i}n_{i}}\left(1-\mathrm{e}^{-\beta\omega_{j}}\right)\left|\bm{n}(n_{j}\rightarrow n_{j}+1)\right\rangle \nonumber\\
 & \quad\qquad+\sum_{j}\mathrm{e}^{-\beta H_{\mathrm{B}}}a^{\dagger}(f)f_{j}\sqrt{n_{j}+1}\left|\bm{n}(n_{j}\rightarrow n_{j}+1)\right\rangle \nonumber\\
 & \quad\qquad-\sum_{j}a^{\dagger}(f)\mathrm{e}^{-\beta H_{\mathrm{B}}}f_{j}\sqrt{n_{j}+1}\left|\bm{n}(n_{j}\rightarrow n_{j}+1)\right\rangle \Bigg\Vert^{2}\nonumber\\
&\quad= \sum_{\bm{n}}\Bigg\Vert\sum_{j\neq k}f_{j}f_{k}\sqrt{n_{j}+1}\sqrt{n_{k}+1}\mathrm{e}^{-\beta\sum_{i}\omega_{i}n_{i}}\left(1-\mathrm{e}^{-\beta\omega_{j}}\right)\left|\bm{n}(n_{j}\rightarrow n_{j}+1,n_{k}\rightarrow n_{k}+1)\right\rangle \nonumber\\
 & \quad\qquad+\sum_{j}f_{j}^{2}\sqrt{n_{j}+1}\sqrt{n_{j}+2}\mathrm{e}^{-\beta\sum_{i}\omega_{i}n_{i}}\left(1-\mathrm{e}^{-\beta\omega_{j}}\right)\left|\bm{n}(n_{j}\rightarrow n_{j}+2)\right\rangle \nonumber\\
 & \quad\qquad+\sum_{j\neq k}\mathrm{e}^{-\beta\sum_{i}\omega_{i}n_{i}}\mathrm{e}^{-\beta\omega_{j}}\mathrm{e}^{-\beta\omega_{k}}f_{j}f_{k}\sqrt{n_{j}+1}\sqrt{n_{k}+1}\left|\bm{n}(n_{j}\rightarrow n_{j}+1,n_{k}\rightarrow n_{k}+1)\right\rangle \nonumber\\
 & \quad\qquad+\sum_{j}\mathrm{e}^{-\beta\sum_{i}\omega_{i}n_{i}}\mathrm{e}^{-\beta\omega_{j}}\mathrm{e}^{-\beta\omega_{j}}f_{j}^{2}\sqrt{n_{j}+1}\sqrt{n_{j}+2}\left|\bm{n}(n_{j}\rightarrow n_{j}+2)\right\rangle \nonumber\\
 & \quad\qquad-\sum_{j\neq k}\mathrm{e}^{-\beta\sum_{i}\omega_{i}n_{i}}\mathrm{e}^{-\beta\omega_{j}}f_{j}f_{k}\sqrt{n_{j}+1}\sqrt{n_{k}+1}\left|\bm{n}(n_{j}\rightarrow n_{j}+1,n_{k}\rightarrow n_{k}+1)\right\rangle \nonumber\\
 & \quad\qquad-\sum_{j}\mathrm{e}^{-\beta\sum_{i}\omega_{i}n_{i}}\mathrm{e}^{-\beta\omega_{j}}f_{j}^{2}\sqrt{n_{j}+1}\sqrt{n_{j}+2}\left|\bm{n}(n_{j}\rightarrow n_{j}+2)\right\rangle \Bigg\Vert^{2}\nonumber\\
&\quad= \sum_{\bm{n}}\Bigg\Vert\sum_{j\neq k}f_{j}f_{k}\sqrt{n_{j}+1}\sqrt{n_{k}+1}\mathrm{e}^{-\beta\sum_{i}\omega_{i}n_{i}}\left(1-2\mathrm{e}^{-\beta\omega_{j}}+\mathrm{e}^{-\beta\omega_{j}}\mathrm{e}^{-\beta\omega_{k}}\right)\nonumber\\
&\quad\qquad\qquad\quad\times\left|\bm{n}(n_{j}\rightarrow n_{j}+1,n_{k}\rightarrow n_{k}+1)\right\rangle \nonumber\\
 & \quad\qquad+\sum_{j}f_{j}^{2}\sqrt{n_{j}+1}\sqrt{n_{j}+2}\mathrm{e}^{-\beta\sum_{i}\omega_{i}n_{i}}\left(1-2\mathrm{e}^{-\beta\omega_{j}}+\mathrm{e}^{-\beta\omega_{j}}\mathrm{e}^{-\beta\omega_{j}}\right)\nonumber\\
 &\quad\qquad\qquad\quad\times\left|\bm{n}(n_{j}\rightarrow n_{j}+2)\right\rangle \Bigg\Vert^{2}\nonumber\\
&\quad< \left\Vert f\right\Vert^4 \sum_{\bm{n}}\sum_{j,k}\left(n_{j}+2\right)\left(n_{k}+1\right)\mathrm{e}^{-2\beta\sum_{i}\omega_{i}n_{i}}\left(1-2\mathrm{e}^{-\beta\omega_{j}}+\mathrm{e}^{-\beta\omega_{j}}\mathrm{e}^{-\beta\omega_{k}}\right)^{2}\nonumber\\
&\quad< 4\left\Vert f\right\Vert^4 \sum_{\bm{n}}\sum_{j,k}\left(n_{j}+2\right)\left(n_{k}+1\right)\mathrm{e}^{-2\beta\sum_{i}\omega_{i}n_{i}}\left(1-\mathrm{e}^{-\beta\omega_{j}}\right)^{2}\nonumber\\
&\quad< 4\left\Vert f\right\Vert^4 \sum_{\bm{n}}\left(\sum_{i}n_{i}+2\right)\left(\sum_{i}n_{i}+1\right)\mathrm{e}^{-2\beta\sum_{i}\omega_{i}n_{i}}\nonumber\\
&\quad\leq 4\left\Vert f\right\Vert^4 \sum_{\bm{n}}\left[\frac{1}{m^{2}}\left(\sum_{j}n_{j}\omega_{j}\right)^{2}+\frac{3}{m}\left(\sum_{j}n_{j}\omega_{j}\right)+2\right]\mathrm{e}^{-2\beta\sum_{i}\omega_{i}n_{i}}\nonumber\\
&\quad= 4\left\Vert f\right\Vert^4 \left[\frac{1}{m^{2}}\frac{\mathrm{d}^{2}}{\mathrm{d}(2\beta)^{2}}-\frac{3}{m}\frac{\mathrm{d}}{\mathrm{d}(2\beta)}+2\right]Z(2\beta).
\end{align}
\endgroup
Furthermore,~\ref{it:finite_norm_xxii} reads
\begingroup
\allowdisplaybreaks
\begin{align}
&\left\Vert \left[a(f),\left[a(f),\mathrm{e}^{-\beta H_{\mathrm{B}}}\right]\right]\right\Vert _{\mathrm{HS}}^{2} \nonumber\\
&\quad =\left\Vert a^{\dagger}(f)\left[a^{\dagger}(f),\mathrm{e}^{-\beta H_{\mathrm{B}}}\right]+\mathrm{e}^{-\beta H_{\mathrm{B}}}a^{\dagger}(f)a^{\dagger}(f)-a^{\dagger}(f)\mathrm{e}^{-\beta H_{\mathrm{B}}}a^{\dagger}(f)\right\Vert _{\mathrm{HS}}^{2}\nonumber\\
&\quad= \left\Vert \left[a^{\dagger}(f),\left[a^{\dagger}(f),\mathrm{e}^{-\beta H_{\mathrm{B}}}\right]\right]\right\Vert _{\mathrm{HS}}^{2}\nonumber\\
 &\quad < 4\left\Vert f\right\Vert^4 \left[\frac{1}{m^{2}}\frac{\mathrm{d}^{2}}{\mathrm{d}(2\beta)^{2}}-\frac{3}{m}\frac{\mathrm{d}}{\mathrm{d}(2\beta)}+2\right]Z(2\beta).
\end{align}
\endgroup

Finally, we compute the commutators involving $H_\rmb$, beginning with~\ref{it:finite_norm_xxiii}:
\begingroup
\allowdisplaybreaks
\begin{align}
&\left\Vert \left[H_{\mathrm{B}},a(f)\mathrm{e}^{-\beta H_{\mathrm{B}}}\right]\right\Vert _{\mathrm{HS}}^{2}
  =\sum_{\bm{n}}\left\Vert H_{\mathrm{B}}a(f)\mathrm{e}^{-\beta H_{\mathrm{B}}}\left|\bm{n}\right\rangle -a(f)\mathrm{e}^{-\beta H_{\mathrm{B}}}H_{\mathrm{B}}\left|\bm{n}\right\rangle \right\Vert ^{2}\nonumber\\
&\quad=\sum_{\bm{n}}\left\Vert \sum_{j}\left(\left(n_{j}-1\right)\omega_{j}+\sum_{i\neq j}n_{i}\omega_{i}\right)f_{j}^{*}\sqrt{n_{j}}\mathrm{e}^{-\beta\sum_{i}\omega_{i}n_{i}}\left|\bm{n}(n_{j}\rightarrow n_{j}-1)\right\rangle \right.\nonumber\\
 &\quad\qquad\left.-\sum_{j}f_{j}^{*}\sqrt{n_{j}}\mathrm{e}^{-\beta\sum_{i}\omega_{i}n_{i}}\left(\sum_{i}n_{i}\omega_{i}\right)\left|\bm{n}(n_{j}\rightarrow n_{j}-1)\right\rangle \right\Vert ^{2}\nonumber\\
 &\quad=\sum_{\bm{n}}\left\Vert \sum_{j}-\omega_{j}f_{j}^{*}\sqrt{n_{j}}\mathrm{e}^{-\beta\sum_{i}\omega_{i}n_{i}}\left|\bm{n}(n_{j}\rightarrow n_{j}-1)\right\rangle \right\Vert ^{2}\nonumber\\
 & \quad\leq\frac{\left\Vert \omega f\right\Vert ^{2}}{m}\sum_{\bm{n}}\left(\sum_{i}\omega_{i}n_{i}\right)\mathrm{e}^{-2\beta\sum_{i}\omega_{i}n_{i}}\nonumber\\
 &\quad =-\frac{\left\Vert \omega f\right\Vert ^{2}}{m}\frac{\mathrm{d}}{\mathrm{d}(2\beta)}Z(2\beta).
\end{align}
\endgroup
Likewise~\ref{it:finite_norm_xxiv} reads
\begingroup
\allowdisplaybreaks
\begin{align}
&\left\Vert \left[H_{\mathrm{B}},a^{\dagger}(f)\mathrm{e}^{-\beta H_{\mathrm{B}}}\right]\right\Vert _{\mathrm{HS}}^{2} 
 =\sum_{\bm{n}}\left\Vert H_{\mathrm{B}}a^{\dagger}(f)\mathrm{e}^{-\beta H_{\mathrm{B}}}\left|\bm{n}\right\rangle -a^{\dagger}(f)\mathrm{e}^{-\beta H_{\mathrm{B}}}H_{\mathrm{B}}\left|\bm{n}\right\rangle \right\Vert ^{2}\nonumber\\
 &\quad=\sum_{\bm{n}}\left\Vert \sum_{j}\left(\left(n_{j}+1\right)\omega_{j}+\sum_{i\neq j}n_{i}\omega_{i}\right)f_{j}\sqrt{n_{j}+1}\mathrm{e}^{-\beta\sum_{i}\omega_{i}n_{i}}\left|\bm{n}(n_{j}\rightarrow n_{j}+1)\right\rangle \right.\nonumber\\
 &\quad\qquad\left.-\sum_{j}f_{j}\sqrt{n_{j}+1}\mathrm{e}^{-\beta\sum_{i}\omega_{i}n_{i}}\left(\sum_{i}n_{i}\omega_{i}\right)\left|\bm{n}(n_{j}\rightarrow n_{j}+1)\right\rangle \right\Vert ^{2}\nonumber\\
 &\quad=\sum_{\bm{n}}\left\Vert \sum_{j}\omega_{j}f_{j}\sqrt{n_{j}+1}\mathrm{e}^{-\beta\sum_{i}\omega_{i}n_{i}}\left|\bm{n}(n_{j}\rightarrow n_{j}+1)\right\rangle \right\Vert ^{2}\nonumber\\
 &\quad \leq\frac{\left\Vert \omega f\right\Vert ^{2}}{m}\sum_{\bm{n}}\left(\sum_{i}\omega_{i}n_{i}\right)\mathrm{e}^{-2\beta\sum_{i}\omega_{i}n_{i}}+\left\Vert \omega f\right\Vert ^{2}\sum_{\bm{n}}\mathrm{e}^{-2\beta\sum_{i}\omega_{i}n_{i}}\nonumber\\
 &\quad =-\frac{\left\Vert \omega f\right\Vert ^{2}}{m}\frac{\mathrm{d}}{\mathrm{d}(2\beta)}Z(2\beta)+\left\Vert \omega f\right\Vert ^{2}Z(2\beta).
\end{align}
\endgroup
The last two are the double commutators with $H_\rmb$, i.e.~\ref{it:finite_norm_xxv}:
\begingroup
\allowdisplaybreaks
\begin{align}
&\left\Vert \left[H_{\mathrm{B}},\left[a^{\dagger}(f),\mathrm{e}^{-\beta H_{\mathrm{B}}}\right]\right]\right\Vert _{\mathrm{HS}}^{2} 
 =\sum_{\bm{n}}\left\Vert H_{\mathrm{B}}\left[a^{\dagger}(f),\mathrm{e}^{-\beta H_{\mathrm{B}}}\right]\left|\bm{n}\right\rangle -\left[a^{\dagger}(f),\mathrm{e}^{-\beta H_{\mathrm{B}}}\right]H_{\mathrm{B}}\left|\bm{n}\right\rangle \right\Vert ^{2}\nonumber\\
 &\quad =\sum_{\bm{n}}\Bigg\Vert\sum_{j}\left(\left[\omega_{j}\left(n_{j}+1\right)+\sum_{i\neq j}\omega_{i}n_{i}\right]f_{j}\sqrt{n_{j}+1}\mathrm{e}^{-\beta\sum_{i}n_{i}\omega_{i}}\left[1-\mathrm{e}^{-\beta\omega_{j}}\right]\right.\nonumber\\
 &\quad\qquad\left.-\left(\sum_{i}\omega_{i}n_{i}\right)f_{j}\sqrt{n_{j}+1}\mathrm{e}^{-\beta\sum_{i}n_{i}\omega_{i}}\left[1-\mathrm{e}^{-\beta\omega_{j}}\right]\right)\left|\bm{n}(n_{j}\rightarrow n_{j}+1)\right\rangle \Bigg\Vert^{2}\nonumber\\
 &\quad =\sum_{\bm{n}}\left\Vert \sum_{j}\omega_{j}f_{j}\sqrt{n_{j}+1}\mathrm{e}^{-\beta\sum_{i}n_{i}\omega_{i}}\left[1-\mathrm{e}^{-\beta\omega_{j}}\right]\left|\bm{n}(n_{j}\rightarrow n_{j}+1)\right\rangle \right\Vert \nonumber\\
 &\quad <\frac{\left\Vert \omega f\right\Vert ^{2}}{m}\sum_{\bm{n}}\left(\sum_{i}\omega_{i}n_{i}\right)\mathrm{e}^{-2\beta\sum_{i}\omega_{i}n_{i}}+\left\Vert \omega f\right\Vert ^{2}\sum_{\bm{n}}\mathrm{e}^{-2\beta\sum_{i}\omega_{i}n_{i}}\nonumber\\
 &\quad =-\frac{\left\Vert \omega f\right\Vert ^{2}}{m}\frac{\mathrm{d}}{\mathrm{d}(2\beta)}Z(2\beta)+\left\Vert \omega f\right\Vert ^{2}Z(2\beta)
\end{align}
\endgroup
and~\ref{it:finite_norm_xxvi}:
\begingroup
\allowdisplaybreaks
\begin{align}
\left\Vert \left[H_{\mathrm{B}},\left[a(f),\mathrm{e}^{-\beta H_{\mathrm{B}}}\right]\right]\right\Vert _{\mathrm{HS}}^{2} 
& =\left\Vert \left(H_{\mathrm{B}}\left[a(f),\mathrm{e}^{-\beta H_{\mathrm{B}}}\right]-\left[a(f),\mathrm{e}^{-\beta H_{\mathrm{B}}}\right]H_{\mathrm{B}}\right)^{\dagger}\right\Vert _{\mathrm{HS}}^{2}\nonumber\\
 & =\left\Vert \left[H_{\mathrm{B}},\left[a^{\dagger}(f),\mathrm{e}^{-\beta H_{\mathrm{B}}}\right]\right]\right\Vert _{\mathrm{HS}}^{2}\nonumber\\
 & <-\frac{\left\Vert \omega f\right\Vert ^{2}}{m}\frac{\mathrm{d}}{\mathrm{d}(2\beta)}Z(2\beta)+\left\Vert \omega f\right\Vert ^{2}Z(2\beta).
\end{align}
\endgroup
This completes the proof.
	\end{proof}
\end{lemma}

The remaining norms that we have not computed yet involve commutators with the annihilation operator $a(f)$.
We can see from Lemma~\ref{lemma:commutation}~(ii) that these terms will have a factor $\big(1-\rme^{+\beta\omega_k}\big)^2$, which increases exponentially in $\beta$; therefore, we cannot simply bound this factor independently of $\beta$ as we have done for commutators with the creation operators $a^\dagger(f)$ in Lemma~\ref{lem:fundamental_norms} (where we used $\big(1-\rme^{-\beta\omega_k}\big)^2\leq1$).
For this reason, we pursue a different strategy for the commutators containing $a(f)$, which employs the following lemma.

\begin{lemma}\label{lem:auxiliary_annihilation}
The following inequality holds
\begin{equation}
	\left\Vert a^{\sharp_{1}}(f)\mathrm{e}^{-\beta H_{\mathrm{B}}}a^{\sharp_{2}}(f)\right\Vert _{\mathrm{HS}}^{2}\leq\left\Vert a^{\sharp_{1}}(f)^{\dagger}a^{\sharp_{1}}(f)\mathrm{e}^{-\beta H_{\mathrm{B}}}\right\Vert _{\mathrm{HS}}\left\Vert a^{\sharp_{2}}(f)a^{\sharp_{2}}(f)^{\dagger}\mathrm{e}^{-\beta H_{\mathrm{B}}}\right\Vert _{\mathrm{HS}},
\end{equation}
where $a^\sharp(f)$ denotes either the annihilation or creation operator.
\begin{proof}
\begingroup
\allowdisplaybreaks
	\begin{align}
\left\Vert a^{\sharp_{1}}(f)\mathrm{e}^{-\beta H_{\mathrm{B}}}a^{\sharp_{2}}(f)\right\Vert _{\mathrm{HS}}^{2} 
& =\mathrm{tr}\left(a^{\sharp_{1}}(f)\mathrm{e}^{-\beta H_{\mathrm{B}}}a^{\sharp_{2}}(f)a^{\sharp_{2}}(f)^{\dagger}\mathrm{e}^{-\beta H_{\mathrm{B}}}a^{\sharp_{1}}(f)^{\dagger}\right)\nonumber\\
 & =\mathrm{tr}\left(a^{\sharp_{1}}(f)^{\dagger}a^{\sharp_{1}}(f)\mathrm{e}^{-\beta H_{\mathrm{B}}}a^{\sharp_{2}}(f)a^{\sharp_{2}}(f)^{\dagger}\mathrm{e}^{-\beta H_{\mathrm{B}}}\right)\nonumber\\
 & =\left|\left\langle \mathrm{e}^{-\beta H_{\mathrm{B}}}a^{\sharp_{1}}(f)^{\dagger}a^{\sharp_{1}}(f),a^{\sharp_{2}}(f)a^{\sharp_{2}}(f)^{\dagger}\mathrm{e}^{-\beta H_{\mathrm{B}}}\right\rangle_{\mathrm{HS}} \right|\nonumber\\
 & \leq\left\Vert \mathrm{e}^{-\beta H_{\mathrm{B}}}a^{\sharp_{1}}(f)^{\dagger}a^{\sharp_{1}}(f)\right\Vert _{\mathrm{HS}}\left\Vert a^{\sharp_{2}}(f)a^{\sharp_{2}}(f)^{\dagger}\mathrm{e}^{-\beta H_{\mathrm{B}}}\right\Vert _{\mathrm{HS}}\nonumber\\
 & =\left\Vert a^{\sharp_{1}}(f)^{\dagger}a^{\sharp_{1}}(f)\mathrm{e}^{-\beta H_{\mathrm{B}}}\right\Vert _{\mathrm{HS}}\left\Vert a^{\sharp_{2}}(f)a^{\sharp_{2}}(f)^{\dagger}\mathrm{e}^{-\beta H_{\mathrm{B}}}\right\Vert _{\mathrm{HS}},
\end{align}
\endgroup
where the third line uses that the Hilbert--Schmidt norm is induced by the Hilbert--Schmidt inner product $\langle\cdot,\cdot\rangle_\mathrm{HS}$ and the fourth line is the Cauchy--Schwarz inequality.
\end{proof}
\end{lemma}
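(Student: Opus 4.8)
The plan is to reduce the claimed inequality to a single application of the Cauchy--Schwarz inequality in the Hilbert--Schmidt inner product $\langle S,T\rangle_{\mathrm{HS}}=\tr(S^\dagger T)$. The natural starting point is to express the left-hand side as a trace, $\|a^{\sharp_1}(f)\mathrm{e}^{-\beta H_{\mathrm{B}}}a^{\sharp_2}(f)\|_{\mathrm{HS}}^2=\tr\big(a^{\sharp_1}(f)\mathrm{e}^{-\beta H_{\mathrm{B}}}a^{\sharp_2}(f)\,a^{\sharp_2}(f)^\dagger\mathrm{e}^{-\beta H_{\mathrm{B}}}a^{\sharp_1}(f)^\dagger\big)$, using $\|A\|_{\mathrm{HS}}^2=\tr(AA^\dagger)$ together with the self-adjointness of $\mathrm{e}^{-\beta H_{\mathrm{B}}}$.

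First I would invoke cyclicity of the trace to move the outermost factor $a^{\sharp_1}(f)^\dagger$ to the front, producing $\tr\big(a^{\sharp_1}(f)^\dagger a^{\sharp_1}(f)\,\mathrm{e}^{-\beta H_{\mathrm{B}}}\,a^{\sharp_2}(f)a^{\sharp_2}(f)^\dagger\,\mathrm{e}^{-\beta H_{\mathrm{B}}}\big)$. The purpose of this rearrangement is that the two manifestly self-adjoint, nonnegative operators $a^{\sharp_1}(f)^\dagger a^{\sharp_1}(f)$ and $a^{\sharp_2}(f)a^{\sharp_2}(f)^\dagger$ are now cleanly separated by the two Gibbs factors. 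I would then read this trace as the inner product $\langle \mathrm{e}^{-\beta H_{\mathrm{B}}}a^{\sharp_1}(f)^\dagger a^{\sharp_1}(f),\,a^{\sharp_2}(f)a^{\sharp_2}(f)^\dagger\mathrm{e}^{-\beta H_{\mathrm{B}}}\rangle_{\mathrm{HS}}$, using that $a^{\sharp_1}(f)^\dagger a^{\sharp_1}(f)$ equals its own adjoint and that $\mathrm{e}^{-\beta H_{\mathrm{B}}}$ is self-adjoint. Applying Cauchy--Schwarz, and then the adjoint-invariance of the Hilbert--Schmidt norm to the first factor, yields exactly the right-hand side of the lemma.

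The only genuine obstacle is analytic rather than algebraic: the operators $a^{\sharp}(f)$ are unbounded, so the trace identity, the cyclic permutation, and Cauchy--Schwarz all require justification. I would handle this by working on the dense linear span of the occupancy number basis $\{\ket{\bm{n}}\}$, which is mapped into itself by every polynomial in the $a_k,a_k^\dagger$ and by $\mathrm{e}^{-\beta H_{\mathrm{B}}}$, so that no vectors ever leave the domain and the Gibbs factor supplies the rapid decay making all intermediate quantities trace-class. Crucially, the two factors on the right-hand side are precisely of the type already shown to be finite in Lemma~\ref{lem:fundamental_norms} (namely items~\ref{it:finite_norm_vii} and~\ref{it:finite_norm_viii} and their obvious adjoint counterparts), so their finiteness is inherited and the bound is meaningful. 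I expect this lemma to serve as the technical device for controlling the remaining norms involving the annihilation operator $a(f)$, whose commutator with $\mathrm{e}^{-\beta H_{\mathrm{B}}}$ carries the dangerous factor $\big(1-\mathrm{e}^{+\beta\omega_k}\big)$ that, unlike its creation-operator analogue, cannot be bounded uniformly in $\beta$.
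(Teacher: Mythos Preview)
Your proposal is correct and follows exactly the same route as the paper: express the squared Hilbert--Schmidt norm as a trace, use cyclicity, recognize the result as a Hilbert--Schmidt inner product, and apply Cauchy--Schwarz followed by adjoint-invariance of the norm. Your additional remarks on domain issues and on the intended application to the $a(f)$-commutator terms are accurate and match the surrounding discussion in the paper.
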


Using Lemma~\ref{lem:auxiliary_annihilation}, we can compute the remaining terms.
We collect them in the following lemma.

\begin{lemma}\label{lem:fundamental_norms_cont}
	Under the same assumptions as in Lemma~\ref{lem:fundamental_norms} we have
	\begin{enumerate}[(i)]
		\item\label{it:finite_norm_cont_i} $\left\Vert a^{\dagger}(f)\left[a(f),\mathrm{e}^{-\beta H_{\mathrm{B}}}\right]\right\Vert _{\mathrm{HS}}^{2}<\left\Vert f\right\Vert ^{4}\left(\sqrt{\left[\frac{1}{m^{2}}\frac{\mathrm{d}^{2}}{\mathrm{d}(2\beta)^{2}}-\frac{1}{m}\frac{\mathrm{d}}{\mathrm{d}(2\beta)}\right]Z(2\beta)}\right.$\newline$\phantom{\left\Vert a^{\dagger}(f)\left[a(f),\mathrm{e}^{-\beta H_{\mathrm{B}}}\right]\right\Vert _{\mathrm{HS}}^{2}<\left\Vert f\right\Vert ^{4}}\left.+\sqrt{\left[\frac{1}{m^{2}}\frac{\mathrm{d}^{2}}{\mathrm{d}(2\beta)^{2}}-\frac{2}{m}\frac{\mathrm{d}}{\mathrm{d}(2\beta)}+1\right]Z(2\beta)}\right)^{2}$;
		\item\label{it:finite_norm_cont_ii} $\left\Vert \left[a(f),a^{\dagger}(f)\mathrm{e}^{-\beta H_{\mathrm{B}}}\right]\right\Vert _{\mathrm{HS}}^{2} < 4\left\Vert f\right\Vert ^{4}\left[\frac{1}{m^{2}}\frac{\mathrm{d}^{2}}{\mathrm{d}(2\beta)^{2}}-\frac{2}{m}\frac{\mathrm{d}}{\mathrm{d}(2\beta)}+1\right]Z(2\beta)$;
		\item\label{it:finite_norm_cont_iii} $\left\Vert \left[a(f),\left[a^{\dagger}(f),\mathrm{e}^{-\beta H_{\mathrm{B}}}\right]\right]\right\Vert _{\mathrm{HS}}^{2}< 4\left\Vert f\right\Vert ^{4}\left(\sqrt{\left[\frac{1}{m^{2}}\frac{\mathrm{d}^{2}}{\mathrm{d}(2\beta)^{2}}-\frac{2}{m}\frac{\mathrm{d}}{\mathrm{d}(2\beta)}+1\right]Z(2\beta)}\right.$\newline$\phantom{\left\Vert \left[a(f),\left[a^{\dagger}(f),\mathrm{e}^{-\beta H_{\mathrm{B}}}\right]\right]\right\Vert _{\mathrm{HS}}^{2}< 4\left\Vert f\right\Vert ^{4}}\left.+\sqrt{\left[\frac{1}{m^{2}}\frac{\mathrm{d}^{2}}{\mathrm{d}(2\beta)^{2}}-\frac{1}{m}\frac{\mathrm{d}}{\mathrm{d}(2\beta)}\right]Z(2\beta)}\right)^{2}$;
		\item\label{it:finite_norm_cont_iv} $\left\Vert \left[a^{\dagger}(f),\left[a(f),\mathrm{e}^{-\beta H_{\mathrm{B}}}\right]\right]\right\Vert _{\mathrm{HS}}^{2} < 4\left\Vert f\right\Vert ^{4}\left(\sqrt{\left[\frac{1}{m^{2}}\frac{\mathrm{d}^{2}}{\mathrm{d}(2\beta)^{2}}-\frac{2}{m}\frac{\mathrm{d}}{\mathrm{d}(2\beta)}+1\right]Z(2\beta)}\right.$\newline$\phantom{\left\Vert \left[a^{\dagger}(f),\left[a(f),\mathrm{e}^{-\beta H_{\mathrm{B}}}\right]\right]\right\Vert _{\mathrm{HS}}^{2} < 4\left\Vert f\right\Vert ^{4}}\left.+\sqrt{\left[\frac{1}{m^{2}}\frac{\mathrm{d}^{2}}{\mathrm{d}(2\beta)^{2}}-\frac{1}{m}\frac{\mathrm{d}}{\mathrm{d}(2\beta)}\right]Z(2\beta)}\right)^{2}$;
	\end{enumerate}
	\begin{proof}
	As before, we compute each term individually.
	We begin with~\ref{it:finite_norm_cont_i}:
	\begin{align}
&\left\Vert a^{\dagger}(f)\left[a(f),\mathrm{e}^{-\beta H_{\mathrm{B}}}\right]\right\Vert _{\mathrm{HS}}^{2} 
 \leq\left(\left\Vert a^{\dagger}(f)a(f)\mathrm{e}^{-\beta H_{\mathrm{B}}}\right\Vert _{\mathrm{HS}}+\left\Vert a^{\dagger}(f)\mathrm{e}^{-\beta H_{\mathrm{B}}}a(f)\right\Vert _{\mathrm{HS}}\right)^{2}\nonumber\\
 &\quad \leq\left(\left\Vert a^{\dagger}(f)a(f)\mathrm{e}^{-\beta H_{\mathrm{B}}}\right\Vert _{\mathrm{HS}}+\sqrt{\left\Vert a(f)a^{\dagger}(f)\mathrm{e}^{-\beta H_{\mathrm{B}}}\right\Vert _{\mathrm{HS}}\left\Vert a(f)a^{\dagger}(f)\mathrm{e}^{-\beta H_{\mathrm{B}}}\right\Vert _{\mathrm{HS}}}\right)^{2}\nonumber\\
 &\quad <\left\Vert f\right\Vert ^{4}\left(\sqrt{\left[\frac{1}{m^{2}}\frac{\mathrm{d}^{2}}{\mathrm{d}(2\beta)^{2}}-\frac{1}{m}\frac{\mathrm{d}}{\mathrm{d}(2\beta)}\right]Z(2\beta)}+\sqrt{\left[\frac{1}{m^{2}}\frac{\mathrm{d}^{2}}{\mathrm{d}(2\beta)^{2}}-\frac{2}{m}\frac{\mathrm{d}}{\mathrm{d}(2\beta)}+1\right]Z(2\beta)}\right)^{2}.
\end{align}
For~\ref{it:finite_norm_cont_ii}, we have
\begin{align}
&\left\Vert \left[a(f),a^{\dagger}(f)\mathrm{e}^{-\beta H_{\mathrm{B}}}\right]\right\Vert _{\mathrm{HS}}^{2} 
 \leq\left(\left\Vert a(f)a^{\dagger}(f)\mathrm{e}^{-\beta H_{\mathrm{B}}}\right\Vert _{\mathrm{HS}}+\left\Vert a^{\dagger}(f)\mathrm{e}^{-\beta H_{\mathrm{B}}}a(f)\right\Vert _{\mathrm{HS}}\right)^{2}\nonumber\\
 &\quad \leq\left(\left\Vert a(f)a^{\dagger}(f)\mathrm{e}^{-\beta H_{\mathrm{B}}}\right\Vert _{\mathrm{HS}}+\sqrt{\left\Vert a(f)a^{\dagger}(f)\mathrm{e}^{-\beta H_{\mathrm{B}}}\right\Vert _{\mathrm{HS}}\left\Vert a(f)a^{\dagger}(f)\mathrm{e}^{-\beta H_{\mathrm{B}}}\right\Vert _{\mathrm{HS}}}\right)^{2}\nonumber\\
 &\quad <4\left\Vert f\right\Vert ^{4}\left[\frac{1}{m^{2}}\frac{\mathrm{d}^{2}}{\mathrm{d}(2\beta)^{2}}-\frac{2}{m}\frac{\mathrm{d}}{\mathrm{d}(2\beta)}+1\right]Z(2\beta),
\end{align}
and similarly for~\ref{it:finite_norm_cont_iii}:
\begin{align}
&\left\Vert \left[a(f),\left[a^{\dagger}(f),\mathrm{e}^{-\beta H_{\mathrm{B}}}\right]\right]\right\Vert _{\mathrm{HS}}^{2} \nonumber\\
&\quad \leq\left(\left\Vert a(f)a^{\dagger}(f)\mathrm{e}^{-\beta H_{\mathrm{B}}}\right\Vert _{\mathrm{HS}}+\left\Vert a(f)\mathrm{e}^{-\beta H_{\mathrm{B}}}a^{\dagger}(f)\right\Vert _{\mathrm{HS}}\right.\nonumber\\
&\quad\qquad\left.+\left\Vert a^{\dagger}(f)\mathrm{e}^{-\beta H_{\mathrm{B}}}a(f)\right\Vert _{\mathrm{HS}}+\left\Vert \mathrm{e}^{-\beta H_{\mathrm{B}}}a^{\dagger}(f)a(f)\right\Vert _{\mathrm{HS}}\right)^{2}\nonumber\\
 &\quad \leq\left(\left\Vert a(f)a^{\dagger}(f)\mathrm{e}^{-\beta H_{\mathrm{B}}}\right\Vert _{\mathrm{HS}}+\sqrt{\left\Vert a^{\dagger}(f)a(f)\mathrm{e}^{-\beta H_{\mathrm{B}}}\right\Vert _{\mathrm{HS}}\left\Vert a^{\dagger}(f)a(f)\mathrm{e}^{-\beta H_{\mathrm{B}}}\right\Vert _{\mathrm{HS}}}\right.\nonumber\\
 &\quad\qquad\left.+\sqrt{\left\Vert a(f)a^{\dagger}(f)\mathrm{e}^{-\beta H_{\mathrm{B}}}\right\Vert _{\mathrm{HS}}\left\Vert a(f)a^{\dagger}(f)\mathrm{e}^{-\beta H_{\mathrm{B}}}\right\Vert _{\mathrm{HS}}}+\left\Vert a^{\dagger}(f)a(f)\mathrm{e}^{-\beta H_{\mathrm{B}}}\right\Vert _{\mathrm{HS}}\right)^{2}\nonumber\\
 &\quad =4\left(\left\Vert a(f)a^{\dagger}(f)\mathrm{e}^{-\beta H_{\mathrm{B}}}\right\Vert _{\mathrm{HS}}+\left\Vert a^{\dagger}(f)a(f)\mathrm{e}^{-\beta H_{\mathrm{B}}}\right\Vert _{\mathrm{HS}}\right)^{2}\nonumber\\
 &\quad <4\left\Vert f\right\Vert ^{4}\left(\sqrt{\left[\frac{1}{m^{2}}\frac{\mathrm{d}^{2}}{\mathrm{d}(2\beta)^{2}}-\frac{2}{m}\frac{\mathrm{d}}{\mathrm{d}(2\beta)}+1\right]Z(2\beta)}+\sqrt{\left[\frac{1}{m^{2}}\frac{\mathrm{d}^{2}}{\mathrm{d}(2\beta)^{2}}-\frac{1}{m}\frac{\mathrm{d}}{\mathrm{d}(2\beta)}\right]Z(2\beta)}\right)^{2}.
\end{align}
\ref{it:finite_norm_cont_iv} reads
\begin{align}
\left\Vert \left[a^{\dagger}(f),\left[a(f),\mathrm{e}^{-\beta H_{\mathrm{B}}}\right]\right]\right\Vert _{\mathrm{HS}}^{2} 
& =\left\Vert \left(a^{\dagger}(f)\left[a(f),\mathrm{e}^{-\beta H_{\mathrm{B}}}\right]-\left[a(f),\mathrm{e}^{-\beta H_{\mathrm{B}}}\right]a^{\dagger}(f)\right)^{\dagger}\right\Vert _{\mathrm{HS}}^{2}\nonumber\\
 & =\left\Vert -\left[a^{\dagger}(f),\mathrm{e}^{-\beta H_{\mathrm{B}}}\right]a(f)+a(f)\left[a^{\dagger}(f),\mathrm{e}^{-\beta H_{\mathrm{B}}}\right]\right\Vert _{\mathrm{HS}}^{2}\nonumber\\
 & =\left\Vert \left[a(f),\left[a^{\dagger}(f),\mathrm{e}^{-\beta H_{\mathrm{B}}}\right]\right]\right\Vert _{\mathrm{HS}}^{2}\nonumber\\
 & <4\left\Vert f\right\Vert ^{4}\left(\sqrt{\left[\frac{1}{m^{2}}\frac{\mathrm{d}^{2}}{\mathrm{d}(2\beta)^{2}}-\frac{2}{m}\frac{\mathrm{d}}{\mathrm{d}(2\beta)}+1\right]Z(2\beta)}\right.\nonumber\\
 &\qquad\qquad\left.+\sqrt{\left[\frac{1}{m^{2}}\frac{\mathrm{d}^{2}}{\mathrm{d}(2\beta)^{2}}-\frac{1}{m}\frac{\mathrm{d}}{\mathrm{d}(2\beta)}\right]Z(2\beta)}\right)^{2}.
\end{align}
This completes the proof.
	\end{proof}
\end{lemma}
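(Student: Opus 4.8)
The plan is to reduce all four norms to the fundamental norms already established in Lemma~\ref{lem:fundamental_norms}, rather than to attempt a direct evaluation on the occupancy basis. The reason such a direct evaluation fails is visible already in Lemma~\ref{lemma:commutation}(ii): writing out $[a(f),\mathrm{e}^{-\beta H_{\mathrm{B}}}]$ in matrix elements produces coefficients $(1-\mathrm{e}^{+\beta\omega_k})$ that grow exponentially in $\beta$, so the elementary estimate $(1-\mathrm{e}^{-\beta\omega_k})^2\leq1$ that powered the creation-operator bounds in Lemma~\ref{lem:fundamental_norms} is no longer available. The key observation is that this growth is an artifact of keeping $a(f)$ to the \emph{right} of the Gibbs state; I would therefore never expand $[a(f),\mathrm{e}^{-\beta H_{\mathrm{B}}}]$ explicitly, but instead split each commutator into operator products and relocate every annihilation operator next to $\mathrm{e}^{-\beta H_{\mathrm{B}}}$ using the Cauchy--Schwarz estimate of Lemma~\ref{lem:auxiliary_annihilation}.

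Concretely, for (i) I would write $a^\dagger(f)[a(f),\mathrm{e}^{-\beta H_{\mathrm{B}}}]=a^\dagger(f)a(f)\mathrm{e}^{-\beta H_{\mathrm{B}}}-a^\dagger(f)\mathrm{e}^{-\beta H_{\mathrm{B}}}a(f)$ and apply the triangle inequality. The first summand is exactly item~\ref{it:finite_norm_vii} of Lemma~\ref{lem:fundamental_norms}. The second is a \emph{sandwiched} term, and Lemma~\ref{lem:auxiliary_annihilation} with $a^{\sharp_1}=a^\dagger(f)$, $a^{\sharp_2}=a(f)$ bounds it by $\|a(f)a^\dagger(f)\mathrm{e}^{-\beta H_{\mathrm{B}}}\|_{\mathrm{HS}}$ (item~\ref{it:finite_norm_viii}), with all ladder operators now sitting on the Gibbs state. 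Squaring the triangle-inequality sum then gives the stated bound; it necessarily appears as $(\sqrt{\cdot}+\sqrt{\cdot})^2$ because the two pieces come from two \emph{distinct} fundamental norms (items~\ref{it:finite_norm_vii} and~\ref{it:finite_norm_viii}) carrying different $\beta$-dependent coefficients, and so cannot be collapsed into a single derivative expression.

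The same recipe handles (ii) and (iii): expand the (nested) commutator, bound term-by-term, and classify each summand as either a product acting directly on $\mathrm{e}^{-\beta H_{\mathrm{B}}}$ (read off from Lemma~\ref{lem:fundamental_norms}) or a sandwiched term to be split by Lemma~\ref{lem:auxiliary_annihilation}. For (ii) the two summands both reduce to $\|a(f)a^\dagger(f)\mathrm{e}^{-\beta H_{\mathrm{B}}}\|_{\mathrm{HS}}$, so the sum is $2$ times a single norm and the bound collapses to the clean expression of item~\ref{it:finite_norm_viii} with prefactor $4$. For (iii) the double commutator $[a(f),[a^\dagger(f),\mathrm{e}^{-\beta H_{\mathrm{B}}}]]$ expands into four operator products; each reduces to either $\|a(f)a^\dagger(f)\mathrm{e}^{-\beta H_{\mathrm{B}}}\|_{\mathrm{HS}}$ or $\|a^\dagger(f)a(f)\mathrm{e}^{-\beta H_{\mathrm{B}}}\|_{\mathrm{HS}}$, and since each of these two norms appears exactly twice, the triangle-inequality sum is $2$ times their sum, yielding the prefactor $4$ once more, this time multiplying a genuine $(\sqrt{\cdot}+\sqrt{\cdot})^2$. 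Finally, (iv) follows from (iii) for free: since $\|\cdot\|_{\mathrm{HS}}$ is invariant under Hermitian conjugation and $[X,Y]^\dagger=[Y^\dagger,X^\dagger]$, one checks directly that $[a^\dagger(f),[a(f),\mathrm{e}^{-\beta H_{\mathrm{B}}}]]^\dagger=[a(f),[a^\dagger(f),\mathrm{e}^{-\beta H_{\mathrm{B}}}]]$, so the two norms are equal.

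The main obstacle is conceptual rather than computational. One has to recognize that the apparent divergence caused by the factors $(1-\mathrm{e}^{+\beta\omega_k})$ is not intrinsic but a consequence of operator ordering, and that Lemma~\ref{lem:auxiliary_annihilation} is precisely the device that moves each annihilation operator to the side of the Gibbs state where it is controlled by $a^\dagger(f)a(f)\mathrm{e}^{-\beta H_{\mathrm{B}}}$ or $a(f)a^\dagger(f)\mathrm{e}^{-\beta H_{\mathrm{B}}}$. Once this splitting is adopted, every resulting quantity is one of the previously computed norms, and the only remaining work is careful bookkeeping of the triangle-inequality sums to extract the correct combinations of square roots and the prefactors $4$.
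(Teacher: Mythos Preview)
Your proposal is correct and follows essentially the same approach as the paper: split each (nested) commutator via the triangle inequality, treat every sandwiched term $a^{\sharp_1}(f)\mathrm{e}^{-\beta H_{\mathrm{B}}}a^{\sharp_2}(f)$ with the Cauchy--Schwarz bound of Lemma~\ref{lem:auxiliary_annihilation}, reduce each resulting quantity to items~\ref{it:finite_norm_vii} and~\ref{it:finite_norm_viii} of Lemma~\ref{lem:fundamental_norms}, and close (iv) by the Hermitian-conjugate identity. Your bookkeeping of the prefactors $4$ and of which items contribute a $(\sqrt{\cdot}+\sqrt{\cdot})^2$ versus a single norm matches the paper's proof line by line.
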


Finally, we can compute the norm of Eq.~\eqref{eq:ad_compute_looong}\@.
By using the estimates provided in Lemma~\ref{lem:fundamental_norms} and Lemma~\ref{lem:fundamental_norms_cont}, we obtain
\begingroup
\allowdisplaybreaks
\begin{align}
&\left\Vert\mathbf{ad}_{U_{j}HU_{j}^{\dagger}}\mathbf{ad}_{U_{i}HU_{i}^{\dagger}}(\rho_{S}\otimes\rho_{B})\right\Vert_\mathrm{HS}^2
< \frac{1}{Z(\beta)^2}\Bigg(\left\Vert\left[v_{j}H_\rms v_{j}^{\dagger},\left[v_{i}H_\rms v_{i}^{\dagger},\rho_{\mathrm{S}}\right]\right]\right\Vert_\mathrm{HS}^2\times Z(2\beta)\nonumber\\
 &\quad +\left\Vert\left[v_{j}Bv_{j}^{\dagger},\left[v_{i}H_\rms v_{i}^{\dagger},\rho_{\mathrm{S}}\right]\right]+\left[v_{j}H_\rms v_{j}^{\dagger},\left[v_{i}Bv_{i}^{\dagger},\rho_{\mathrm{S}}\right]\right]\right\Vert_\mathrm{HS}^2\nonumber\times \left\Vert f\right\Vert^{2}\left[-\frac{1}{m}\frac{\mathrm{d}}{\mathrm{d}(2\beta)}+1\right]Z(2\beta)\nonumber\\
 &\quad +\left\Vert\left[v_{j}B^{\dagger}v_{j}^{\dagger},\left[v_{i}H_\rms v_{i}^{\dagger},\rho_{\mathrm{S}}\right]\right]+\left[v_{j}H_\rms v_{j}^{\dagger},\left[v_{i}B^{\dagger}v_{i}^{\dagger},\rho_{\mathrm{S}}\right]\right]\right\Vert_\mathrm{HS}^2\times \left(-\frac{\left\Vert f\right\Vert^{2}}{m}\frac{\mathrm{d}}{\mathrm{d}(2\beta)}Z(2\beta)\right)\nonumber\\
 &\quad +\left\Vert\left[v_{i}H_\rms v_{i}^{\dagger},\rho_{\mathrm{S}}\right]v_{j}Bv_{j}^{\dagger}+\left[v_{j}H_\rms v_{j}^{\dagger},\rho_{\mathrm{S}}v_{i}Bv_{i}^{\dagger}\right]\right\Vert_\mathrm{HS}^2
 \times\left\Vert f\right\Vert ^{2}\left[-\frac{1}{m}\frac{\mathrm{d}}{\mathrm{d}(2\beta)}+1\right]Z(2\beta)\nonumber\\
 &\quad +\left\Vert\left[v_{i}H_\rms v_{i}^{\dagger},\rho_{\mathrm{S}}\right]v_{j}B^{\dagger}v_{j}^{\dagger}+\left[v_{j}H_\rms v_{j}^{\dagger},\rho_{\mathrm{S}}v_{i}B^{\dagger}v_{i}^{\dagger}\right]\right\Vert_\mathrm{HS}^2
 \times\left\Vert f\right\Vert ^{2}\left[-\frac{1}{m}\frac{\mathrm{d}}{\mathrm{d}(2\beta)}+1\right]Z(2\beta)\nonumber\\
 &\quad +\left\Vert\left[v_{j}Bv_{j}^{\dagger},\left[v_{i}Bv_{i}^{\dagger},\rho_{\mathrm{S}}\right]\right]\right\Vert_\mathrm{HS}^2\times \left(\left\Vert f\right\Vert^{4}\left[\frac{1}{m^{2}}\frac{\mathrm{d}^{2}}{\mathrm{d}(2\beta)^{2}}-\frac{3}{m}\frac{\mathrm{d}}{\mathrm{d}(2\beta)}+2\right]Z(2\beta)\right)\nonumber\\
 &\quad+\left\Vert\left[v_{j}B^{\dagger}v_{j}^{\dagger},\left[v_{i}B^{\dagger}v_{i}^{\dagger},\rho_{\mathrm{S}}\right]\right]\right\Vert_\mathrm{HS}^2\times \left(\frac{\left\Vert f\right\Vert^{4}}{m^{2}}\frac{\mathrm{d}^{2}}{\mathrm{d}(2\beta)^{2}}Z(2\beta)\right)\nonumber\\
 &\quad +\left\Vert\left[v_{j}B^{\dagger}v_{j}^{\dagger},\left[v_{i}Bv_{i}^{\dagger},\rho_{\mathrm{S}}\right]\right]\right\Vert_\mathrm{HS}^2\times \left(\left\Vert f\right\Vert^{4}\left[\frac{1}{m^{2}}\frac{\mathrm{d}^{2}}{\mathrm{d}(2\beta)^{2}}-\frac{2}{m}\frac{\mathrm{d}}{\mathrm{d}(2\beta)}+1\right]Z(2\beta)\right)\nonumber\\
 &\quad+\left\Vert\left[v_{j}Bv_{j}^{\dagger},\left[v_{i}B^{\dagger}v_{i}^{\dagger},\rho_{\mathrm{S}}\right]\right]\right\Vert_\mathrm{HS}^2\times \left(\left\Vert f\right\Vert ^{4}\left[\frac{1}{m^{2}}\frac{\mathrm{d}^{2}}{\mathrm{d}(2\beta)^{2}}-\frac{1}{m}\frac{\mathrm{d}}{\mathrm{d}(2\beta)}\right]Z(2\beta)\right)\nonumber\\
 &\quad +\left\Vert\left[v_{i}Bv_{i}^{\dagger},\rho_{\mathrm{S}}\right]v_{j}Bv_{j}^{\dagger}\right\Vert_\mathrm{HS}^2\times\left(\left\Vert f\right\Vert ^{4}\left[\frac{1}{m^{2}}\frac{\mathrm{d}^{2}}{\mathrm{d}(2\beta)^{2}}-\frac{3}{m}\frac{\mathrm{d}}{\mathrm{d}(2\beta)}+2\right]Z(2\beta)\right)\nonumber\\
 &\quad+\left\Vert\left[v_{i}B^{\dagger}v_{i}^{\dagger},\rho_{\mathrm{S}}\right]v_{j}B^{\dagger}v_{j}^{\dagger}\right\Vert_\mathrm{HS}^2\times\left(\left\Vert f\right\Vert ^{4}\left[\frac{1}{m^{2}}\frac{\mathrm{d}^{2}}{\mathrm{d}(2\beta)^{2}}-\frac{3}{m}\frac{\mathrm{d}}{\mathrm{d}(2\beta)}+2\right]Z(2\beta)\right)\nonumber\\
 &\quad +\left\Vert\left[v_{i}Bv_{i}^{\dagger},\rho_{\mathrm{S}}\right]v_{j}B^{\dagger}v_{j}^{\dagger}\right\Vert_\mathrm{HS}^2\times\left(4\left\Vert f\right\Vert ^{4}\left[\frac{1}{m^{2}}\frac{\mathrm{d}^{2}}{\mathrm{d}(2\beta)^{2}}-\frac{2}{m}\frac{\mathrm{d}}{\mathrm{d}(2\beta)}+1\right]Z(2\beta)\right)\nonumber\\
 &\quad+\left\Vert\left[v_{i}B^{\dagger}v_{i}^{\dagger},\rho_{\mathrm{S}}\right]v_{j}Bv_{j}^{\dagger}\right\Vert_\mathrm{HS}^2\times\left(\left\Vert f\right\Vert ^{4}\left[\frac{1}{m^{2}}\frac{\mathrm{d}^{2}}{\mathrm{d}(2\beta)^{2}}-\frac{2}{m}\frac{\mathrm{d}}{\mathrm{d}(2\beta)}+1\right]Z(2\beta)\right)\nonumber\\
 &\quad +\left\Vert\left[v_{j}Bv_{j}^{\dagger},\rho_{\mathrm{S}}v_{i}Bv_{i}^{\dagger}\right]\right\Vert_\mathrm{HS}^2\times \left(\left\Vert f\right\Vert ^{4}\left[\frac{1}{m^{2}}\frac{\mathrm{d}^{2}}{\mathrm{d}(2\beta)^{2}}-\frac{3}{m}\frac{\mathrm{d}}{\mathrm{d}(2\beta)}+2\right]Z(2\beta)\right)\nonumber\\
 &\quad+\left\Vert\left[v_{j}B^{\dagger}v_{j}^{\dagger},\rho_{\mathrm{S}}v_{i}B^{\dagger}v_{i}^{\dagger}\right]\right\Vert_\mathrm{HS}^2\times \left(\left\Vert f\right\Vert ^{4}\left[\frac{1}{m^{2}}\frac{\mathrm{d}^{2}}{\mathrm{d}(2\beta)^{2}}-\frac{3}{m}\frac{\mathrm{d}}{\mathrm{d}(2\beta)}+2\right]Z(2\beta)\right)\nonumber\\
 &\quad +\left\Vert\left[v_{j}B^{\dagger}v_{j}^{\dagger},\rho_{\mathrm{S}}v_{i}Bv_{i}^{\dagger}\right]\right\Vert_\mathrm{HS}^2\times \left(\left\Vert f\right\Vert ^{4}\left[\frac{1}{m^{2}}\frac{\mathrm{d}^{2}}{\mathrm{d}(2\beta)^{2}}-\frac{2}{m}\frac{\mathrm{d}}{\mathrm{d}(2\beta)}+1\right]Z(2\beta)\right)\nonumber\\
 &\quad+\left\Vert\left[v_{j}Bv_{j}^{\dagger},\rho_{\mathrm{S}}v_{i}B^{\dagger}v_{i}^{\dagger}\right]\right\Vert_\mathrm{HS}^2\times \left\Vert f\right\Vert ^{4}\left(\sqrt{\left[\frac{1}{m^{2}}\frac{\mathrm{d}^{2}}{\mathrm{d}(2\beta)^{2}}-\frac{1}{m}\frac{\mathrm{d}}{\mathrm{d}(2\beta)}\right]Z(2\beta)}\right.\nonumber\\
 &\quad\qquad\left.+\sqrt{\left[\frac{1}{m^{2}}\frac{\mathrm{d}^{2}}{\mathrm{d}(2\beta)^{2}}-\frac{2}{m}\frac{\mathrm{d}}{\mathrm{d}(2\beta)}+1\right]Z(2\beta)}\right)^{2}\nonumber\\
 &\quad +\left\Vert\rho_{\mathrm{S}}v_{i}Bv_{i}^{\dagger}v_{j}Bv_{j}^{\dagger}\right\Vert_\mathrm{HS}^2\times4\left\Vert f\right\Vert^4 \left[\frac{1}{m^{2}}\frac{\mathrm{d}^{2}}{\mathrm{d}(2\beta)^{2}}-\frac{3}{m}\frac{\mathrm{d}}{\mathrm{d}(2\beta)}+2\right]Z(2\beta)\nonumber\\
 &\quad+\left\Vert\rho_{\mathrm{S}}v_{i}B^{\dagger}v_{i}^{\dagger}v_{j}B^{\dagger}v_{j}^{\dagger}\right\Vert_\mathrm{HS}^2 \times 4\left\Vert f\right\Vert ^{4}\left[\frac{1}{m^{2}}\frac{\mathrm{d}^{2}}{\mathrm{d}(2\beta)^{2}}-\frac{3}{m}\frac{\mathrm{d}}{\mathrm{d}(2\beta)}+2\right]Z(2\beta)\nonumber\\
 &\quad +\left\Vert\rho_{\mathrm{S}}v_{i}Bv_{i}^{\dagger}v_{j}B^{\dagger}v_{j}^{\dagger}\right\Vert_\mathrm{HS}^2\times 4\left\Vert f\right\Vert ^{4}\left(\sqrt{\left[\frac{1}{m^{2}}\frac{\mathrm{d}^{2}}{\mathrm{d}(2\beta)^{2}}-\frac{2}{m}\frac{\mathrm{d}}{\mathrm{d}(2\beta)}+1\right]Z(2\beta)}\right.\nonumber\\
 &\quad\qquad\left.+\sqrt{\left[\frac{1}{m^{2}}\frac{\mathrm{d}^{2}}{\mathrm{d}(2\beta)^{2}}-\frac{1}{m}\frac{\mathrm{d}}{\mathrm{d}(2\beta)}\right]Z(2\beta)}\right)^{2}\nonumber\\
 &\quad+\left\Vert\rho_{\mathrm{S}}v_{i}B^{\dagger}v_{i}^{\dagger}v_{j}Bv_{j}^{\dagger}\right\Vert_\mathrm{HS}^2\times 4\left\Vert f\right\Vert ^{4}\left(\sqrt{\left[\frac{1}{m^{2}}\frac{\mathrm{d}^{2}}{\mathrm{d}(2\beta)^{2}}-\frac{2}{m}\frac{\mathrm{d}}{\mathrm{d}(2\beta)}+1\right]Z(2\beta)}\right.\nonumber\\
 &\quad\qquad\left.+\sqrt{\left[\frac{1}{m^{2}}\frac{\mathrm{d}^{2}}{\mathrm{d}(2\beta)^{2}}-\frac{1}{m}\frac{\mathrm{d}}{\mathrm{d}(2\beta)}\right]Z(2\beta)}\right)^{2}\nonumber\\
 &\quad +\left\Vert\left[v_{i}Bv_{i}^{\dagger},\rho_{\mathrm{S}}\right]\right\Vert_\mathrm{HS}^2\times \left\Vert \omega f\right\Vert ^{2}\left[-\frac{1}{m}\frac{\mathrm{d}}{\mathrm{d}(2\beta)}+1\right]Z(2\beta)\nonumber\\
 &\quad+\left\Vert\left[v_{i}B^{\dagger}v_{i}^{\dagger},\rho_{\mathrm{S}}\right]\right\Vert_\mathrm{HS}^2\times\left(-\frac{\left\Vert \omega f\right\Vert ^{2}}{m}\frac{\mathrm{d}}{\mathrm{d}(2\beta)}Z(2\beta)\right)\nonumber\\
 &\quad +\left\Vert\rho_{\mathrm{S}}v_{i}Bv_{i}^{\dagger}\right\Vert_\mathrm{HS}^2\times \left\Vert \omega f\right\Vert ^{2}\left[-\frac{1}{m}\frac{\mathrm{d}}{\mathrm{d}(2\beta)}+1\right]Z(2\beta)\nonumber\\
 &\quad+\left\Vert\rho_{\mathrm{S}}v_{i}B^{\dagger}v_{i}^{\dagger}\right\Vert_\mathrm{HS}^2\times \left\Vert \omega f\right\Vert ^{2}\left[-\frac{1}{m}\frac{\mathrm{d}}{\mathrm{d}(2\beta)}+1\right]Z(2\beta)\Bigg).\label{eq:intermediate_step_final_bound}
\end{align}
\endgroup

These terms only involve up to second-order derivatives in $(2\beta)$ of the partition function $Z(2\beta)$.
Since by Proposition~\ref{prop:partition_function}, $Z(2\beta)$ is indeed twice differentiable in $(2\beta)$, all the norms are finite.
This proves that $\ad_{U_{j}HU_{j}^{\dagger}}\ad_{U_{i}HU_{i}^{\dagger}}(\rho_{S}\otimes\rho_{B})\in\mathcal{L}(\mathcal{H})$ and thus $\rho_{S}\otimes\rho_{B}\in\Dom \mathbf{ad}_{U_{j}HU_{j}^{\dagger}}\mathbf{ad}_{U_{i}HU_{i}^{\dagger}}$.
We summarize this discussion in the following theorem.

\begin{theorem}\label{thm:main_result_bound}
	Consider a Hamiltonian $H$ of the form of Eq.~\eqref{eq:spin--boson} satisfying Assumption~\ref{assumption:omega}, and let $U_j=v_j\otimes I$, where the set $\{v_j\}_{j=0}^{L-1}$ generates a unitary group that acts irreducibly.
	Set $\rho=\rho_\rms\otimes\rho_\rmb$, where $\rho_\rms$ is an arbitrary qubit density operator and $\rho_\rmb$ is the Gibbs state at inverse temperature $\beta$ associated to $H_\rmb$.
	Assume $f\in\ell^2$ (Assumption~\ref{assumption:couplings}) and also $\omega f\in\ell^2$ (Assumption~\ref{assumption:bounds_i}) and define $m:=\inf_j \omega_j>0$.
 Furthermore, let Assumption~\ref{assumption:bounds_ii} be satisfied.
	Then the dynamical decoupling error $\xi_N(t;\rho)$ for pulsing with $\{v_j\}_{j=0}^{L-1}$ can be bounded by
	\begin{align}
	\xi_N(t;\rho)&\leq \frac{t^{2}}{NL^2}\sum_{j=0}^{L-1}\bigg(\frac{1}{2}\left\Vert \ad_{U_jHU_j^\dagger}^{2}\rho\right\Vert+\sum_{i=0}^{j-1}\left\Vert \ad_{U_jHU_j^\dagger}\ad_{U_iHU_i^\dagger}\rho\right\Vert \bigg),\label{eq:main_bound_appendix}
\end{align}
where the individual terms are bounded by
\begingroup
\allowdisplaybreaks
\begin{align}
&\left\Vert \ad_{U_jHU_j^\dagger}\ad_{U_iHU_i^\dagger}(\rho_{S}\otimes\rho_{B})\right\Vert _{\mathrm{HS}}^{2}< \frac{1}{Z(\beta)^{2}}\Bigg(\left\Vert \left[v_{j}H_\rms v_{j}^{\dagger},\left[v_{i}H_\rms v_{i}^{\dagger},\rho_{\mathrm{S}}\right]\right]\right\Vert _{\mathrm{HS}}^{2}\times Z(2\beta)\nonumber\\
 &\quad +\Big(\left\Vert f\right\Vert ^{2}\left\Vert \left[v_{j}B^{\dagger}v_{j}^{\dagger},\left[v_{i}H_\rms v_{i}^{\dagger},\rho_{\mathrm{S}}\right]\right]+\left[v_{j}H_\rms v_{j}^{\dagger},\left[v_{i}B^{\dagger}v_{i}^{\dagger},\rho_{\mathrm{S}}\right]\right]\right\Vert _{\mathrm{HS}}^{2}\nonumber\\
 & \quad\qquad+\left\Vert \omega f\right\Vert ^{2}\left\Vert \left[v_{i}B^{\dagger}v_{i}^{\dagger},\rho_{\mathrm{S}}\right]\right\Vert _{\mathrm{HS}}^{2}\Big)\times\left[-\frac{1}{m}\frac{\mathrm{d}}{\mathrm{d}(2\beta)}\right]Z(2\beta)\nonumber\\
 &\quad +\Big(\left\Vert f\right\Vert ^{2}\left\Vert \left[v_{j}Bv_{j}^{\dagger},\left[v_{i}H_\rms v_{i}^{\dagger},\rho_{\mathrm{S}}\right]\right]+\left[v_{j}H_\rms v_{j}^{\dagger},\left[v_{i}Bv_{i}^{\dagger},\rho_{\mathrm{S}}\right]\right]\right\Vert _{\mathrm{HS}}^{2}\nonumber\\
 &\quad \qquad+\left\Vert f\right\Vert ^{2}\left\Vert \left[v_{i}H_\rms v_{i}^{\dagger},\rho_{\mathrm{S}}\right]v_{j}Bv_{j}^{\dagger}+\left[v_{j}H_\rms v_{j}^{\dagger},\rho_{\mathrm{S}}v_{i}Bv_{i}^{\dagger}\right]\right\Vert _{\mathrm{HS}}^{2}\nonumber\\
 &\quad \qquad+\left\Vert f\right\Vert ^{2}\left\Vert \left[v_{i}H_\rms v_{i}^{\dagger},\rho_{\mathrm{S}}\right]v_{j}B^{\dagger}v_{j}^{\dagger}+\left[v_{j}H_\rms v_{j}^{\dagger},\rho_{\mathrm{S}}v_{i}B^{\dagger}v_{i}^{\dagger}\right]\right\Vert _{\mathrm{HS}}^{2}\nonumber\\
 \quad& \qquad+\left\Vert \omega f\right\Vert ^{2}\left\Vert \left[v_{i}Bv_{i}^{\dagger},\rho_{\mathrm{S}}\right]\right\Vert _{\mathrm{HS}}^{2}+\left\Vert \omega f\right\Vert ^{2}\left\Vert \rho_{\mathrm{S}}v_{i}Bv_{i}^{\dagger}\right\Vert _{\mathrm{HS}}^{2}\nonumber\\
 &\quad \qquad+\left\Vert \omega f\right\Vert ^{2}\left\Vert \rho_{\mathrm{S}}v_{i}B^{\dagger}v_{i}^{\dagger}\right\Vert _{\mathrm{HS}}^{2}\Big)\times\left[-\frac{1}{m}\frac{\mathrm{d}}{\mathrm{d}(2\beta)}+1\right]Z(2\beta)\nonumber\\
 &\quad +\left\Vert \left[v_{j}B^{\dagger}v_{j}^{\dagger},\left[v_{i}B^{\dagger}v_{i}^{\dagger},\rho_{\mathrm{S}}\right]\right]\right\Vert _{\mathrm{HS}}^{2}\times\frac{\left\Vert f\right\Vert ^{4}}{m^{2}}\frac{\mathrm{d}^{2}}{\mathrm{d}(2\beta)^{2}}Z(2\beta)\nonumber\\
 &\quad +\left\Vert \left[v_{j}Bv_{j}^{\dagger},\left[v_{i}B^{\dagger}v_{i}^{\dagger},\rho_{\mathrm{S}}\right]\right]\right\Vert _{\mathrm{HS}}^{2}\times\left\Vert f\right\Vert ^{4}\left[\frac{1}{m^{2}}\frac{\mathrm{d}^{2}}{\mathrm{d}(2\beta)^{2}}-\frac{1}{m}\frac{\mathrm{d}}{\mathrm{d}(2\beta)}\right]Z(2\beta)\nonumber\\
 &\quad +\Big(\left\Vert \left[v_{j}B^{\dagger}v_{j}^{\dagger},\left[v_{i}Bv_{i}^{\dagger},\rho_{\mathrm{S}}\right]\right]\right\Vert _{\mathrm{HS}}^{2}+4\left\Vert \left[v_{i}Bv_{i}^{\dagger},\rho_{\mathrm{S}}\right]v_{j}B^{\dagger}v_{j}^{\dagger}\right\Vert _{\mathrm{HS}}^{2}\nonumber\\
 &\quad \qquad+\left\Vert \left[v_{i}B^{\dagger}v_{i}^{\dagger},\rho_{\mathrm{S}}\right]v_{j}Bv_{j}^{\dagger}\right\Vert _{\mathrm{HS}}^{2}+\left\Vert \left[v_{j}B^{\dagger}v_{j}^{\dagger},\rho_{\mathrm{S}}v_{i}Bv_{i}^{\dagger}\right]\right\Vert _{\mathrm{HS}}^{2}\Big)\nonumber\\
 &\quad \quad\times\left\Vert f\right\Vert ^{4}\left[\frac{1}{m^{2}}\frac{\mathrm{d}^{2}}{\mathrm{d}(2\beta)^{2}}-\frac{2}{m}\frac{\mathrm{d}}{\mathrm{d}(2\beta)}+1\right]Z(2\beta)\nonumber\\
 &\quad +\Big(\left\Vert \left[v_{j}Bv_{j}^{\dagger},\left[v_{i}Bv_{i}^{\dagger},\rho_{\mathrm{S}}\right]\right]\right\Vert _{\mathrm{HS}}^{2}+\left\Vert \left[v_{i}Bv_{i}^{\dagger},\rho_{\mathrm{S}}\right]v_{j}Bv_{j}^{\dagger}\right\Vert _{\mathrm{HS}}^{2}\nonumber\\
 &\quad \qquad+\left\Vert \left[v_{i}B^{\dagger}v_{i}^{\dagger},\rho_{\mathrm{S}}\right]v_{j}B^{\dagger}v_{j}^{\dagger}\right\Vert _{\mathrm{HS}}^{2}+\left\Vert \left[v_{j}Bv_{j}^{\dagger},\rho_{\mathrm{S}}v_{i}Bv_{i}^{\dagger}\right]\right\Vert _{\mathrm{HS}}^{2}\nonumber\\
 &\quad \qquad+\left\Vert \left[v_{j}B^{\dagger}v_{j}^{\dagger},\rho_{\mathrm{S}}v_{i}B^{\dagger}v_{i}^{\dagger}\right]\right\Vert _{\mathrm{HS}}^{2}+4\left\Vert \rho_{\mathrm{S}}v_{i}Bv_{i}^{\dagger}v_{j}Bv_{j}^{\dagger}\right\Vert _{\mathrm{HS}}^{2}\nonumber\\
 &\quad \qquad+4\left\Vert \rho_{\mathrm{S}}v_{i}B^{\dagger}v_{i}^{\dagger}v_{j}B^{\dagger}v_{j}^{\dagger}\right\Vert _{\mathrm{HS}}^{2}\Big)\times\left\Vert f\right\Vert ^{4}\left[\frac{1}{m^{2}}\frac{\mathrm{d}^{2}}{\mathrm{d}(2\beta)^{2}}-\frac{3}{m}\frac{\mathrm{d}}{\mathrm{d}(2\beta)}+2\right]Z(2\beta)\nonumber\\
 &\quad +\Big(\left\Vert \left[v_{j}Bv_{j}^{\dagger},\rho_{\mathrm{S}}v_{i}B^{\dagger}v_{i}^{\dagger}\right]\right\Vert _{\mathrm{HS}}^{2}+4\left\Vert \rho_{\mathrm{S}}v_{i}Bv_{i}^{\dagger}v_{j}B^{\dagger}v_{j}^{\dagger}\right\Vert _{\mathrm{HS}}^{2}+4\left\Vert \rho_{\mathrm{S}}v_{i}B^{\dagger}v_{i}^{\dagger}v_{j}Bv_{j}^{\dagger}\right\Vert _{\mathrm{HS}}^{2}\Big)\nonumber\\
 &\quad \quad\times\left\Vert f\right\Vert ^{4}\left(\sqrt{\left[\frac{1}{m^{2}}\frac{\mathrm{d}^{2}}{\mathrm{d}(2\beta)^{2}}-\frac{1}{m}\frac{\mathrm{d}}{\mathrm{d}(2\beta)}\right]Z(2\beta)}\right.\nonumber\\
 &\quad \qquad\left.+\sqrt{\left[\frac{1}{m^{2}}\frac{\mathrm{d}^{2}}{\mathrm{d}(2\beta)^{2}}-\frac{2}{m}\frac{\mathrm{d}}{\mathrm{d}(2\beta)}+1\right]Z(2\beta)}\right)^{2}\label{eq:final_bound}
\end{align}
\endgroup
\begin{proof}
    Eq.~\eqref{eq:main_bound_appendix} is a direct consequence of Corollary~\ref{cor:Trotter_Liouville} from the main text by setting $H_j=U_jHU_j^\dagger$ and using the following triangle inequality
    \begin{equation}
        \left\Vert \sum_{i=0}^{j-1}\ad_{U_{j}HU_{j}^{\dagger}}\ad_{U_{i}HU_{i}^{\dagger}}\rho\right\Vert\leq\sum_{i=0}^{j-1}\left\Vert \ad_{U_{j}HU_{j}^{\dagger}}\ad_{U_{i}HU_{i}^{\dagger}}\rho\right\Vert.
    \end{equation}
    Eq.~\eqref{eq:final_bound} follows from inserting Lemma~\ref{lem:fundamental_norms} and Lemma~\ref{lem:fundamental_norms_cont} into Eq.~\eqref{eq:ad_compute_looong}\@. This leads to Eq.~\eqref{eq:intermediate_step_final_bound}, which can be rearranged to give Eq.~\eqref{eq:final_bound}\@. 
    Our Assumptions ensure that all quantities appearing in the bound are well-defined so the bound stays finite.
\end{proof}
\end{theorem}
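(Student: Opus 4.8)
The plan is to obtain the decoupling error bound in two stages: first reduce $\xi_N(t;\rho)$ to a finite sum of Hilbert--Schmidt norms of first- and second-order Liouvillian actions on the product state $\rho=\rho_\rms\otimes\rho_\rmb$, and then compute each of those norms explicitly in terms of $Z(2\beta)$ and its first two derivatives using the preparatory lemmas.

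For the first stage, I would set $H_j=U_jHU_j^\dagger$ and invoke Corollary~\ref{cor:Trotter_Liouville}, the mixed-state Trotter bound for $L$ generators. Its hypotheses are met: each $\ad_{U_jHU_j^\dagger}$ is self-adjoint on $\mathcal{L}(\HH)$; the decoupling condition~\eqref{eq:decoupling_condition} together with $\tr H_\rms=\tr B=0$ gives $\sum_j\ad_{H_j}\rho=0$, so $\rho$ is an eigen-density operator of the sum with eigenvalue $h=0$ (cf.\ Eq.~\eqref{eq:DD_works}); and $\rho\in\bigcap_j\Dom\ad_{H_j}^2$ by Proposition~\ref{prop:domains}. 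Applying the corollary and splitting the inner sum with the triangle inequality $\bigl\|\sum_{i<j}\ad_{H_j}\ad_{H_i}\rho\bigr\|_\mathrm{HS}\le\sum_{i<j}\|\ad_{H_j}\ad_{H_i}\rho\|_\mathrm{HS}$ yields directly Eq.~\eqref{eq:main_bound_appendix}.

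For the second stage, the starting point is the explicit expansion of $\ad_{U_jHU_j^\dagger}\ad_{U_iHU_i^\dagger}(\rho_\rms\otimes\rho_\rmb)$ in Eq.~\eqref{eq:ad_compute_looong}, obtained by repeatedly using $[A\otimes B,C\otimes D]=[A,C]\otimes BD+CA\otimes[B,D]$ and $[\rho_\rmb,H_\rmb]=0$. This writes the operator as a finite sum of tensor products (system factor)$\otimes$(bath factor). Taking $\|\cdot\|_\mathrm{HS}$, the triangle inequality and the multiplicativity $\|A\otimes B\|_\mathrm{HS}=\|A\|_\mathrm{HS}\|B\|_\mathrm{HS}$ reduce the task to bounding each bath factor. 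Every bath factor is one of the expressions catalogued in Lemma~\ref{lem:fundamental_norms} and Lemma~\ref{lem:fundamental_norms_cont}, whose bounds are all of the form $\|f\|^{2p}$ (or $\|\omega f\|^2$) times a linear combination of $Z(2\beta)$ and its derivatives, with coefficients built from $1/m$. Substituting these gives Eq.~\eqref{eq:intermediate_step_final_bound}; collecting the system norms that multiply a common bath-derivative factor then produces the compact grouping in Eq.~\eqref{eq:final_bound}. Finiteness follows a posteriori from Proposition~\ref{prop:partition_function}: under Assumption~\ref{assumption:bounds_ii}, $Z(2\beta)$ is twice differentiable, so every coefficient is finite.

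The main obstacle is not conceptual but combinatorial and analytic in the bath-norm estimates. Combinatorially, Eq.~\eqref{eq:ad_compute_looong} contains on the order of thirty distinct summands, each of which must be matched to the correct entry of the two lemmas and then regrouped by its bath factor without double-counting. Analytically, the genuinely delicate point is that the commutators involving the annihilation operator $a(f)$ cannot be bounded by the elementary device $(1-\mathrm{e}^{-\beta\omega_k})^2\le1$ that works for $a^\dagger(f)$: by Lemma~\ref{lemma:commutation}(ii) they carry a factor $(1-\mathrm{e}^{+\beta\omega_k})^2$ growing exponentially in $\beta$. I would circumvent this exactly as in Lemma~\ref{lem:fundamental_norms_cont}, using the Cauchy--Schwarz estimate of Lemma~\ref{lem:auxiliary_annihilation} to replace $\|a^{\sharp_1}(f)\mathrm{e}^{-\beta H_\rmb}a^{\sharp_2}(f)\|_\mathrm{HS}^2$ by a product of the ``safe'' normal-ordered norms, thereby trading the divergent factor for quantities already controlled in Lemma~\ref{lem:fundamental_norms}.
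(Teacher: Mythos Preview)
Your proposal is correct and follows essentially the same approach as the paper: apply Corollary~\ref{cor:Trotter_Liouville} with $H_j=U_jHU_j^\dagger$ plus a triangle inequality to obtain Eq.~\eqref{eq:main_bound_appendix}, then expand $\ad_{U_jHU_j^\dagger}\ad_{U_iHU_i^\dagger}(\rho_\rms\otimes\rho_\rmb)$ as in Eq.~\eqref{eq:ad_compute_looong}, bound each bath factor via Lemmas~\ref{lem:fundamental_norms}--\ref{lem:fundamental_norms_cont}, and regroup to reach Eq.~\eqref{eq:final_bound}. Your additional remarks on verifying the hypotheses of the corollary and on handling the $a(f)$-commutators via Lemma~\ref{lem:auxiliary_annihilation} are accurate elaborations of exactly what the paper does.
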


\begin{corollary}[Loose bound]\label{cor:loose_bound}
	Under the same assumptions and notation as in Theorem~\ref{thm:main_result_bound}, Eq.~\eqref{eq:final_bound} can be further bounded by
	\begin{align}
		\left\Vert \ad_{U_{j}HU_{j}^{\dagger}}\ad_{U_{i}HU_{i}^{\dagger}}(\rho_{S}\otimes\rho_{B})\right\Vert _{\mathrm{HS}}^{2}<{}& \frac{16}{Z(\beta)^2} \max\{\Vert H_\rms \Vert_\mathrm{HS}^4,\Vert B\Vert_\mathrm{HS}^4\}\times\Bigg(Z(2\beta)\nonumber\\
				&\quad+4\big(\Vert f\Vert^2+\Vert\omega f\Vert^2\big)\left[-\frac{1}{m}\frac{\rmd}{\rmd(2\beta)}+1\right]Z(2\beta)\nonumber\\
				&\quad+ 58\Vert f\Vert^4 \left[\frac{1}{m^{2}}\frac{\mathrm{d}^{2}}{\mathrm{d}(2\beta)^{2}}-\frac{3}{m}\frac{\mathrm{d}}{\mathrm{d}(2\beta)}+2\right]Z(2\beta)\Bigg).
	\end{align}
	\begin{proof}
		All quantities from the system part involving $H_\rms ,B$ and $\rho_\mathrm{S}$ can be bounded by using the triangle inequality as well as unitary equivalence and submultiplicativity of the Hilbert--Schmidt norm.
		Furthermore, $\Vert\rho_\rms\Vert_\mathrm{HS}\leq1$, so that each norm on the system is upper bounded by $4\max\{\Vert H_\rms \Vert^2_\mathrm{HS},\Vert B\Vert_\mathrm{HS}^2\}$.
		Squaring this quantity yields the part of the loose bound due to the system.
  
		For the norms on the environmental part, we use $\big[-\frac{1}{m}\frac{\rmd}{\rmd(2\beta)}\big]Z(2\beta)\leq \big[-\frac{1}{m}\frac{\rmd}{\rmd(2\beta)}+1\big]Z(2\beta)$ and $\big[\frac{1}{m^2}\frac{\rmd^2}{\rmd(2\beta)^2}-\frac{a}{m}\frac{\rmd}{\rmd(2\beta)}+b\big]Z(2\beta)\leq \big[\frac{1}{m^2}\frac{\rmd^2}{\rmd(2\beta)^2}-\frac{3}{m}\frac{\rmd}{\rmd(2\beta)}+2\big]Z(2\beta)$ for $a\leq3$ and $b\leq2$. Furthermore, we bound
		\begin{align}
			&\left(\sqrt{\left[\frac{1}{m^{2}}\frac{\mathrm{d}^{2}}{\mathrm{d}(2\beta)^{2}}-\frac{1}{m}\frac{\mathrm{d}}{\mathrm{d}(2\beta)}\right]Z(2\beta)}+\sqrt{\left[\frac{1}{m^{2}}\frac{\mathrm{d}^{2}}{\mathrm{d}(2\beta)^{2}}-\frac{2}{m}\frac{\mathrm{d}}{\mathrm{d}(2\beta)}+1\right]Z(2\beta)}\right)^{2}\nonumber\\
			&\quad< 4 \left[\frac{1}{m^2}\frac{\rmd^2}{\rmd(2\beta)^2}-\frac{3}{m}\frac{\rmd}{\rmd(2\beta)}+2\right]Z(2\beta),
		\end{align}
  thus completing the proof.
	\end{proof}
\end{corollary}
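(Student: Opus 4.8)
The plan is to start from the term-by-term estimate~\eqref{eq:final_bound} of Theorem~\ref{thm:main_result_bound} and coarsen it in two independent stages, exploiting the fact that the tensor-product structure has already separated each summand into a \emph{system} factor (a Hilbert--Schmidt norm of operators built from $H_\rms$, $B$, the $v_j$, and $\rho_\rms$) and a \emph{bath} factor (a combination of $Z(2\beta)$ and its first two derivatives). Since every system norm multiplies a manifestly nonnegative bath factor, I may overbound the two families separately and then recombine.

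First I would bound every system norm uniformly. Each such norm is either a double commutator such as $\|[v_jH_\rms v_j^\dagger,[v_iBv_i^\dagger,\rho_\rms]]\|_\mathrm{HS}$, a commutator times an operator such as $\|[v_iBv_i^\dagger,\rho_\rms]\,v_jB^\dagger v_j^\dagger\|_\mathrm{HS}$, or a plain product such as $\|\rho_\rms v_iBv_i^\dagger v_jBv_j^\dagger\|_\mathrm{HS}$. Applying the triangle inequality, the submultiplicativity $\|XY\|_\mathrm{HS}\le\|X\|_\mathrm{HS}\|Y\|_\mathrm{HS}$, the unitary invariance $\|v_jXv_j^\dagger\|_\mathrm{HS}=\|X\|_\mathrm{HS}$, and $\|\rho_\rms\|_\mathrm{HS}\le1$, every one of these is dominated by $4\max\{\|H_\rms\|_\mathrm{HS}^2,\|B\|_\mathrm{HS}^2\}$ (the double commutators saturate this estimate, the others are absorbed into it). Squaring and pulling the common factor $16\max\{\|H_\rms\|_\mathrm{HS}^4,\|B\|_\mathrm{HS}^4\}$ out of the whole expression is then immediate.

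Next I would coarsen the bath factors to the two templates $[-\tfrac1m\tfrac{\rmd}{\rmd(2\beta)}+1]Z(2\beta)$ and $[\tfrac1{m^2}\tfrac{\rmd^2}{\rmd(2\beta)^2}-\tfrac3m\tfrac{\rmd}{\rmd(2\beta)}+2]Z(2\beta)$. The key analytic input is positivity: differentiating the occupation-number series of Lemma~\ref{lem:fundamental_norms} term by term gives
\begin{equation}
-\frac{\rmd}{\rmd(2\beta)}Z(2\beta)=\sum_{\bm{n}}\Big(\sum_i n_i\omega_i\Big)\e^{-2\beta\sum_i n_i\omega_i}\ge0,
\end{equation}
and likewise $Z(2\beta)\ge0$ and $\tfrac{\rmd^2}{\rmd(2\beta)^2}Z(2\beta)\ge0$. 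With these signs fixed, every first-order factor obeys $[-\tfrac1m\tfrac{\rmd}{\rmd(2\beta)}]Z(2\beta)\le[-\tfrac1m\tfrac{\rmd}{\rmd(2\beta)}+1]Z(2\beta)$, every second-order factor with coefficients $a\le3$ and $b\le2$ is dominated by the $(3,2)$ template, and the mixed radical terms are handled by replacing each radicand by the larger one, so that $(\sqrt X+\sqrt Y)^2\le4\max\{X,Y\}$ lands on the same second-order template.

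Finally I would regroup: collecting every summand attached to a first-order factor yields the coefficient $4(\|f\|^2+\|\omega f\|^2)$, collecting every summand attached to a second-order factor (including the quadrupled radical and double-commutator contributions) yields $58\|f\|^4$, while the lone $H_\rms$--$H_\rms$ double commutator contributes the bare $Z(2\beta)$. I expect the main obstacle to be precisely this last accounting step: the numerous summands of Eq.~\eqref{eq:final_bound} must each be correctly assigned to their dominating template and their multiplicities tallied---keeping track of the extra factors of $4$ that arise from the commutator and radical estimates---so that the totals come out to exactly $4$ and $58$. By contrast, the analytic ingredients, namely the elementary norm inequalities and the positivity of the derivatives of $Z$, are entirely routine.
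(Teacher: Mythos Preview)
Your proposal is correct and follows essentially the same approach as the paper's proof: both bound the system norms uniformly by $4\max\{\|H_\rms\|_\mathrm{HS}^2,\|B\|_\mathrm{HS}^2\}$ via triangle inequality, submultiplicativity, unitary invariance and $\|\rho_\rms\|_\mathrm{HS}\le1$, then coarsen every bath factor to one of the two templates using the positivity of $Z(2\beta)$ and its derivatives, with the radical terms handled by $(\sqrt X+\sqrt Y)^2\le4\max\{X,Y\}$. Your explicit justification of the sign of $-\tfrac{\rmd}{\rmd(2\beta)}Z(2\beta)$ and your identification of the coefficient bookkeeping as the only nontrivial step are spot on.
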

This is precisely the looser bound presented in Theorem~\ref{thm:spin--boson} in the main text.

\subsection{Reduction to the single-mode case}\label{subsec:single_mode}

In the case of a single mode, the calculations presented above simplify; moreover, tighter bounds can be obtained.
Since we consider three single-mode examples in the main part of the paper, it is worthwhile considering this case separately.
For a single-mode boson bath, i.e. $H_{\rm B}=\omega a^\dag a$, we have $Z(\beta)=\left(1-\mathrm{e}^{-\beta\omega}\right)^{-1}$
and $m=\omega$. Furthermore, without loss of generality we can set $|f|=1$.

We will reduce the single-mode case from the general case.
Specifically, we will reprise the calculations from Lemma~\ref{lem:fundamental_norms} up to the point where we only used equalities or non-strict inequalities, and continue calculating the single-mode norms from there. Instead, we do not use the steps that involve strict inequalities and use tighter estimates.
This yields the following:

\begin{align}
\left\Vert \rho_{\mathrm{B}}\right\Vert _{\mathrm{HS}}^{2} & =\left(1-\mathrm{e}^{-\beta\omega}\right)^{2}Z(2\beta)\nonumber\\
 & =\left(1-\mathrm{e}^{-\beta\omega}\right)^{2}\left(1-\mathrm{e}^{-2\beta\omega}\right)^{-1}\nonumber\\
 & =\tanh\left(\frac{\beta\omega}{2}\right),
\end{align}

\begin{align}
\left\Vert aa\rho_{\mathrm{B}}\right\Vert _{\mathrm{HS}}^{2} & \leq\left(1-\mathrm{e}^{-\beta\omega}\right)^{2}\left\Vert f\right\Vert ^{4}\sum_{n}n\left(n-1\right)\mathrm{e}^{-2\beta\omega n}\nonumber\\
 & =\left(1-\mathrm{e}^{-\beta\omega}\right)^{2}\sum_{n}\left(\frac{1}{\omega}n^{2}\omega^{2}-\frac{1}{\omega}n\omega\right)\mathrm{e}^{-2\beta\omega n}\nonumber\\
 & =\left(1-\mathrm{e}^{-\beta\omega}\right)^{2}\left[\frac{1}{\omega^{2}}\frac{\mathrm{d}^{2}}{\mathrm{d}(2\beta)^{2}}+\frac{1}{\omega}\frac{\mathrm{d}}{\mathrm{d}(2\beta)}\right]Z(2\beta)\nonumber\\
 & =\frac{2}{\left(\mathrm{e}^{\beta\omega}-1\right)\left(\mathrm{e}^{\beta\omega}+1\right)^{3}},
\end{align}

\begin{align}
\left\Vert a^{\dagger}a^{\dagger}\rho_{\mathrm{B}}\right\Vert _{\mathrm{HS}}^{2} & =\left(1-\mathrm{e}^{-\beta\omega}\right)^{2}\sum_{n}\mathrm{e}^{-2\beta n\omega}\left(n+1\right)\left(n+2\right)\nonumber\\
 & =\left(1-\mathrm{e}^{-\beta\omega}\right)^{2}\sum_{n}\left(\frac{1}{\omega^{2}}n^{2}\omega^{2}+\frac{3}{\omega}n\omega+2\right)\mathrm{e}^{-2\beta\omega n}\nonumber\\
 & =\left(1-\mathrm{e}^{-\beta\omega}\right)^{2}\left[\frac{1}{\omega^{2}}\frac{\mathrm{d}^{2}}{\mathrm{d}(2\beta)^{2}}-\frac{3}{\omega}\frac{\mathrm{d}}{\mathrm{d}(2\beta)}+2\right]Z(2\beta)\nonumber\\
 & =\frac{2\mathrm{e}^{4\beta\Omega}}{\left(\mathrm{e}^{\beta\Omega}-1\right)\left(\mathrm{e}^{\beta\Omega}+1\right)^{3}},
\end{align}

\begin{align}
\left\Vert aa^{\dagger}\rho_{\mathrm{B}}\right\Vert _{\mathrm{HS}}^{2} & =\left(1-\mathrm{e}^{-\beta\omega}\right)^{2}\sum_{n}\left\Vert \mathrm{e}^{-\beta n\omega}\left[\left|f\right|^{2}\left(n+1\right)\left|n\right\rangle \right]\right\Vert \nonumber\\
 & =\left(1-\mathrm{e}^{-\beta\omega}\right)^{2}\sum_{n}\left(\frac{1}{\omega^{2}}n^{2}\omega^{2}+\frac{2}{\omega}n\omega+1\right)^{2}\mathrm{e}^{-2\beta\omega n}\nonumber\\
 & =\left(1-\mathrm{e}^{-\beta\omega}\right)^{2}\left[\frac{1}{\omega^{2}}\frac{\mathrm{d}^{2}}{\mathrm{d}(2\beta)^{2}}-\frac{2}{\omega}\frac{\mathrm{d}}{\mathrm{d}(2\beta)}+1\right]Z(2\beta)\nonumber\\
 & =\frac{\mathrm{e}^{2\beta\omega}\left(1+\mathrm{e}^{2\beta\omega}\right)}{\left(\mathrm{e}^{\beta\omega}-1\right)\left(\mathrm{e}^{\beta\omega}+1\right)^{3}},
\end{align}

\begin{align}
\left\Vert \left[a^{\dagger},\rho_{\mathrm{B}}\right]\right\Vert _{\mathrm{HS}}^{2} & =\left(1-\mathrm{e}^{-\beta\omega}\right)^{2}\sum_{n}\left\Vert f\sqrt{n+1}\mathrm{e}^{-\beta\omega n}\left(1-\mathrm{e}^{-\beta\omega}\right)\left|n+1\right\rangle \right\Vert ^{2}\nonumber\\
 & =\left(1-\mathrm{e}^{-\beta\omega}\right)^{4}\sum_{n}\left(n+1\right)\mathrm{e}^{-2\beta\omega n}\nonumber\\
 & =\left(1-\mathrm{e}^{-\beta\omega}\right)^{4}\left[-\frac{1}{\omega}\frac{\mathrm{d}}{\mathrm{d}(2\beta)}Z(2\beta)+Z(2\beta)\right]\nonumber\\
 & =\left(1-\mathrm{e}^{-\beta\omega}\right)^{4}\left[-\frac{1}{\omega}\frac{\mathrm{d}}{\mathrm{d}(2\beta)}\left(1-\mathrm{e}^{-2\beta\omega}\right)^{-1}+\left(1-\mathrm{e}^{-2\beta\omega}\right)^{-1}\right]\nonumber\\
 & =\tanh\left(\frac{\beta\omega}{2}\right),
\end{align}

\begin{align}
\left\Vert \left[a(f),\mathrm{e}^{-\beta H_{\mathrm{B}}}\right]\right\Vert _{\mathrm{HS}}^{2} & =\left\Vert -\left[a^{\dagger}(f),\mathrm{e}^{-\beta H_{\mathrm{B}}}\right]\right\Vert _{\mathrm{HS}}^{2}\nonumber\\
 & =\tanh\left(\frac{\beta\omega}{2}\right),
\end{align}

\begin{align}
\left\Vert a^{\dagger}\left[a^{\dagger},\rho_{\mathrm{B}}\right]\right\Vert _{\mathrm{HS}}^{2} & =\left(1-\mathrm{e}^{-\beta\omega}\right)^{2}\sum_{n}\left\Vert f^{2}\sqrt{n+1}\sqrt{n+2}\mathrm{e}^{-\beta\omega n}\left(1-\mathrm{e}^{-\beta\omega}\right)\left|n+2\right\rangle \right\Vert ^{2}\nonumber\\
 & =\left(1-\mathrm{e}^{-\beta\omega}\right)^{4}\sum_{n}\left(n^{2}+3n+2\right)\mathrm{e}^{-2\beta\omega n}\nonumber\\
 & =\left(1-\mathrm{e}^{-\beta\omega}\right)^{4}\left[\frac{1}{\omega^{2}}\frac{\mathrm{d}^{2}}{\mathrm{d}(2\beta)^{2}}Z(2\beta)-\frac{3}{\omega}\frac{\mathrm{d}}{\mathrm{d}(2\beta)}Z(2\beta)+2Z(2\beta)\right]\nonumber\\
 & =\left(1-\mathrm{e}^{-\beta\omega}\right)^{4}\left[\frac{1}{\omega^{2}}\frac{\mathrm{d}^{2}}{\mathrm{d}(2\beta)^{2}}\left(1-\mathrm{e}^{-2\beta\omega}\right)^{-1}-\frac{3}{\omega}\frac{\mathrm{d}}{\mathrm{d}(2\beta)}\left(1-\mathrm{e}^{-2\beta\omega}\right)^{-1}\right.\nonumber\\
 &\hspace{3cm}\left.+2\left(1-\mathrm{e}^{-2\beta\omega}\right)^{-1}\right]\nonumber\\
 & =\frac{2\mathrm{e}^{2\beta\omega}\left(\mathrm{e}^{\beta\omega}-1\right)}{\left(\mathrm{e}^{\beta\omega}+1\right)^{3}},
\end{align}

\begin{align}
\left\Vert a\left[a,\mathrm{e}^{-\beta H_{\mathrm{B}}}\right]\right\Vert _{\mathrm{HS}}^{2} & =\left(1-\mathrm{e}^{-\beta\omega}\right)^{2}\sum_{n}\left\Vert f^{2}\sqrt{n+1}\sqrt{n+2}\mathrm{e}^{-\beta n\omega}\left(\mathrm{e}^{-\beta\omega}-\mathrm{e}^{-2\beta\omega}\right)\left|n+2\right\rangle \right\Vert ^{2}\nonumber\\
 & =\left(1-\mathrm{e}^{-\beta\omega}\right)^{4}\mathrm{e}^{-2\beta\omega}\sum_{n}\left(n^{2}+3n+2\right)\mathrm{e}^{-2\beta\omega n}\nonumber\\
 & =\mathrm{e}^{-2\beta\omega}\left\Vert a^{\dagger}\left[a^{\dagger},\rho_{\mathrm{B}}\right]\right\Vert _{\mathrm{HS}}^{2}\nonumber\\
 & =\frac{2\left(\mathrm{e}^{\beta\omega}-1\right)}{\left(\mathrm{e}^{\beta\omega}+1\right)^{3}},
\end{align}

\begin{align}
\left\Vert a\left[a^{\dagger},\rho_{\mathrm{B}}\right]\right\Vert _{\mathrm{HS}}^{2} & =\left(1-\mathrm{e}^{-\beta\omega}\right)^{2}\sum_{n}\left\Vert \left|f\right|^{2}\left(n+1\right)\mathrm{e}^{-\beta\omega n}\left(1-\mathrm{e}^{-\beta\omega}\right)\left|n\right\rangle \right\Vert ^{2}\nonumber\\
 & =\left(1-\mathrm{e}^{-\beta\omega}\right)^{4}\sum_{n}\left(n^{2}+2n+1\right)\mathrm{e}^{-2\beta\omega n}\nonumber\\
 & =\left(1-\mathrm{e}^{-\beta\omega}\right)^{4}\left[\frac{1}{\omega^{2}}\frac{\mathrm{d}^{2}}{\mathrm{d}(2\beta)^{2}}Z(2\beta)-\frac{2}{\omega}\frac{\mathrm{d}}{\mathrm{d}(2\beta)}Z(2\beta)+Z(2\beta)\right]\nonumber\\
 & =\left(1-\mathrm{e}^{-\beta\omega}\right)^{4}\left[\frac{1}{\omega^{2}}\frac{\mathrm{d}^{2}}{\mathrm{d}(2\beta)^{2}}\left(1-\mathrm{e}^{-2\beta\omega}\right)^{-1}-\frac{2}{\omega}\frac{\mathrm{d}}{\mathrm{d}(2\beta)}\left(1-\mathrm{e}^{-2\beta\omega}\right)^{-1}\right.\nonumber\\
 &\hspace{3cm}\left.+\left(1-\mathrm{e}^{-2\beta\omega}\right)^{-1}\right]\nonumber\\
 & =4\sinh^{4}\left(\frac{\beta\omega}{2}\right)\coth(\beta\omega)\text{csch}^{2}(\beta\omega),
\end{align}

\begin{align}
\left\Vert \left[a^{\dagger},a\rho_{\mathrm{B}}\right]\right\Vert _{\mathrm{HS}}^{2} & =\left(1-\mathrm{e}^{-\beta\omega}\right)^{2}\sum_{n}\bigg\Vert\left|f\right|^{2}n\mathrm{e}^{-\beta\omega n}\left|n\right\rangle -\left|f\right|^{2}\left(n+1\right)\mathrm{e}^{-\beta\omega n}\mathrm{e}^{-\beta\omega}\left|n\right\rangle \Bigg\Vert^{2}\nonumber\\
 & =\left(1-\mathrm{e}^{-\beta\omega}\right)^{2}\sum_{n}\bigg\Vert\left(n\left(1-\mathrm{e}^{-\beta\omega}\right)-\mathrm{e}^{-\beta\omega}\right)\mathrm{e}^{-\beta\omega n}\left|n\right\rangle \Bigg\Vert^{2}\nonumber\\
 & =\left(1-\mathrm{e}^{-\beta\omega}\right)^{4}\frac{1}{\omega^{2}}\frac{\mathrm{d}^{2}}{\mathrm{d}(2\beta)^{2}}Z(2\beta)+2\left(1-\mathrm{e}^{-\beta\omega}\right)^{3}\mathrm{e}^{-\beta\omega}\frac{1}{\omega}\frac{\mathrm{d}}{\mathrm{d}(2\beta)}Z(2\beta)\nonumber\\
 &\quad+\left(1-\mathrm{e}^{-\beta\omega}\right)^{2}\mathrm{e}^{-2\beta\omega}Z(2\beta)\nonumber\\
 & =\left[\left(1-\mathrm{e}^{-\beta\omega}\right)^{4}\frac{1}{\omega^{2}}\frac{\mathrm{d}^{2}}{\mathrm{d}(2\beta)^{2}}+2\left(1-\mathrm{e}^{-\beta\omega}\right)^{3}\mathrm{e}^{-\beta\omega}\frac{1}{\omega}\frac{\mathrm{d}}{\mathrm{d}(2\beta)}\right.\nonumber\\
 &\qquad\left.+\left(1-\mathrm{e}^{-\beta\omega}\right)^{2}\mathrm{e}^{-2\beta\omega}\right]\left(1-\mathrm{e}^{-2\beta\omega}\right)^{-1}\nonumber\\
 & =\frac{2\left(\mathrm{e}^{\beta\omega}-1\right)}{\left(\mathrm{e}^{\beta\omega}+1\right)^{3}},
\end{align}

\begin{align}
\left\Vert \left[a^{\dagger},\left[a^{\dagger},\rho_{\mathrm{B}}\right]\right]\right\Vert _{\mathrm{HS}}^{2} & =\left(1-\mathrm{e}^{-\beta\omega}\right)^{2}\sum_{n}\Bigg\Vert f^{2}\sqrt{n+1}\sqrt{n+2}\mathrm{e}^{-\beta\omega n}\nonumber\\
&\hspace{3.3cm}\times\left(1-2\mathrm{e}^{-\beta\omega}+\mathrm{e}^{-\beta\omega}\mathrm{e}^{-\beta\omega}\right)\left|n+2\right\rangle \Bigg\Vert^{2}\nonumber\\
 & =\left(1-\mathrm{e}^{-\beta\omega}\right)^{6}\sum_{n}\left(n^{2}+3n+2\right)\mathrm{e}^{-2\beta\omega n}\nonumber\\
 & =\left(1-\mathrm{e}^{-\beta\omega}\right)^{2}\left\Vert a^{\dagger}\left[a^{\dagger},\rho_{\mathrm{B}}\right]\right\Vert _{\mathrm{HS}}^{2}\nonumber\\
 & =2\tanh^{3}\left(\frac{\beta\omega}{2}\right),
\end{align}

\begin{align}
\left\Vert \left[a,\left[a,\rho_{\mathrm{B}}\right]\right]\right\Vert _{\mathrm{HS}}^{2} &= \left\Vert \left[a^{\dagger},\left[a^{\dagger},\rho_{\mathrm{B}}\right]\right]\right\Vert _{\mathrm{HS}}^{2}\nonumber\\
 & =2\tanh^{3}\left(\frac{\beta\omega}{2}\right),
\end{align}

\begin{align}
\left\Vert \left[a^{\dagger}a,a\rho_{\mathrm{B}}\right]\right\Vert _{\mathrm{HS}}^{2} & =\left\Vert \left[\frac{1}{\omega}H_{\mathrm{B}},a\left(1-\mathrm{e}^{-\beta\omega}\right)\mathrm{e}^{-\beta H_{\mathrm{B}}}\right]\right\Vert _{\mathrm{HS}}^{2}\nonumber\\
 & =-\frac{1}{\omega^{2}}\left(1-\mathrm{e}^{-\beta\omega}\right)^{2}\frac{\left\Vert \omega f\right\Vert ^{2}}{m}\frac{\mathrm{d}}{\mathrm{d}(2\beta)}Z(2\beta)\nonumber\\
 & =-\left(1-\mathrm{e}^{-\beta\omega}\right)^{2}\frac{1}{\omega}\frac{\mathrm{d}}{\mathrm{d}(2\beta)}Z(2\beta)\nonumber\\
 & =\frac{1}{\left(\mathrm{e}^{\beta\omega}+1\right)^{2}},
\end{align}

\begin{align}
\left\Vert \left[a^{\dagger}a,a^{\dagger}\rho_{\mathrm{B}}\right]\right\Vert _{\mathrm{HS}}^{2} & =\left\Vert \left[\frac{1}{\omega}H_{\mathrm{B}},a^{\dagger}(f)\left(1-\mathrm{e}^{-\beta\omega}\right)\mathrm{e}^{-\beta H_{\mathrm{B}}}\right]\right\Vert _{\mathrm{HS}}^{2}\nonumber\\
 & =-\frac{1}{\omega^{2}}\left(1-\mathrm{e}^{-\beta\omega}\right)^{2}\frac{\left\Vert \omega f\right\Vert ^{2}}{m}\frac{\mathrm{d}}{\mathrm{d}(2\beta)}Z(2\beta)+\frac{1}{\omega^{2}}\left(1-\mathrm{e}^{-\beta\omega}\right)^{2}\left\Vert \omega f\right\Vert ^{2}Z(2\beta)\nonumber\\
 & =-\left(1-\mathrm{e}^{-\beta\omega}\right)^{2}\frac{1}{\omega}\frac{\mathrm{d}}{\mathrm{d}(2\beta)}\left(1-\mathrm{e}^{-2\beta\omega}\right)^{-1}+\left(1-\mathrm{e}^{-\beta\omega}\right)^{2}\left(1-\mathrm{e}^{-2\beta\omega}\right)^{-1}\nonumber\\
 & =\frac{\mathrm{e}^{2\beta\omega}}{\left(\mathrm{e}^{\beta\omega}+1\right)^{2}},
\end{align}

\begin{align}
\left\Vert \left[a^{\dagger}a,\left[a^{\dagger},\rho_{\mathrm{B}}\right]\right]\right\Vert _{\mathrm{HS}}^{2} & =\left\Vert \left[\frac{1}{\omega}H_{\mathrm{B}},\left[a^{\dagger}(f),\left(1-\mathrm{e}^{-\beta\omega}\right)\mathrm{e}^{-\beta H_{\mathrm{B}}}\right]\right]\right\Vert _{\mathrm{HS}}^{2}\nonumber\\
 & =\frac{1}{\omega^{2}}\left(1-\mathrm{e}^{-\beta\omega}\right)^{2}\sum_{n}\left\Vert \omega f\sqrt{n+1}\mathrm{e}^{-\beta n\omega}\left[1-\mathrm{e}^{-\beta\omega}\right]\left|n+1\right\rangle \right\Vert \nonumber\\
 & =-\left(1-\mathrm{e}^{-\beta\omega}\right)^{4}\frac{1}{\omega}\frac{\mathrm{d}}{\mathrm{d}(2\beta)}Z(2\beta)+\left(1-\mathrm{e}^{-\beta\omega}\right)^{4}Z(2\beta)\nonumber\\
 & =-\left(1-\mathrm{e}^{-\beta\omega}\right)^{4}\frac{1}{\omega}\frac{\mathrm{d}}{\mathrm{d}(2\beta)}\left(1-\mathrm{e}^{-2\beta\omega}\right)^{-1}+\left(1-\mathrm{e}^{-\beta\omega}\right)^{4}\left(1-\mathrm{e}^{-2\beta\omega}\right)^{-1}\nonumber\\
 & =\tanh^{2}\left(\frac{\beta\omega}{2}\right),
\end{align}

\begin{align}
\left\Vert \left[a^{\dagger}a,\left[a,\rho_{\mathrm{B}}\right]\right]\right\Vert _{\mathrm{HS}}^{2} & =\left\Vert \left[a^{\dagger}a,\left[a^{\dagger},\rho_{\mathrm{B}}\right]\right]\right\Vert _{\mathrm{HS}}^{2}\nonumber\\
 & =\tanh^{2}\left(\frac{\beta\omega}{2}\right).
\end{align}

The terms involving an annihilation operator $a$ on the right must be computed individually,
as in the general case (cf. Lemma~\ref{lem:fundamental_norms_cont}) we computed them using a triangle inequality in the arbitrary mode
case (which is not tight).
This gives the following:
\begin{align}
\left\Vert a^{\dagger}\left[a,\rho_{\mathrm{B}}\right]\right\Vert _{\mathrm{HS}}^{2} & =\sum_{n}\left\Vert a^{\dagger}\left[a,\rho_{\mathrm{B}}\right]\left|n\right\rangle \right\Vert ^{2}\nonumber\\
 & =\left(1-\mathrm{e}^{-\beta\omega}\right)^{2}\sum_{n}\left\Vert n\mathrm{e}^{-\beta\omega n}\left(1-\mathrm{e}^{+\beta\omega}\right)\left|n\right\rangle \right\Vert ^{2}\nonumber\\
 & =\left(1-\mathrm{e}^{-\beta\omega}\right)^{2}\left(1-\mathrm{e}^{+\beta\omega}\right)^{2}\left[\frac{1}{\omega^{2}}\frac{\mathrm{d}^{2}}{\mathrm{d}(2\beta)^{2}}Z(2\beta)\right]\nonumber\\
 & =16\sinh^{4}\left(\frac{\beta\omega}{2}\right)\left[\frac{1}{\omega^{2}}\frac{\mathrm{d}^{2}}{\mathrm{d}(2\beta)^{2}}\left(1-\mathrm{e}^{-2\beta\omega}\right)^{-1}\right]\nonumber\\
 & =4\sinh^{4}\left(\frac{\beta\omega}{2}\right)\coth(\beta\omega)\text{csch}^{2}(\beta\omega),
\end{align}

\begin{align}
\left\Vert \left[a,a^{\dagger}\rho_{\mathrm{B}}\right]\right\Vert _{\mathrm{HS}}^{2} & =\sum_{n}\left\Vert \left[a,a^{\dagger}\rho_{\mathrm{B}}\right]\left|n\right\rangle \right\Vert ^{2}\nonumber\\
 & =\left(1-\mathrm{e}^{-\beta\omega}\right)^{2}\sum_{n}\left\Vert \left(n+1\right)\mathrm{e}^{-\beta\omega n}\left|n\right\rangle -n\mathrm{e}^{-\beta\omega(n-1)}\left|n\right\rangle \right\Vert ^{2}\nonumber\\
 & =\left(1-\mathrm{e}^{-\beta\omega}\right)^{2}\sum_{n}\left\Vert \left[n\left(1-\mathrm{e}^{+\beta\omega}\right)\mathrm{e}^{-\beta\omega n}+\mathrm{e}^{-\beta\omega n}\right]\left|n\right\rangle \right\Vert ^{2}\nonumber\\
 & =\left(1-\mathrm{e}^{-\beta\omega}\right)^{2}\left[\left(1-\mathrm{e}^{+\beta\omega}\right)^{2}\frac{1}{\omega^{2}}\frac{\mathrm{d}^{2}}{\mathrm{d}(2\beta)^{2}}Z(2\beta)\right.\nonumber\\
 &\hspace{3cm}\left.-2\left(1-\mathrm{e}^{+\beta\omega}\right)\frac{1}{\omega}\frac{\mathrm{d}}{\mathrm{d}(2\beta)}Z(2\beta)+Z(2\beta)\right]\nonumber\\
 & =\left(1-\mathrm{e}^{-\beta\omega}\right)^{2}\left[\left(1-\mathrm{e}^{+\beta\omega}\right)^{2}\frac{1}{\omega^{2}}\frac{\mathrm{d}^{2}}{\mathrm{d}(2\beta)^{2}}-2\left(1-\mathrm{e}^{+\beta\omega}\right)\frac{1}{\omega}\frac{\mathrm{d}}{\mathrm{d}(2\beta)}+1\right]\nonumber\\
 &\quad\times\left(1-\mathrm{e}^{-2\beta\omega}\right)^{-1}\nonumber\\
 & =\frac{2\mathrm{e}^{2\beta\omega}\left(\mathrm{e}^{\beta\omega}-1\right)}{\left(\mathrm{e}^{\beta\omega}+1\right)^{3}},
\end{align}

\begin{align}
&\left\Vert \left[a,\left[a^{\dagger},\rho_{\mathrm{B}}\right]\right]\right\Vert _{\mathrm{HS}}^{2}  =\sum_{n}\left\Vert a\left[a^{\dagger},\rho_{\mathrm{B}}\right]\left|n\right\rangle -\left[a^{\dagger},\rho_{\mathrm{B}}\right]a\left|n\right\rangle \right\Vert ^{2}\nonumber\\
 &\quad =\sum_{n}\left\Vert \left(1-\mathrm{e}^{-\beta\omega}\right)\left(n+1\right)\mathrm{e}^{-\beta\omega n}\left(1-\mathrm{e}^{-\beta\omega}\right)\left|n\right\rangle -\sqrt{n}\left(a^{\dagger}\rho_{\mathrm{B}}-\rho_{\mathrm{B}}a^{\dagger}\right)\left|n-1\right\rangle \right\Vert ^{2}\nonumber\\
 &\quad =\left(1-\mathrm{e}^{-\beta\omega}\right)^{2}\sum_{n}\left\Vert \left(n+1\right)\mathrm{e}^{-\beta\omega n}\left(1-\mathrm{e}^{-\beta\omega}\right)\left|n\right\rangle +n\mathrm{e}^{-\beta\omega n}\left(1-\mathrm{e}^{+\beta\omega}\right)\left|n\right\rangle \right\Vert ^{2}\nonumber\\
 &\quad =\left(1-\mathrm{e}^{-\beta\omega}\right)^{2}\sum_{n}\left\Vert \left[n\left(2-\mathrm{e}^{-\beta\omega}-\mathrm{e}^{+\beta\omega}\right)\mathrm{e}^{-\beta n\omega}+e^{-\beta n\omega}\left(1-\mathrm{e}^{-\beta\omega}\right)\right]\left|n\right\rangle \right\Vert ^{2}\nonumber\\
 &\quad =\left(1-\mathrm{e}^{-\beta\omega}\right)^{2}\left[\frac{4}{\omega^{2}}\left(1-\cosh\left(\beta\omega\right)\right)^{2}\frac{\mathrm{d}^{2}}{\mathrm{d}(2\beta)^{2}}Z(2\beta)\right.\nonumber\\
 &\quad\qquad\qquad\qquad\quad\left.-\frac{4}{\omega}\left(1-\cosh\left(\beta\omega\right)\right)\left(1-\mathrm{e}^{-\beta\omega}\right)\frac{\mathrm{d}}{\mathrm{d}(2\beta)}Z(2\beta)+\left(1-\mathrm{e}^{-\beta\omega}\right)^{2}Z(2\beta)\right]\nonumber\\
 &\quad =\left(1-\mathrm{e}^{-\beta\omega}\right)^{2}\left[\frac{1}{\omega^{2}}\left(1-\cosh\left(\beta\omega\right)\right)^{2}\frac{\mathrm{d}^{2}}{\mathrm{d}\beta{}^{2}}+\frac{1}{\omega}e^{-2\beta\omega}\left(e^{\beta\omega}-1\right)^{3}\frac{\mathrm{d}}{\mathrm{d}\beta}+\left(1-\mathrm{e}^{-\beta\omega}\right)^{2}\right]\nonumber\\
 &\quad\qquad\times\left(1-\mathrm{e}^{-2\beta\omega}\right)^{-1}\nonumber\\
 &\quad =2\tanh^{3}\left(\frac{\beta\omega}{2}\right),
\end{align}

\begin{align}
\left\Vert \left[a^{\dagger},\left[a,\rho_{\mathrm{B}}\right]\right]\right\Vert _{\mathrm{HS}}^{2} & =\left\Vert \left[a,\left[a^{\dagger},\rho_{\mathrm{B}}\right]\right]\right\Vert _{\mathrm{HS}}^{2}\nonumber\\
 & =2\tanh^{3}\left(\frac{\beta\omega}{2}\right),
\end{align}

\begin{align}
\left\Vert \left[a^{\dagger},a^{\dagger}\rho_{\mathrm{B}}\right]\right\Vert _{\mathrm{HS}}^2 & =\left\Vert a^{\dagger}\left[a^{\dagger},\rho_{\mathrm{B}}\right]\right\Vert _{\mathrm{HS}}^2\nonumber\\
 & =\frac{2\mathrm{e}^{2\beta\omega}\left(\mathrm{e}^{\beta\omega}-1\right)}{\left(\mathrm{e}^{\beta\omega}+1\right)^{3}}
\end{align}

\begin{align}
\left\Vert \left[a,a\rho_{\mathrm{B}}\right]\right\Vert _{\mathrm{HS}}^2 & =\left\Vert a\left[a,\rho_{\mathrm{B}}\right]\right\Vert _{\mathrm{HS}}^2\nonumber\\
 & =\frac{2\left(\mathrm{e}^{\beta\omega}-1\right)}{\left(\mathrm{e}^{\beta\omega}+1\right)^{3}}.
\end{align}

\subsection{Examples}\label{subsec:Appendix_examples}

The discussion in the previous subsections considers the most general case for a Hamiltonian of the form of Eq.~\eqref{eq:spin--boson}\@.
Let us apply this to specific models, particularly the examples considered in the main text.
For this, we can plug in the corresponding system operators into the bound presented in Theorem~\ref{thm:main_result_bound}\@.
In the single-mode case, the bounds can be tightened using the norms collected in Section~\ref{subsec:single_mode}\@.

\medskip\noindent\textbf{Pure dephasing model}.

The first example is the dephasing model presented in Section~\ref{sec:motivating_example}\@.
The Hamiltonian is given in Eq.~\eqref{eq:Hamiltonian_dephasing} and reads
\begin{equation}
	H=\frac{\omega_\rms}{2} \sigma_z + \omega_\rmb a^\dagger a + f\sigma_z(a + a^\dagger).
\end{equation}
As the decoupling set, we choose $\mathscr{V}=\{I,\sigma_x\}$ and $\rho_\rms=\ket{+}\bra{+}$.
Then, our bound becomes
\begin{align}
\xi_N(t;\rho)\leq{}&
		\frac{t^2}{4N} | f|  \left[\sqrt{2} \sqrt{\frac{\omega_\rmb ^2}{\left(\rme^{\beta  \omega_\rmb
   }+1\right)^2}}+\sqrt{\frac{\omega_\rmb ^2 \rme^{\beta  \omega_\rmb }}{\cosh (\beta  \omega_\rmb )+1}}+2
   \sqrt{\omega_\rmb ^2 \tanh ^2\left(\frac{\beta  \omega_\rmb }{2}\right)}\right.\nonumber\\
   &\qquad\left.+2 \sqrt{2}
   \omega_\rms \left(\sqrt{\frac{1}{\left(\rme^{\beta  \omega_\rmb
   }+1\right)^2}}+\sqrt{\frac{\rme^{2 \beta  \omega_\rmb }}{\left(\rme^{\beta  \omega_\rmb
   }+1\right)^2}}+\sqrt{\tanh ^2\left(\frac{\beta  \omega_\rmb }{2}\right)}\right)\right]\nonumber\\
   &+\frac{t^2}{4N} |
   f| ^2 \left[6 \sqrt{\frac{\rme^{\beta  \omega_\rmb }-1}{\left(\rme^{\beta  \omega_\rmb
   }+1\right)^3}}+4 \sqrt{2} \sqrt{\tanh ^3\left(\frac{\beta  \omega_\rmb }{2}\right)}\right.\nonumber\\
   &\qquad+2
   \sqrt{\frac{\cosh (\beta  \omega_\rmb ) (\coth (\beta  \omega_\rmb )-1)}{\cosh (\beta  \omega_\rmb
   		)+1}}+2 \sqrt{\frac{\cosh (\beta  \omega_\rmb ) (\coth (\beta  \omega_\rmb )+1)}{\cosh (\beta 
   \omega_\rmb )+1}}\nonumber\\
&\qquad+2 \sqrt{\frac{\rme^{-2 \beta  \omega_\rmb } \text{csch}(\beta  \omega_\rmb )}{\cosh
   (\beta  \omega_\rmb )+1}}+2 \sqrt{\frac{\rme^{2 \beta  \omega_\rmb } \text{csch}(\beta  \omega_\rmb
   )}{\cosh (\beta  \omega_\rmb )+1}}\nonumber\\
   &\qquad\left.+\sqrt{2} \sqrt{\left(\sinh \left(\frac{3 \beta  \omega_\rmb
   }{2}\right)-\sinh \left(\frac{\beta  \omega_\rmb }{2}\right)\right)
   \text{sech}^3\left(\frac{\beta  \omega_\rmb }{2}\right)}\right.\nonumber\\
   &\qquad\left.+3 \sqrt{\rme^{\beta  \omega_\rmb } \tanh
   \left(\frac{\beta  \omega_\rmb }{2}\right) \text{sech}^2\left(\frac{\beta  \omega_\rmb
   }{2}\right)}\right]\nonumber\\
   &+\frac{t^2}{4\sqrt{2}N} \omega_\rms^2 \sqrt{\tanh \left(\frac{\beta  \omega_\rmb
   }{2}\right)}.\label{eq:tight_bound_dephasing}
\end{align}
Using Corollary~\ref{cor:loose_bound}, we can obtain a looser, but simpler bound, which is given in Eq.~\eqref{eq:DD_error_dephasing} of the main text.

\medskip\noindent\textbf{Jaynes--Cummings model}

The second example is the Jaynes--Cummings model from Section~\ref{sec:Jaynes--Cummings}:
\begin{equation}
	H=\frac{\omega_\rms}{2}\sigma_z + \omega_\rmb a^\dagger a+f\big(\sigma^+ a+\sigma^- a^\dagger\big).
\end{equation}
As the decoupling set, we choose the full Pauli group $\mathscr{V}=\{I,\sigma_x,\sigma_y,\sigma_z\}$.
For $\rho_\rms=\ket{0}\bra{0}$, our bound reads
\begin{align}
	\xi_{N}(t;\ket{0}\bra{0}\otimes\rho_\rmb)\leq{} & \frac{t^2}{4N}\vert f\vert\Bigg[2\Bigg\{\vert\omega_{\rms}\vert\Bigg(\sqrt{\frac{1}{\left(\rme^{\beta\omega_{\rmb}}+1\right)^{2}}}+\sqrt{\frac{\rme^{2\beta\omega_{\rmb}}}{\left(\rme^{\beta\omega_{\rmb}}+1\right){}^{2}}}+\sqrt{\tanh^{2}\left(\frac{\beta\omega_{\rmb}}{2}\right)}\Bigg)\nonumber\\
 & \qquad+\sqrt{\frac{\omega_{\rmb}^{2}}{\left(\rme^{\beta\omega_{\rmb}}+1\right){}^{2}}}+\sqrt{\frac{\omega_{\rmb}^{2}\rme^{2\beta\omega_{\rmb}}}{\left(\rme^{\beta\omega_{\rmb}}+1\right){}^{2}}}+\sqrt{\omega_{\rmb}^{2}\tanh^{2}\left(\frac{\beta\omega_{\rmb}}{2}\right)}\Bigg\}\nonumber\\
 & +\vert f\vert\Bigg(2\sqrt{\frac{1}{\left(\rme^{\beta\omega_{\rmb}}-1\right)\left(\rme^{\beta\omega_{\rmb}}+1\right){}^{3}}}+2\sqrt{2}\sqrt{\frac{\rme^{\beta\omega_{\rmb}}-1}{\left(\rme^{\beta\omega_{\rmb}}+1\right){}^{3}}}\nonumber\\
 &\quad+\sqrt{\frac{\rme^{\beta\omega_{\rmb}}-1}{\left(\rme^{\beta\omega_{\rmb}}+1\right){}^{3}}}
+\sqrt{\frac{\rme^{2\beta\omega_{\rmb}}\left(\rme^{\beta\omega_{\rmb}}-1\right)}{\left(\rme^{\beta\omega_{\rmb}}+1\right){}^{3}}}+2\sqrt{2}\sqrt{\tanh^{3}\left(\frac{\beta\omega_{\rmb}}{2}\right)}\nonumber\\
&\quad+\sqrt{\tanh^{3}\left(\frac{\beta\omega_{\rmb}}{2}\right)+\tanh\left(\frac{\beta\omega_{\rmb}}{2}\right)}+\sqrt{\frac{\cosh(\beta\omega_{\rmb})(\coth(\beta\omega_{\rmb})-1)}{\cosh(\beta\omega_{\rmb})+1}}\nonumber\\
 & \quad+\sqrt{\frac{\rme^{2\beta\omega_{\rmb}}\text{csch}(\beta\omega_{\rmb})}{\cosh(\beta\omega_{\rmb})+1}}+4\sqrt{\rme^{\beta\omega_{\rmb}}\sinh^{4}\left(\frac{\beta\omega_{\rmb}}{2}\right)\text{csch}^{3}(\beta\omega_{\rmb})}\nonumber\\
 & \quad+\sqrt{\frac{\left(\rme^{2\beta\omega_{\rmb}}+1\right)\text{csch}(\beta\omega_{\rmb})}{\sinh(2\beta\omega_{\rmb})\text{csch}(\beta\omega_{\rmb})+2}}\Big)\Bigg)\Bigg],\label{eq:refined_JC_0}
\end{align}
and for $\rho_\rms=\ket{+}\bra{+}$, we obtain
\begingroup
\allowdisplaybreaks
\begin{align}
	\xi_{N}(t;\ket{+}\bra{+}\otimes\rho_\rmb)\leq{}& \frac{\sqrt{2}t^2}{4N}\omega_{\rms}^{2}\sqrt{\tanh\left(\frac{\beta\omega_{\rmb}}{2}\right)}+2\vert f\vert\Bigg[4\sqrt{2}\omega_{\rms}\sqrt{\tanh^{2}\left(\frac{\beta\omega_{\rmb}}{2}\right)}\nonumber\\
 & \quad+4\sqrt{2}\sqrt{\omega_{\rmb}^{2}\tanh^{2}\left(\frac{\beta\omega_{\rmb}}{2}\right)}+8\vert f\vert\sqrt{\tanh^{3}\left(\frac{\beta\omega_{\rmb}}{2}\right)}\nonumber\\
 &\quad+\left(1+\sqrt{3}\right)\vert f\vert\sqrt{\left(\sinh\left(\frac{3\beta\omega_{\rmb}}{2}\right)-\sinh\left(\frac{\beta\omega_{\rmb}}{2}\right)\right)\text{sech}^{3}\left(\frac{\beta\omega_{\rmb}}{2}\right)}\Bigg]\nonumber\\
 & +\frac{t^2}{16N}\vert f\vert\Bigg[4\sqrt{2}\Bigg\{\omega_{\rms}\Bigg(\sqrt{\frac{1}{\left(\rme^{\beta\omega_{\rmb}}+1\right){}^{2}}}+\sqrt{\frac{\rme^{2\beta\omega_{\rmb}}}{\left(\rme^{\beta\omega_{\rmb}}+1\right){}^{2}}}\Bigg)\nonumber\\
 &\qquad+\sqrt{\frac{\omega_{\rmb}^{2}}{\left(e^{\beta\omega_{\rmb}}+1\right)^{2}}}+\sqrt{\frac{\omega_{\rmb}^{2}\rme^{2\beta\omega_{\rmb}}}{\left(\rme^{\beta\omega_{\rmb}}+1\right){}^{2}}}\Bigg\}\nonumber\\
 & \quad+\vert f\vert\Bigg(8\sqrt{2}\sqrt{\frac{1}{\left(\rme^{\beta\omega_{\rmb}}-1\right)\left(\rme^{\beta\omega_{\rmb}}+1\right){}^{3}}}+2\sqrt{6}\sqrt{\frac{\rme^{\beta\omega_{\rmb}}-1}{\left(\rme^{\beta\omega_{\rmb}}+1\right){}^{3}}}\nonumber\\
 & \qquad+10\sqrt{2}\sqrt{\frac{\rme^{\beta\omega_{\rmb}}-1}{\left(\rme^{\beta\omega_{\rmb}}+1\right){}^{3}}}+4\sqrt{2}\sqrt{\frac{\cosh(\beta\omega_{\rmb})(\coth(\beta\omega_{\rmb})-1)}{\cosh(\beta\omega_{\rmb})+1}}\nonumber\\
 & \qquad+4\sqrt{2}\sqrt{\frac{\rme^{2\beta\omega_{\rmb}}\text{csch}(\beta\omega_{\rmb})}{\cosh(\beta\omega_{\rmb})+1}}+4\sqrt{\frac{\left(\rme^{2\beta\omega_{\rmb}}+1\right)\text{csch}(\beta\omega_{\rmb})}{\cosh(\beta\omega_{\rmb})+1}}\nonumber\\
 & \qquad+20\sqrt{\rme^{\beta\omega_{\rmb}}\sinh^{4}\left(\frac{\beta\omega_{\rmb}}{2}\right)\text{csch}^{3}(\beta\omega_{\rmb})}\nonumber\\
 &\qquad+\sqrt{6}\sqrt{\rme^{\beta\omega_{\rmb}}\tanh\left(\frac{\beta\omega_{\rmb}}{2}\right)\text{sech}^{2}\left(\frac{\beta\omega_{\rmb}}{2}\right)}\Bigg)\Bigg]\label{eq:refined_JC_+}
\end{align}
\endgroup
A simpler loose bound independent of the system input state $\rho_\rms$ is obtained via Corollary~\ref{cor:loose_bound} and is given in
Eq.~\eqref{eq:loose_bound_JC} in the main text.

\medskip\noindent\textbf{Quantum Rabi model}

Let us also look at the quantum Rabi model from Section~\ref{sec:Rabi}
\begin{equation}
    H=\frac{\omega_\rms}{2}\sigma_z + \omega_\rmb a^\dagger a+f \sigma_x\big(a+a^\dagger\big).
\end{equation}
Again, the decoupling set will be $\mathscr{V}=\{I,\sigma_x,\sigma_y,\sigma_z\}$ and we consider the initial system input state $\rho_\rms=\ket{0}\bra{0}$.
Then, our bound becomes
\begin{align}
    \xi_N(t;\ket{0}\bra{0}\otimes\rho_\rmb)\leq{}&
    \frac{t^2}{2N} \Bigg[| f|  \Bigg(\sqrt{2} \sqrt{\frac{\omega_\rmb ^2}{\left(\rme^{\beta  \omega_\rmb }+1\right)^2}}+\sqrt{\frac{\omega_\rmb ^2 \rme^{\beta  \omega_\rmb
   }}{\cosh (\beta  \omega_\rmb )+1}}\nonumber\\
   &\qquad+2 \sqrt{\omega_\rmb ^2 \tanh ^2\left(\frac{\beta  \omega_\rmb }{2}\right)}+\omega_\rms
   \left(\sqrt{2} \sqrt{\frac{1}{\left(\rme^{\beta  \omega_\rmb }+1\right)^2}}\right.\nonumber\\
   &\qquad\quad\left.+\sqrt{\frac{\rme^{\beta  \omega_\rmb }}{\cosh (\beta  \omega_\rmb
   )+1}}+2 \sqrt{\tanh ^2\left(\frac{\beta  \omega_\rmb }{2}\right)}\right)\Bigg)\nonumber\\
   &\quad+\vert f \vert \Bigg(6 \sqrt{\frac{\rme^{\beta  \omega_\rmb
   }-1}{\left(\rme^{\beta  \omega_\rmb }+1\right)^3}}+4 \sqrt{2} \sqrt{\tanh ^3\left(\frac{\beta  \omega_\rmb }{2}\right)}\nonumber\\
&\qquad+2 \sqrt{\frac{\cosh
   (\beta  \omega_\rmb ) (\coth (\beta  \omega_\rmb )-1)}{\cosh (\beta  \omega_\rmb )+1}}\nonumber\\
   &\qquad+2 \sqrt{\frac{\rme^{-2 \beta  \omega_\rmb } \text{csch}(\beta 
   \omega_\rmb )}{\cosh (\beta  \omega_\rmb )+1}}+2 \sqrt{\frac{\rme^{2 \beta  \omega_\rmb } \text{csch}(\beta  \omega_\rmb )}{\cosh (\beta  \omega_\rmb
   )+1}}\nonumber\\
   &\qquad+\sqrt{2} \sqrt{\frac{\left(\rme^{2 \beta  \omega_\rmb }+1\right) \text{csch}(\beta  \omega_\rmb )}{\cosh (\beta  \omega_\rmb
   )+1}}\nonumber\\
&\qquad+\sqrt{2} \sqrt{\left(\sinh \left(\frac{3 \beta  \omega_\rmb }{2}\right)-\sinh \left(\frac{\beta  \omega_\rmb }{2}\right)\right)
   \text{sech}^3\left(\frac{\beta  \omega_\rmb }{2}\right)}\nonumber\\
   &\qquad+3 \sqrt{\rme^{\beta  \omega_\rmb } \tanh \left(\frac{\beta  \omega_\rmb }{2}\right)
   \text{sech}^2\left(\frac{\beta  \omega_\rmb }{2}\right)}\Bigg)\Bigg].\label{eq:Rabi_refined}
\end{align}
Furthermore, the loose bound independent from the system input state $\rho_\rms$ due to Corollary~\ref{cor:loose_bound} is given in Eq.~\eqref{eq:loose_bound_QR} of the main text,
which is exactly a factor of $5/2$ larger than the loose bound for the dephasing model~\eqref{eq:DD_error_dephasing}\@.

\medskip\noindent\textbf{Qubit coupled to infinitely many modes}

Lastly, we show the derivations for the qubit coupled to infinitely many modes from Section~\ref{sec:boson_gas}\@.
The Hamiltonian is given in Eq.~\eqref{eq:photon_gas} and reads
\begin{equation}
	H=\frac{\omega_\rms}{2}\sigma_z+\sum_{k=1}^\infty\omega_ka_k^\dag a_k+\sum_{k=1}^\infty f_k\left(\sigma_+ a_k+\sigma_- a^\dag_k\right).
\end{equation}
To apply our bound, we have to compute the derivatives of the grand canonical partition function
\begin{equation}
Z(\beta)=\mathrm{e}^{-\sum_{i=1}^{\infty}\ln\left[1-\exp\left(-\beta \omega_{i}\right)\right]}.
\end{equation}
It is straightforward to see that its first derivative computes to
\begin{equation}
-\frac{\mathrm{d}}{\mathrm{d}(2\beta)}Z(2\beta)=\sum_{i=1}^{\infty}\frac{\mathrm{e}^{-2\beta \omega_{i}}\omega_{i}}{\left(1-\mathrm{e}^{-2\beta \omega_{i}}\right)\prod_{j=1}^{\infty}\left(1-\mathrm{e}^{-2\beta \omega_{j}}\right)}.
\end{equation}
To find the second derivative, we again differentiate using the quotient rule. This gives
\begin{align}
\frac{\mathrm{d}^{2}}{\mathrm{d}(2\beta)^{2}}Z(2\beta) ={}&\sum_{i=1}^{\infty}\Bigg(\left(1-\mathrm{e}^{-2\beta \omega_{i}}\right)^{-2}\prod_{j=1}^{\infty}\left(1-\mathrm{e}^{-2\beta \omega_{j}}\right)^{-2}\Bigg)\nonumber\\
&\quad\times\Bigg(\mathrm{e}^{-2\beta \omega_{i}}\omega_{i}^{2}\left(1-\mathrm{e}^{-2\beta \omega_{i}}\right)\prod_{j=1}^{\infty}\left(1-\mathrm{e}^{-2\beta \omega_{j}}\right)\nonumber\\
&\qquad+\mathrm{e}^{-2\beta \omega_{i}}\omega_{i}\Bigg[\omega_{i}\mathrm{e}^{-2\beta \omega_{i}}\prod_{j=1}^{\infty}\left(1-\mathrm{e}^{-2\beta \omega_{j}}\right)\nonumber\\
&\quad\qquad\qquad+\left(1-\mathrm{e}^{-2\beta \omega_{i}}\right)\sum_{j=1}^{\infty}\omega_{j}\mathrm{e}^{-2\beta \omega_{j}}\prod_{k\neq j}\left(1-\mathrm{e}^{-2\beta \omega_{k}}\right)\Bigg]\Bigg)\nonumber\\
 ={}&\sum_{i=1}^{\infty}\frac{\mathrm{e}^{-2\beta \omega_{i}}\omega_{i}^{2}}{\left(1-\mathrm{e}^{-2\beta \omega_{i}}\right)\prod_{j=1}^{\infty}\left(1-\mathrm{e}^{-2\beta \omega_{j}}\right)}+\frac{\mathrm{e}^{-4\beta \omega_{i}}\omega_{i}^{2}}{\left(1-\mathrm{e}^{-2\beta \omega_{i}}\right)^{2}\prod_{j=1}^{\infty}\left(1-\mathrm{e}^{-2\beta \omega_{j}}\right)}\nonumber\\
 &\quad+\frac{\mathrm{e}^{-2\beta \omega_{i}}\omega_{i}\sum_{j=1}^{\infty}\omega_{j}\mathrm{e}^{-2\beta \omega_{j}}\prod_{k\neq j}\left(1-\mathrm{e}^{-2\beta \omega_{k}}\right)}{\left(1-\mathrm{e}^{-2\beta \omega_{i}}\right)\prod_{j=1}^{\infty}\left(1-\mathrm{e}^{-2\beta \omega_{j}}\right)^{2}}\nonumber\\
  ={}&\sum_{i=1}^{\infty}\frac{\omega_{i}^{2}}{\prod_{j=1}^{\infty}\left(1-\mathrm{e}^{-2\beta \omega_{j}}\right)}\left[\frac{1}{2}\left(\coth\left(\beta \omega_{i}\right)-1\right)+\frac{1}{\left(\mathrm{e}^{2\beta \omega_{i}}-1\right)^{2}}\right]\nonumber\\
  &\quad+\frac{\omega_{i}\left(\coth\left(\beta \omega_{i}\right)-1\right)\sum_{j=1}^{\infty}\frac{\omega_{j}}{\mathrm{e}^{2\beta \omega_{j}}-1}}{2\prod_{j=1}^{\infty}\left(1-\mathrm{e}^{-2\beta \omega_{j}}\right)}\nonumber\\
  ={}&\sum_{i=1}^{\infty}\frac{\omega_{i}^{2}\mathrm{csch}^{2}\left(\beta \omega_{i}\right)+\omega_{i}\left(\coth\left(\beta \omega_{i}\right)-1\right)\sum_{j=1}^{\infty}n_{j}\omega_{j}\left(\coth\left(\beta \omega_{j}\right)-1\right)}{4\prod_{j=1}^{\infty}\left(1-\mathrm{e}^{-2\beta \omega_{j}}\right)}.\label{eq:boson_gas_partition_function_second_derivative}
\end{align}

For the explicit choice of parameters in the toy model considered in the main text $(\omega_k)_{k\in\mathbb{N}}=(k)_{k\in\mathbb{N}}$, $(f_k)_{k\in\mathbb{N}}=\big(\frac{f}{k^2}\big)_{k\in\mathbb{N}}$ with $f\in\mathbb{R}$, we rewrite the partition function as
\begin{align}
    Z(\beta)&=\frac{1}{(q_1;q_1)_\infty}\nonumber\\
    &=\exp\left[-\frac{1}{24}\ln\Big(\frac{p_1(1-p_1)^4}{16q_1}\Big)-\frac{1}{2}\ln\Big(\frac{2\mathcal{K}(p_1)}{\pi}\Big)\right]\nonumber\\
    &=\frac{\sqrt{\pi}}{2^{1/3}\sqrt{\mathcal{K}(p_1)}\big((p_1-1)^4p_1q_{-1}\big)^{1/24}}.\label{eq:app_partition_function_boson_gas}
\end{align}
Here, $q_x=\rme^{-x\beta}$ and, for a fixed $\beta>0$ and $x\in\mathbb{R}$, $p_x$ is the unique parameter that corresponds to the nome $q_x$ in an elliptic function. This can be computed numerically, for instance, with the command \verb+EllipticNomeQ+ in \verb+Mathematica+. Furthermore, the function $\mathcal{K}$ computes the complete elliptic integral of the first kind~\cite[Chapter~22.7]{Whittaker1996} (which can easily be computed numerically, e.g.\ with the command \verb+EllipticK+ in \verb+Mathematica+).
This expression can be differentiated with respect to $\beta$.
In particular,
\begin{equation}
    -\frac{\rmd}{\rmd(2\beta)}Z(2\beta)=\frac{\pi^2+4\mathcal{K}(p_2)\big(5\mathcal{K}(p_2)-6\mathcal{E}(p_2)\big)-4\mathcal{K}(p_2)^2p_2}{24\cdot 2^{1/3}\pi^{3/2}\sqrt{\mathcal{K}(p_2)}\big((p_2-1)^4p_2q_{-2}\big)^{1/24}}
\end{equation}
and
\begin{align}
    \frac{\rmd^2}{\rmd(2\beta)^2}Z(2\beta)={}&\frac{1}{576\cdot 2^{1/3}\pi^{7/2}\sqrt{\mathcal{K}(p_2)}\big((p_2-1)^4p_2q_{-2}\big)^{1/24}}\nonumber\\
    &\times\Big[\pi^4-8\mathcal{K}(p_2)^2p_2\big(\pi^2+24\mathcal{E}(p_2)\mathcal{K}(p_2)-76\mathcal{K}(p_2)^2-2\mathcal{K}(p_2)^2p_2\big)\nonumber\\
    &\quad+8\mathcal{K}(p_2)\big(5\pi^2\mathcal{K}(p_2)-72\mathcal{E}(p_2)^2\mathcal{K}(p_2)\nonumber\\
    &\quad-46\mathcal{K}(p_2)^3-6\mathcal{E}(p_2)(\pi^2-20\mathcal{K}(p_2)^2)\big)\Big],\label{eq:app_derivative_partition_function_boson_gas}
\end{align}
where the function $\mathcal{E}$ computes the complete elliptic integral of the second kind~\cite[Chapter~22.73]{Whittaker1996} (which can easily be computed numerically, e.g.\ with the command \verb+EllipticE+ in \verb+Mathematica+).

\let\oldaddcontentsline\addcontentsline
\renewcommand{\addcontentsline}[3]{}
\bibliography{spinbosondd.bib}
\let\addcontentsline\oldaddcontentsline

\end{document}